\newcommand{\mr}[1]{\mathrm{#1}}
\newcommand{\mc}[1]{\mathcal{#1}}
\newcommand{\mb}[1]{\mathbb{#1}}
\newcommand{\mf}[1]{\mathfrak{#1}}
\newcommand{\what}[1]{\widehat{#1}}
\newcommand{\wtil}[1]{\widetilde{#1}}
\newtheorem{theorem}{Theorem}[section]
\newtheorem{lemma}[theorem]{Lemma}
\newtheorem{corollary}[theorem]{Corollary}
\newtheorem{prop}[theorem]{Proposition}
\theoremstyle{definition}
\theoremstyle{remark}
\newtheorem*{remark}{Remark}
\title{Any local Hamiltonian with ferromagnetic quantum many-body scars has a generalized Shiraishi-Mori form}
\author{Keita Omiya}
\affil{Institute of Physics, Ecole Polytechnique F\'ed\'erale de Lausanne (EPFL), CH-1015 Lausanne, Switzerland}
\date{\today}
\begin{document}

\maketitle
\begin{abstract}
    Quantum many-body scars (QMBS) are nonthermal eigenstates embedded in otherwise thermal spectra. A broad class of exact QMBS is realized as fixed-momentum magnon states above a ferromagnetic reference state. Here we prove a structural theorem for this class. Specifically, we show that any local Hamiltonian hosting such ``ferromagnetic scar states'' necessarily admits a decomposition into a Zeeman term and terms containing local projectors that annihilate the scar states locally. This result establishes that an appropriate generalization of the Shiraishi--Mori construction\,\cite{mori_shiraishi} is essentially exhaustive for ferromagnetic QMBS and provides a unified structural explanation for the recurrent appearance of projector-based interactions and equally spaced scar towers across a broad family of exact scar Hamiltonians.
\end{abstract}
\section{Introduction}

\label{sec:intro}
Quantum many-body scars (QMBS) constitute a striking form of weak ergodicity breaking\,\cite{Serbyn2021,Moudgalya2022,Chandran2023}. Unlike other known non-thermal systems\,\cite{Baxter1982,PhysRevLett.111.127201,PhysRevB.90.174202,mbl_review,Nandkishore2015}, non-thermal behavior of QMBS does not originate from local conserved quantities\footnote{Note that some studies have cast doubt on the existence of quasi-local integrals of motion (LIOMs) in many-body localization\,\cite{untajs2020,Sels_2021}.}, but rather from a sparse set of atypical eigenstates, often called scar states, that strongly violate eigenstate thermalization hypothesis (ETH)\,\cite{Deutsch1991,Srednicki1994,eth_review}. 
Modern interest in such states was sparked by experiments on Rydberg-atom arrays\,\cite{Bernien2017} and by their effective description in terms of the PXP model\,\cite{Turner2018,Turner2018prb}. 
Motivated by these studies, numerous toy models that host exact scar eigenstates have been constructed, as well as general construction principles for such Hamiltonians\,\cite{mori_shiraishi,tos_xy,tos_aklt_unified,topological_scar,Zhao2020,Hudomal2020,Kiryl2020,Pakrouski2021,ODea2020TunnelsToTowers,Ren2021Quasisymmetry}. The list of such Hamiltonians has continued to expand, including standard condensed matter models such as the Hubbard model\,\cite{PhysRevLett.63.2144,PhysRevB.102.075132,PhysRevB.102.085140} and the Affleck-Kennedy-Lieb-Tasaki (AKLT) model\,\cite{AROVAS1989431,aklt_nonthermal,tos_aklt_unified}. Remarkably, across many seemingly unrelated examples, their scar states often take a closely parallel form: they are realized as ``magnon'' states obtained by repeatedly applying a simple ladder-like operator of fixed momentum (typically $k=\pi$) to a reference product state. Unlike conventional magnons, these states are not restricted to the low-energy sector, yet they remain exact eigenstates. In this paper, we refer to such states as \textit{ferromagnetic scar states}. 

Equally striking is a recurring structural motif in the corresponding Hamiltonians. In many models hosting ferromagnetic scars, the Hamiltonian can be written as the sum of an ``annihilator'' and a Zeeman term: the scar states are annihilated by the former and are therefore insensitive to the interaction part, while the latter produces an equidistant energy spectrum within the scar manifold. Dynamically, initial states supported on the scar subspace exhibit coherent revivals that can be interpreted as collective large-spin precession. Moreover, the annihilator itself can often be further decomposed into a sum of terms built from local projectors that annihilate the scar states locally. This pattern has been identified not only in a wide range of QMBS models (even when scar states are not ferromagnetic), but also in some models, including the AKLT chain, that were initially thought to evade such a characterization\,\cite{PhysRevA.107.023318,PhysRevB.108.054412}.   

\begin{figure}
    \centering
\tikzset{every picture/.style={line width=0.75pt}} 
\begin{tikzpicture}[x=0.75pt,y=0.75pt,yscale=-0.7,xscale=0.7]

\draw    (110,60) -- (132.15,29.87) ;
\draw [shift={(133.33,28.26)}, rotate = 126.32] [color={rgb, 255:red, 0; green, 0; blue, 0 }  ][line width=0.75]    (10.93,-3.29) .. controls (6.95,-1.4) and (3.31,-0.3) .. (0,0) .. controls (3.31,0.3) and (6.95,1.4) .. (10.93,3.29)   ;
\draw    (130,60) -- (152.15,29.87) ;
\draw [shift={(153.33,28.26)}, rotate = 126.32] [color={rgb, 255:red, 0; green, 0; blue, 0 }  ][line width=0.75]    (10.93,-3.29) .. controls (6.95,-1.4) and (3.31,-0.3) .. (0,0) .. controls (3.31,0.3) and (6.95,1.4) .. (10.93,3.29)   ;
\draw    (150,60) -- (172.15,29.87) ;
\draw [shift={(173.33,28.26)}, rotate = 126.32] [color={rgb, 255:red, 0; green, 0; blue, 0 }  ][line width=0.75]    (10.93,-3.29) .. controls (6.95,-1.4) and (3.31,-0.3) .. (0,0) .. controls (3.31,0.3) and (6.95,1.4) .. (10.93,3.29)   ;
\draw    (170,60) -- (192.15,29.87) ;
\draw [shift={(193.33,28.26)}, rotate = 126.32] [color={rgb, 255:red, 0; green, 0; blue, 0 }  ][line width=0.75]    (10.93,-3.29) .. controls (6.95,-1.4) and (3.31,-0.3) .. (0,0) .. controls (3.31,0.3) and (6.95,1.4) .. (10.93,3.29)   ;
\draw    (190,60) -- (212.15,29.87) ;
\draw [shift={(213.33,28.26)}, rotate = 126.32] [color={rgb, 255:red, 0; green, 0; blue, 0 }  ][line width=0.75]    (10.93,-3.29) .. controls (6.95,-1.4) and (3.31,-0.3) .. (0,0) .. controls (3.31,0.3) and (6.95,1.4) .. (10.93,3.29)   ;
\draw    (210,60) -- (232.15,29.87) ;
\draw [shift={(233.33,28.26)}, rotate = 126.32] [color={rgb, 255:red, 0; green, 0; blue, 0 }  ][line width=0.75]    (10.93,-3.29) .. controls (6.95,-1.4) and (3.31,-0.3) .. (0,0) .. controls (3.31,0.3) and (6.95,1.4) .. (10.93,3.29)   ;
\draw    (230,60) -- (252.15,29.87) ;
\draw [shift={(253.33,28.26)}, rotate = 126.32] [color={rgb, 255:red, 0; green, 0; blue, 0 }  ][line width=0.75]    (10.93,-3.29) .. controls (6.95,-1.4) and (3.31,-0.3) .. (0,0) .. controls (3.31,0.3) and (6.95,1.4) .. (10.93,3.29)   ;

\draw    (331,59) -- (353.15,28.87) ;
\draw [shift={(354.33,27.26)}, rotate = 126.32] [color={rgb, 255:red, 0; green, 0; blue, 0 }  ][line width=0.75]    (10.93,-3.29) .. controls (6.95,-1.4) and (3.31,-0.3) .. (0,0) .. controls (3.31,0.3) and (6.95,1.4) .. (10.93,3.29)   ;
\draw    (351,59) -- (373.15,28.87) ;
\draw [shift={(374.33,27.26)}, rotate = 126.32] [color={rgb, 255:red, 0; green, 0; blue, 0 }  ][line width=0.75]    (10.93,-3.29) .. controls (6.95,-1.4) and (3.31,-0.3) .. (0,0) .. controls (3.31,0.3) and (6.95,1.4) .. (10.93,3.29)   ;
\draw    (371,59) -- (393.15,28.87) ;
\draw [shift={(394.33,27.26)}, rotate = 126.32] [color={rgb, 255:red, 0; green, 0; blue, 0 }  ][line width=0.75]    (10.93,-3.29) .. controls (6.95,-1.4) and (3.31,-0.3) .. (0,0) .. controls (3.31,0.3) and (6.95,1.4) .. (10.93,3.29)   ;
\draw    (391,59) -- (413.15,28.87) ;
\draw [shift={(414.33,27.26)}, rotate = 126.32] [color={rgb, 255:red, 0; green, 0; blue, 0 }  ][line width=0.75]    (10.93,-3.29) .. controls (6.95,-1.4) and (3.31,-0.3) .. (0,0) .. controls (3.31,0.3) and (6.95,1.4) .. (10.93,3.29)   ;
\draw    (411,59) -- (433.15,28.87) ;
\draw [shift={(434.33,27.26)}, rotate = 126.32] [color={rgb, 255:red, 0; green, 0; blue, 0 }  ][line width=0.75]    (10.93,-3.29) .. controls (6.95,-1.4) and (3.31,-0.3) .. (0,0) .. controls (3.31,0.3) and (6.95,1.4) .. (10.93,3.29)   ;
\draw    (431,59) -- (453.15,28.87) ;
\draw [shift={(454.33,27.26)}, rotate = 126.32] [color={rgb, 255:red, 0; green, 0; blue, 0 }  ][line width=0.75]    (10.93,-3.29) .. controls (6.95,-1.4) and (3.31,-0.3) .. (0,0) .. controls (3.31,0.3) and (6.95,1.4) .. (10.93,3.29)   ;
\draw    (451,59) -- (473.15,28.87) ;
\draw [shift={(474.33,27.26)}, rotate = 126.32] [color={rgb, 255:red, 0; green, 0; blue, 0 }  ][line width=0.75]    (10.93,-3.29) .. controls (6.95,-1.4) and (3.31,-0.3) .. (0,0) .. controls (3.31,0.3) and (6.95,1.4) .. (10.93,3.29)   ;

\draw   (64.33,28.26) .. controls (64.33,22.74) and (68.81,18.26) .. (74.33,18.26) -- (478.33,18.26) .. controls (483.86,18.26) and (488.33,22.74) .. (488.33,28.26) -- (488.33,58.26) .. controls (488.33,63.78) and (483.86,68.26) .. (478.33,68.26) -- (74.33,68.26) .. controls (68.81,68.26) and (64.33,63.78) .. (64.33,58.26) -- cycle ;
\draw    (44,203) -- (66.15,172.87) ;
\draw [shift={(67.33,171.26)}, rotate = 126.32] [color={rgb, 255:red, 0; green, 0; blue, 0 }  ][line width=0.75]    (10.93,-3.29) .. controls (6.95,-1.4) and (3.31,-0.3) .. (0,0) .. controls (3.31,0.3) and (6.95,1.4) .. (10.93,3.29)   ;
\draw    (64,203) -- (86.15,172.87) ;
\draw [shift={(87.33,171.26)}, rotate = 126.32] [color={rgb, 255:red, 0; green, 0; blue, 0 }  ][line width=0.75]    (10.93,-3.29) .. controls (6.95,-1.4) and (3.31,-0.3) .. (0,0) .. controls (3.31,0.3) and (6.95,1.4) .. (10.93,3.29)   ;
\draw    (84,203) -- (106.15,172.87) ;
\draw [shift={(107.33,171.26)}, rotate = 126.32] [color={rgb, 255:red, 0; green, 0; blue, 0 }  ][line width=0.75]    (10.93,-3.29) .. controls (6.95,-1.4) and (3.31,-0.3) .. (0,0) .. controls (3.31,0.3) and (6.95,1.4) .. (10.93,3.29)   ;
\draw    (104,203) -- (126.15,172.87) ;
\draw [shift={(127.33,171.26)}, rotate = 126.32] [color={rgb, 255:red, 0; green, 0; blue, 0 }  ][line width=0.75]    (10.93,-3.29) .. controls (6.95,-1.4) and (3.31,-0.3) .. (0,0) .. controls (3.31,0.3) and (6.95,1.4) .. (10.93,3.29)   ;
\draw    (124,203) -- (146.15,172.87) ;
\draw [shift={(147.33,171.26)}, rotate = 126.32] [color={rgb, 255:red, 0; green, 0; blue, 0 }  ][line width=0.75]    (10.93,-3.29) .. controls (6.95,-1.4) and (3.31,-0.3) .. (0,0) .. controls (3.31,0.3) and (6.95,1.4) .. (10.93,3.29)   ;
\draw    (144,203) -- (166.15,172.87) ;
\draw [shift={(167.33,171.26)}, rotate = 126.32] [color={rgb, 255:red, 0; green, 0; blue, 0 }  ][line width=0.75]    (10.93,-3.29) .. controls (6.95,-1.4) and (3.31,-0.3) .. (0,0) .. controls (3.31,0.3) and (6.95,1.4) .. (10.93,3.29)   ;
\draw    (164,203) -- (186.15,172.87) ;
\draw [shift={(187.33,171.26)}, rotate = 126.32] [color={rgb, 255:red, 0; green, 0; blue, 0 }  ][line width=0.75]    (10.93,-3.29) .. controls (6.95,-1.4) and (3.31,-0.3) .. (0,0) .. controls (3.31,0.3) and (6.95,1.4) .. (10.93,3.29)   ;

\draw    (306,202) -- (328.15,171.87) ;
\draw [shift={(329.33,170.26)}, rotate = 126.32] [color={rgb, 255:red, 0; green, 0; blue, 0 }  ][line width=0.75]    (10.93,-3.29) .. controls (6.95,-1.4) and (3.31,-0.3) .. (0,0) .. controls (3.31,0.3) and (6.95,1.4) .. (10.93,3.29)   ;
\draw    (326,202) -- (348.15,171.87) ;
\draw [shift={(349.33,170.26)}, rotate = 126.32] [color={rgb, 255:red, 0; green, 0; blue, 0 }  ][line width=0.75]    (10.93,-3.29) .. controls (6.95,-1.4) and (3.31,-0.3) .. (0,0) .. controls (3.31,0.3) and (6.95,1.4) .. (10.93,3.29)   ;
\draw    (346,202) -- (368.15,171.87) ;
\draw [shift={(369.33,170.26)}, rotate = 126.32] [color={rgb, 255:red, 0; green, 0; blue, 0 }  ][line width=0.75]    (10.93,-3.29) .. controls (6.95,-1.4) and (3.31,-0.3) .. (0,0) .. controls (3.31,0.3) and (6.95,1.4) .. (10.93,3.29)   ;
\draw    (366,202) -- (388.15,171.87) ;
\draw [shift={(389.33,170.26)}, rotate = 126.32] [color={rgb, 255:red, 0; green, 0; blue, 0 }  ][line width=0.75]    (10.93,-3.29) .. controls (6.95,-1.4) and (3.31,-0.3) .. (0,0) .. controls (3.31,0.3) and (6.95,1.4) .. (10.93,3.29)   ;
\draw    (386,202) -- (408.15,171.87) ;
\draw [shift={(409.33,170.26)}, rotate = 126.32] [color={rgb, 255:red, 0; green, 0; blue, 0 }  ][line width=0.75]    (10.93,-3.29) .. controls (6.95,-1.4) and (3.31,-0.3) .. (0,0) .. controls (3.31,0.3) and (6.95,1.4) .. (10.93,3.29)   ;
\draw    (406,202) -- (428.15,171.87) ;
\draw [shift={(429.33,170.26)}, rotate = 126.32] [color={rgb, 255:red, 0; green, 0; blue, 0 }  ][line width=0.75]    (10.93,-3.29) .. controls (6.95,-1.4) and (3.31,-0.3) .. (0,0) .. controls (3.31,0.3) and (6.95,1.4) .. (10.93,3.29)   ;
\draw    (426,202) -- (448.15,171.87) ;
\draw [shift={(449.33,170.26)}, rotate = 126.32] [color={rgb, 255:red, 0; green, 0; blue, 0 }  ][line width=0.75]    (10.93,-3.29) .. controls (6.95,-1.4) and (3.31,-0.3) .. (0,0) .. controls (3.31,0.3) and (6.95,1.4) .. (10.93,3.29)   ;

\draw    (507,204) -- (529.15,173.87) ;
\draw [shift={(530.33,172.26)}, rotate = 126.32] [color={rgb, 255:red, 0; green, 0; blue, 0 }  ][line width=0.75]    (10.93,-3.29) .. controls (6.95,-1.4) and (3.31,-0.3) .. (0,0) .. controls (3.31,0.3) and (6.95,1.4) .. (10.93,3.29)   ;
\draw    (527,204) -- (549.15,173.87) ;
\draw [shift={(550.33,172.26)}, rotate = 126.32] [color={rgb, 255:red, 0; green, 0; blue, 0 }  ][line width=0.75]    (10.93,-3.29) .. controls (6.95,-1.4) and (3.31,-0.3) .. (0,0) .. controls (3.31,0.3) and (6.95,1.4) .. (10.93,3.29)   ;
\draw    (547,204) -- (569.15,173.87) ;
\draw [shift={(570.33,172.26)}, rotate = 126.32] [color={rgb, 255:red, 0; green, 0; blue, 0 }  ][line width=0.75]    (10.93,-3.29) .. controls (6.95,-1.4) and (3.31,-0.3) .. (0,0) .. controls (3.31,0.3) and (6.95,1.4) .. (10.93,3.29)   ;
\draw    (567,204) -- (589.15,173.87) ;
\draw [shift={(590.33,172.26)}, rotate = 126.32] [color={rgb, 255:red, 0; green, 0; blue, 0 }  ][line width=0.75]    (10.93,-3.29) .. controls (6.95,-1.4) and (3.31,-0.3) .. (0,0) .. controls (3.31,0.3) and (6.95,1.4) .. (10.93,3.29)   ;
\draw    (587,204) -- (609.15,173.87) ;
\draw [shift={(610.33,172.26)}, rotate = 126.32] [color={rgb, 255:red, 0; green, 0; blue, 0 }  ][line width=0.75]    (10.93,-3.29) .. controls (6.95,-1.4) and (3.31,-0.3) .. (0,0) .. controls (3.31,0.3) and (6.95,1.4) .. (10.93,3.29)   ;
\draw    (607,204) -- (629.15,173.87) ;
\draw [shift={(630.33,172.26)}, rotate = 126.32] [color={rgb, 255:red, 0; green, 0; blue, 0 }  ][line width=0.75]    (10.93,-3.29) .. controls (6.95,-1.4) and (3.31,-0.3) .. (0,0) .. controls (3.31,0.3) and (6.95,1.4) .. (10.93,3.29)   ;
\draw    (627,204) -- (649.15,173.87) ;
\draw [shift={(650.33,172.26)}, rotate = 126.32] [color={rgb, 255:red, 0; green, 0; blue, 0 }  ][line width=0.75]    (10.93,-3.29) .. controls (6.95,-1.4) and (3.31,-0.3) .. (0,0) .. controls (3.31,0.3) and (6.95,1.4) .. (10.93,3.29)   ;

\draw    (253.33,147.59) .. controls (226.6,170.36) and (174.39,120.61) .. (131.63,161.33) ;
\draw [shift={(130.33,162.59)}, rotate = 315] [color={rgb, 255:red, 0; green, 0; blue, 0 }  ][line width=0.75]    (10.93,-3.29) .. controls (6.95,-1.4) and (3.31,-0.3) .. (0,0) .. controls (3.31,0.3) and (6.95,1.4) .. (10.93,3.29)   ;
\draw    (295,148.59) .. controls (309.09,171.36) and (364.22,126.51) .. (405.1,157.63) ;
\draw [shift={(406.33,158.59)}, rotate = 218.83] [color={rgb, 255:red, 0; green, 0; blue, 0 }  ][line width=0.75]    (10.93,-3.29) .. controls (6.95,-1.4) and (3.31,-0.3) .. (0,0) .. controls (3.31,0.3) and (6.95,1.4) .. (10.93,3.29)   ;

\draw (76,23) node [anchor=north west][inner sep=0.75pt]  [font=\Large]  {$\hat{H}$};
\draw (263, 36) node [anchor=north west][inner sep=0.75pt]  [font=\Large]  {$=$};
\draw (255,78.4) node [anchor=north west][inner sep=0.75pt]  [font=\Large]  {$\Downarrow $};
\draw (296,29.4) node [anchor=north west][inner sep=0.75pt]  [font=\Large]  {$E$};
\draw (260,113.4) node [anchor=north west][inner sep=0.75pt]  [font=\Large]  {$\hat{H}$};
\draw (0,168) node [anchor=north west][inner sep=0.75pt]  [font=\large]  {$\hat{P}_{i,j}$};
\draw (189.33, 181) node [anchor=north west][inner sep=0.75pt]  [font=\Large]  {$=$};
\draw (245,165) node [anchor=north west][inner sep=0.75pt][font=\large]  {$\sum_i\hat{h}_{i}$};
\draw (451.33, 181) node [anchor=north west][inner sep=0.75pt]  [font=\Large]  {$=$};
\draw (476,174.4) node [anchor=north west][inner sep=0.75pt]  [font=\large]  {$E$};
\draw (220,176.4) node [anchor=north west][inner sep=0.75pt]  [font=\large]  {$0$};
\end{tikzpicture}
    \label{fig:ponchi}
    \caption{A graphical representation of Thm\,\ref{thm:main informal}. If a local Hamiltonian supports ferromagnetic scar states as exact eigenstates (top), then it must admit decomposition into an annihilator and a Zeeman term. The annihilator consists of local projectors that annihilate scars locally (bottom left), and the Zeeman term implies the equally spaced scar tower (bottom right).}
\end{figure}
The goal of this paper is to clarify the relation between these two empirical features: the existence of the ferromagnetic scar states and the decomposition of the Hamiltonian into an annihilator and a Zeeman term, which, \textit{a priori}, need not be connected. A first step in this direction was taken recently for the $W$ state (the spin-1/2 single-magnon state) via an explicit parent-Hamiltonian analysis\,\cite{gioia2025}. Here we extend that perspective to a broad class of QMBS Hamiltonians. Specifically, we prove the following theorem (informally),
\begin{theorem}[Main theorem (informal)]\label{thm:main informal}
If a local Hamiltonian hosts ferromagnetic scar states as exact eigenstates, then it necessarily admits a decomposition into a Zeeman term and an operator that annihilates the scar manifold; moreover, the annihilating operator is built from local projectors that annihilate the scar states locally.
\end{theorem}
This structure (annihilator\,$+$\,Zeeman) is a natural generalization of the Shiraishi--Mori construction\,\cite{mori_shiraishi}, and our result shows that this construction is essentially \textit{exhaustive} for a broad class of QMBS. The proof of Thm\,\ref{thm:main informal} shares its mathematical origin with the commutant-algebraic viewpoint on QMBS\,\cite{moudgalya2023} and relies on elementary representation theory of the symmetric group. Note that the question of when the annihilator admits a fully strictly local decomposition in complete generality remains open, although we also establish such a refinement under additional assumptions.

The remainder of this paper is organized as follows. In Sec.\,\ref{sec:XY}, we review the spin-1 XY model, a paradigmatic model of exactly solvable QMBS, and use it to motivate the central questions. In Sec.\,\ref{sec:setup}, we introduce a general setup that abstracts the key structure of the spin-1 XY model. Sec.\,\ref{sec:preliminaries} reviews the representation-theoretic ingredients that will be used throughout. Secs\,\ref{sec:HA} and \ref{sec:locality} contain the main structural results: Sec.\,\ref{sec:HA} shows that any operator annihilating the ferromagnetic scar manifold must be expressible in terms of local projectors that annihilate it locally, and Sec.\,\ref{sec:locality} proves that imposing locality restricts the remaining, non-annihilating part of the Hamiltonian to a Zeeman term. Sec.\,\ref{sec:quasi-local decomp} discusses sufficient conditions for obtaining strictly local annihilators and connects them to Dzyaloshinskii--Moriya (DM)-type interactions. We conclude in Sec.\,\ref{sec:discussion} with a discussion and outlook.

\section{Motivation: the spin-1 XY model}\label{sec:XY}
The spin-1 XY model provides a paradigmatic example of a non-integrable model hosting analytically tractable ferromagnetic scar states. On a one-dimensional chain of even length $L$ with periodic boundary conditions it is defined by,
\begin{equation}
    \hat{H}_{\mr{XY}}\coloneqq J\sum_{x=1}^L\big(\hat{S}_x^1\hat{S}_{x+1}^1+\hat{S}_x^2\hat{S}_{x+1}^2\big)+\sum_{x=1}^L\big(h\hat{S}_x^3+D(\hat{S}_x^3)^2\big),
\end{equation}
where $\hat{S}_x^\alpha\,(\alpha=1,2,3)$ are spin-1 operators satisfying the usual $\mf{su}(2)$ commutation relation. The local Hilbert space at each site is three-dimensional, $\mf{h}_x\cong\mb{C}^3$, and the many-body Hilbert space is the tensor product, $\mc{H}=\bigotimes_{x=1}^L\mf{h}_x$. At least for $D=0$ (isotropic limit), one can rigorously show that there are no non-trivial local conserved quantities beyond the Hamiltonian and the total magnetization $\hat{S}^3_{\mr{tot}}\coloneqq\sum_{x=1}^L\hat{S}_x^3$; hence the model is non-integrable\,\cite{KeitaOmiyaPhD}\footnote{There exists another non-trivial $SU(2)$ symmetry constructed from non-local generators\,\cite{odea2024levelstatistics}. The support of the generators spans the entire chain and thus this symmetry is not considered to be ``local'' in our discussion.}. We expect this non-integrability to persist for generic $D$.

Despite this non-integrability, the model admits an exact tower of low-entanglement eigenstates $\{\ket*{S^\pi_n}\}_{n=0}^L$. Let $\ket*{\pm}$ and $\ket*{0}$ denote eigenstates of $\hat{S}^3$ with $\hat{S}^3\ket*{\pm}=\pm\ket*{\pm}$ and $\hat{S}^3\ket*{0}=0$. Then the scar states are given by,
\begin{equation}
    \ket*{S^\pi_n}\coloneqq\frac{1}{\mc{N}_n}\left(\sum_{x=1}^L(-1)^x(\hat{S}_x^+)^2\right)^n\bigotimes_{x=1}^L\ket*{-}_x,
\end{equation}
where $\hat{S}_x^+\coloneqq\hat{S}_x^1+i\hat{S}_x^2$ and $\mc{N}_n$ is a normalization factor. To align these states with our notion of ferromagnetic scar states, it is convenient to remove the ``$\pi$-momentum'' by a site-dependent unitary transformation $\hat{U}$,
\begin{equation}
    \begin{split}
        &\hat{U}\coloneqq\exp\left[-i\pi\sum_{x=1}^Lx\dyad*{+}_x\right]\\
        &\hat{U}\ket*{S^\pi_n}\eqqcolon\ket*{S_n}=\frac{1}{\mc{N}_n}\left(\sum_{x=1}^L(\hat{S}_x^+)^2\right)^n\bigotimes_{x=1}^L\ket*{-}_x.
    \end{split}
\end{equation}

The state $\ket*{S_n}$ has two key features. First, their local configurations are restricted to the two-dimensional subspace $\mf{h}_x^s\subset\mf{h}_x$, spanned by $\ket*{+}$ and $\ket*{-}$, so that $\ket*{S_n}\in\bigotimes_{x\in\Lambda}\mf{h}_x^s$. Second, within the reduced Hilbert space, the family $\{\ket*{S_n}\}_{n=0}^L$ spans the totally symmetric sector, $\mr{Sym}^L(\mf{h}^s)$, i.e., the spin-$L/2$ multiplet.   

The mechanism behind these exact eigenstates becomes transparent once we examine the conjugated Hamiltonian. One can decompose it as,
\begin{equation}
    \hat{U}\hat{H}_{\mr{XY}}\hat{U}^\dag=\sum_{x=1}^L\hat{h}_{x,x+1}^{(1)}\hat{P}_{x,x+1}^{(1)}+\sum_{x=1}^L\hat{h}_{x,x+1}^{(2)}\hat{P}_{x,x+1}^{(2)}+\sum_{x=1}^L\hat{h}_x,
\end{equation}
where $\hat{h}_{x,x+1}^{(i)}\,(i=1,2)$ and $\hat{h}_x$ are local terms, 
\begin{equation}
    \begin{split}
        \hat{h}_{x,x+1}^{(1)}&\coloneqq J\Big(-\dyad*{0,+}{+,0}-\dyad*{+,0}{0,+}+\dyad*{0,-}{-,0}+\dyad*{-,0}{0,-}\Big)_{x,x+1}\\
        &+(-1)^xJ\big(\ket*{+,-}-\ket*{-,+}\big)\bra*{0,0}_{x,x+1}\\
        \hat{h}_{x,x+1}^{(2)}&\coloneqq(-1)^xJ\ket*{0,0}\big(\bra*{+,-}-\bra*{-,+}\big)_{x,x+1}\\
        \hat{h}_x&\coloneqq h\hat{S}_x^3+D(\hat{S}_x^3)^2,
    \end{split}
\end{equation}
and $\hat{P}_{x,x+1}^{(i)}\,(i=1,2)$ is a projector,
\begin{equation}\begin{split}
    &\hat{P}_{x,x+1}^{(1)}\coloneqq\dyad{0}_x+\dyad*{0}_{x+1}-\dyad*{0,0}_{x,x+1}\\
    &\hat{P}_{x,x+1}^{(2)}\coloneqq\frac{1}{2}\big(\ket*{+,-}-\ket*{-,+}\big)\big(\bra*{+,-}-\bra*{-,+}\big)_{x,x+1}.
    \end{split}
\end{equation}
The projector $\hat{P}_{x,x+1}^{(1)}$ annihilates any state with no $\ket*{0}$ component, and thus it trivially annihilates the entire family $\{\ket*{S_n}\}_{n=0}^L$. The second projector $\hat{P}_{x,x+1}^{(2)}$ annihilates the triplets sector built from $\ket*{\pm}$. Since $\{\ket*{S_n}\}_{n=0}^L$ forms a totally symmetric multiplet, any two-site reduced density matrix lies in the triplet subspace and is therefore annihilated by $\hat{P}_{x,x+1}^{(2)}$. The remaining on-site term $\hat{h}_x$ is a Zeeman term and yields an equidistant energy spectrum within the tower $\{\ket*{S_n}\}_{n=0}^L$.

In summary, the spin-1 XY model hosts a family of ferromagnetic scar states $\{\ket*{S_n^\pi}\}_{n=0}^L$, which can be viewed as the totally symmetric wavefunctions (Dicke-type states) living in the tensor product of reduced on-site subspaces $\bigotimes_{x=1}^L\mf{h}_x^s\cong\mb{C}^{2\otimes L}$. The key structural point is that the Hamiltonian admits a decomposition into local projector terms  $\hat{P}_{x,x+1}^{(1)}$ and $\hat{P}_{x,x+1}^{(2)}$ that locally annihilate the scar manifold, and a Zeeman-type term that produces an equidistant ladder of energies. This is precisely the hallmark of the Shiraishi--Mori (SM) construction\,\cite{mori_shiraishi}, in which the local interactions are built from projectors that vanish on a designed subspace. 

In the remainder of the paper, we will generalize the spin-1 XY model and discuss the converse of the SM construction: we will show that the existence of ferromagnetic, totally symmetric exact eigenstates \textit{forces} the presence of the corresponding local projector structure in the Hamiltonian. Consequently, a (slightly generalized) SM construction is, in this sense, exhaustive for ferromagnetic scar states.

\section{Setup}\label{sec:setup}
\subsection*{Hilbert space and Hamiltonian}
We consider a lattice $\Lambda$ with $|\Lambda|=N$ sites labeled by $x$. Each site carries a $d$-dimensional local Hilbert space $\mf{h}_x\cong\mb{C}^d$, so the many-body Hilbert space is the tensor product $\mc{H}=\bigotimes_{x\in\Lambda}\mf{h}_x$. We assume that a local Hamiltonian $\hat{H}$ can be written as a sum of operators supported on connected subregions (subgraphs) $X\subset\Lambda$,
\begin{equation}\label{eq:Hamiltonian}
    \hat{H}=\sum_{X\subset\Lambda}\hat{h}_X,
\end{equation}
where $\hat{h}_X$ acts non-trivially only on the sites in $X$.




\subsection*{Ferromagnetic scar states}
We now define a general class of ``ferromagnetic scar states'' that abstracts the structure seen in the spin-1 XY example. We fix an arbitrary on-site target space $\mf{h}^s=\mb{C}^{d_s}(\subset\mf{h})$. Any operator acting non-trivially only within $\mf{h}^s$ can be seen as an element of $\mf{su}(\mf{h}^s)$ (with $\mf{h}^s$ carrying its fundamental representation). Accordingly, we choose a Cartan subalgebra $\{\hat{H}^i\}_{i=1}^{d_s-1}$ and ladder operators $\{\hat{E}^\alpha\}_\alpha$. The target space $\mf{h}^s$ is spanned by simultaneous eigenstates $\ket*{\vec{m}}$ (with $\vec{m}$ a weight vector) satisfying 
\begin{equation}
    \hat{H}^i\ket*{\vec{m}}=m^i\ket*{\vec{m}},
\end{equation}
where $m^i$ is the $i$th component of the weight vector $\vec{m}=(m^1,\cdots,m^{d_s-1})$. In this language, the ferromagnetic scar manifold will be identified with the totally symmetric representation $\mr{Sym}^N(\mf{h}^s)$ generated from $\mf{su}(\mf{h}^s)$. 

Concretely, we introduce collective generators obtained by summing the on-site $\mf{su}(\mf{h}^s)$ operators over $\Lambda$:
\begin{equation}
    \hat{H}_{\mr{tot}}^i\coloneqq\sum_{x\in\Lambda}\hat{H}^i_x,\,\,\hat{J}^\alpha\coloneqq\sum_{x\in\Lambda}\hat{E}_x^\alpha,
\end{equation}
where the subscript $x$ indicates an operator acting non-trivially only on site $x$. The ferromagnetic scar states are then identified with the $N$th symmetric power of the fundamental representation. We define a convenient weight basis $\{\ket*{\psi^\Lambda_{\vec{m}}}\}_{\vec{m}}$ for $\mr{Sym}^N(\mf{h}^s)$ in the following manner. Let $\ket*{\Omega}_x$ be the highest-weight state on site $x$, and let $\Delta$ be a set of simple roots. For each weight $\vec{m}$, choose nonnegative integers $\{p_\alpha\}_{\alpha\in\Delta}$ that generate $\vec{m}$ from the highest weight via collective lowering operations\footnote{Note that the integers $p_\alpha\geq0$ are chosen so that the resulting state has total weight $\vec{m}$. This choice and the ordering of the product over $\alpha$ are not unique in general. However, whenever we need an unambiguous expression, we will use the equivalent symmetrized-configuration form in Eq.\,\eqref{eq:permutation expression}.}. We then set,
\begin{equation}\label{eq:Cartan basis state}\begin{split}
    &\ket*{\psi^\Lambda_{\vec{m}}}\coloneqq\frac{1}{\mc{N}_{\vec{m}}}\prod_{\alpha\in\Delta}\big(\hat{J}^\alpha\big)^{p_\alpha}\bigotimes_{x\in\Lambda}\ket*{\Omega}_x\\
    &\hat{H}^i_{\mr{tot}}\ket*{\psi^\Lambda_{\vec{m}}}=m^i\ket*{\psi^\Lambda_{\vec{m}}},
    \end{split}
\end{equation}
where $\mc{N}_{\vec{m}}$ is a normalization constant.

An equivalent and often more intuitive representation is obtained by symmetrizing a reference product configuration. Let $(\vec{m}_x)_{x\in\Lambda}$ be an assignment of on-site weight satisfying the global constraint,
\begin{equation}
    \sum_{x\in\Lambda}m_x^i=m^i\,\,\text{for }\forall i.
\end{equation}
Then the corresponding symmetric state can be written as,
\begin{equation}\label{eq:permutation expression}
    \ket*{\psi^\Lambda_{\vec{m}}}=\frac{1}{\mc{N}'_{\vec{m}}}\sum_{\sigma\in\mf{S}_N}\hat{\sigma}\bigotimes_{x\in\Lambda}\ket*{\vec{m}_x}_x,
\end{equation}
where $\mc{N}'_{\vec{m}}$ is a normalization constant and $\hat{\sigma}$ denotes the permutation operator acting by,
\begin{equation}
    \hat{\sigma}\bigotimes_{x\in\Lambda}\ket{\vec{m}_x}_x=\bigotimes_{x\in\Lambda}\ket*{\vec{m}_{\sigma(x)}}_x.
\end{equation}
This makes manifest that $\ket*{\psi^\Lambda_{\vec{m}}}$ lies in the totally symmetric sector.

Before proceeding to the representation-theoretic preliminaries, let us summarize the logic of the proof. We first show that any operator annihilating the ferromagnetic scar manifold must contain local projector factors that annihilate the symmetric sector locally (Sec.\,\ref{sec:HA}). We then show that, once locality is imposed and the full totally symmetric weight basis consists of exact eigenstates, the only non-annihilating contribution within the scar manifold is a Zeeman term (Sec.\,\ref{sec:locality}). This yields the generalized Shiraishi--Mori decomposition stated informally in Sec\,\ref{sec:intro}.

\section{Preliminaries}\label{sec:preliminaries}
In this section we recapitulate several standard representation-theoretic tools that will be used repeatedly in the subsequent analysis. A main objective is to isolate the ferromagnetic scar manifold as the totally symmetric sector of the permutation action and to characterize operators that can act nontrivially within that sector. For the arguments in Secs.~\ref{sec:HA} and \ref{sec:locality}, we only need three ingredients: (i) the Young-symmetrizer resolution of the permutation action on $V^{\otimes N}$, (ii) the projector onto the totally symmetric sector, and (iii) the characterization of the permutation commutant in terms of collective generators. We therefore restrict ourselves to the minimal representation-theoretic input needed later.

\subsection{Permutation resolution of the Hilbert space}\label{subsec:rep of SN}

\begin{figure}
\centering

\tikzset{every picture/.style={line width=0.75pt}} 

\begin{tikzpicture}[x=0.75pt,y=0.75pt,yscale=-1,xscale=1]

\draw   (30,11) -- (80,11) -- (80,61) -- (30,61) -- cycle ;
\draw   (80,11) -- (130,11) -- (130,61) -- (80,61) -- cycle ;
\draw   (130,11) -- (180,11) -- (180,61) -- (130,61) -- cycle ;
\draw   (30,61) -- (80,61) -- (80,111) -- (30,111) -- cycle ;
\draw   (80,61) -- (130,61) -- (130,111) -- (80,111) -- cycle ;
\draw   (261,11) -- (311,11) -- (311,61) -- (261,61) -- cycle ;
\draw   (311,11) -- (361,11) -- (361,61) -- (311,61) -- cycle ;
\draw   (361,11) -- (411,11) -- (411,61) -- (361,61) -- cycle ;
\draw   (261,61) -- (311,61) -- (311,111) -- (261,111) -- cycle ;
\draw   (311,61) -- (361,61) -- (361,111) -- (311,111) -- cycle ;
\draw    (436.33,39) -- (417,39) ;
\draw [shift={(415,39)}, rotate = 360] [color={rgb, 255:red, 0; green, 0; blue, 0 }  ][line width=0.75]    (10.93,-3.29) .. controls (6.95,-1.4) and (3.31,-0.3) .. (0,0) .. controls (3.31,0.3) and (6.95,1.4) .. (10.93,3.29)   ;
\draw    (437.33,91) -- (418,91) ;
\draw [shift={(416,91)}, rotate = 360] [color={rgb, 255:red, 0; green, 0; blue, 0 }  ][line width=0.75]    (10.93,-3.29) .. controls (6.95,-1.4) and (3.31,-0.3) .. (0,0) .. controls (3.31,0.3) and (6.95,1.4) .. (10.93,3.29)   ;
\draw    (284.33,133.67) -- (284.33,117.67) ;
\draw [shift={(284.33,115.67)}, rotate = 90] [color={rgb, 255:red, 0; green, 0; blue, 0 }  ][line width=0.75]    (10.93,-3.29) .. controls (6.95,-1.4) and (3.31,-0.3) .. (0,0) .. controls (3.31,0.3) and (6.95,1.4) .. (10.93,3.29)   ;
\draw    (334.33,134.67) -- (334.33,118.67) ;
\draw [shift={(334.33,116.67)}, rotate = 90] [color={rgb, 255:red, 0; green, 0; blue, 0 }  ][line width=0.75]    (10.93,-3.29) .. controls (6.95,-1.4) and (3.31,-0.3) .. (0,0) .. controls (3.31,0.3) and (6.95,1.4) .. (10.93,3.29)   ;

\draw (279,28.4) node [anchor=north west][inner sep=0.75pt]    {$1$};
\draw (281,76.4) node [anchor=north west][inner sep=0.75pt]    {$2$};
\draw (330,28.4) node [anchor=north west][inner sep=0.75pt]    {$3$};
\draw (381,28.4) node [anchor=north west][inner sep=0.75pt]    {$4$};
\draw (330,77.4) node [anchor=north west][inner sep=0.75pt]    {$5$};
\draw (436,30.4) node [anchor=north west][inner sep=0.75pt]    {$\mathfrak{S}_{3}$};
\draw (437,81.4) node [anchor=north west][inner sep=0.75pt]    {$\mathfrak{S}_{2}$};
\draw (275,137.4) node [anchor=north west][inner sep=0.75pt]    {$\mathfrak{S}_{2}$};
\draw (324,137.4) node [anchor=north west][inner sep=0.75pt]    {$\mathfrak{S}_{2}$};

\end{tikzpicture}
\caption{(Left) a partition of $5$ into $(3,2)$, which corresponds to an irrep of $\mf{S}_5$. (Right) the construction of the Young symmetrizer corresponding to $(3,2)$. The subgroup $\mf{P}_{(3,2)}$, which preserves each row, is isomorphic to $\mf{S}_3\times\mf{S}_2$, permuting within $\{1,3,4\}$ and $\{2,5\}$. Similarly, the subgroup $\mf{Q}_{(3,2)}$, which preserves each column, is isomorphic to $\mf{S}_2\times\mf{S}_2(\times\mf{S}_1)$, permuting within $\{1,2\}$ and $\{3,5\}$.}
\label{fig:tableau}
\end{figure}

Let $V$ be a finite-dimensional vector space spanned by $\{\ket{e_i}\}_i$ and consider $V^{\otimes N}$ with the natural right action of the symmetric group $\mf{S}_N$:
\begin{equation}
    \hat{\sigma}^R\ket{e_1}\otimes\cdots\otimes\ket{e_N}\coloneqq\ket*{e_{\sigma^{-1}(1)}}\otimes\cdots\otimes\ket*{e_{\sigma^{-1}(N)}}.
\end{equation}
The superscript $R$ emphasizes that this is a right action.  We also define the corresponding left action by
\begin{equation}
    \hat{\sigma}\ket{e_1}\otimes\cdots\otimes\ket{e_N}\coloneqq\ket*{e_{\sigma(1)}}\otimes\cdots\otimes\ket*{e_{\sigma(N)}}.
\end{equation}
These are related by $\hat{\sigma}^R=\hat{\sigma}^{-1}$. Importantly, the right action satisfies $\hat{\sigma}^R\hat{\tau}^R=\widehat{\sigma\tau}^R$, whereas the left action composes in the opposite order ($\what{\sigma\tau}=\hat{\tau}\hat{\sigma}$).

Irreducible representations (irreps) of $\mf{S}_N$ are labeled by Young diagrams $\lambda\vdash N$. In particular, the trivial representation corresponds to the one-row diagram $(N)$, i.e., to the totally symmetric sector. For each standard tableau $T$ with shape $\lambda$, let $c_T\in\mb{C}[\mf{S}_N]$ denote the corresponding Young symmetrizer (see below and Appendix\,\ref{app:SN}). Then the right action of $\mathbb{C}[\mf{S}_N]$ yields a decomposition
\begin{equation}
V^{\otimes N}=\bigoplus_T\hat{c}^{R}_T V^{\otimes N}\eqqcolon\bigoplus_T\mb{S}_TV,
\label{eq:young_resolution_condensed}
\end{equation}
where $T$ runs over standard tableaux. The projector onto the totally symmetric sector is the normalized symmetrizer
\begin{equation}
\hat{c}^{R}_{(N)}=\frac{1}{|\mf{S}_N|}\sum_{\sigma\in \mf{S}_N}\hat{\sigma}^{R}=\frac{1}{|\mf{S}_N|}\sum_{\sigma\in \mf{S}_N}\hat{\sigma}.
\label{eq:sym_projector_condensed}
\end{equation}
In our setting, the ferromagnetic scar manifold is precisely the sector $\mr{Sym}^N(\mf{h}^s)\subset(\mf{h}^s)^{\otimes N}$ selected by $\hat{c}^{R}_{(N)}$.

For a generic standard tableau $T$, the Young symmetrizer $c_T$ is a product of the symmetrizer $a_T$ and the anti-symmetrizer $b_T$, i.e., $c_T=a_Tb_T$: let $P_T$ and $Q_T$ denote the subgroups of $\mf{S}_N$ that preserve each row and each column, respectively:
\begin{equation}\begin{split}
P_T&\coloneqq\{\sigma\in\mf{S}_N; \sigma \text{ preserves each row}\}\\
Q_T&\coloneqq\{\sigma\in\mf{S}_N; \sigma \text{ preserves each column}\},
\end{split}
\end{equation}
which yield the symmetrizer and anti-symmetrizer by
\begin{equation}\begin{split}
a_T&\coloneqq\sum_{\sigma\in P_T}\sigma\\
b_T&\coloneqq\sum_{\sigma\in Q_T}\mathrm{sgn}(\sigma)\sigma,
\end{split}
\end{equation}

\subsection{Commutant and collective generators}

Let $\mc{H}$ be a right $G$-module, and define its commutant $\mr{Comm}(G)$ by, 
\begin{equation}
    \mr{Comm}(G)\coloneqq\left\{\varphi\in\mc{L}(\mc{H});\,\varphi(ug)=\varphi(u)g\,\,\text{for }\forall u\in\mc{H},\,\forall g\in G\right\}.
\end{equation}
Namely, $\varphi\in\mr{Comm}(G)$ commutes with all symmetry operators in $G$.

We recall a characterization of the commutant that will be central in later sections:
\begin{theorem}\label{thm:GL(V)}
    The commutant of the $\mf{S}_N$ action on $V^{\otimes N}$ is generated by tensor powers of single-site unitaries $U(V)$:
    \begin{equation}\label{eq:Comm(SN)}
    \mr{Comm}(\mf{S}_N)\cong\left\langle\hat{U}^{\otimes N};\,\,\hat{U}\in U(V)\right\rangle,
    \end{equation}
    where $\expval{\bullet}$ is the algebra generated by its argument.
\end{theorem}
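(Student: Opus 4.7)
The plan is to identify $\mr{Comm}(\mf{S}_N)$ with the symmetric tensors in $\mr{End}(V)^{\otimes N}$, and then to show that these symmetric tensors are exactly spanned by the pure tensor powers $\hat{U}^{\otimes N}$ via a polarization identity combined with a complex-density argument. The easy containment $\langle\hat{U}^{\otimes N};\hat{U}\in U(V)\rangle\subseteq\mr{Comm}(\mf{S}_N)$ is immediate: because $\hat{U}^{\otimes N}$ has the same local factor on every site, conjugation by $\hat{\sigma}$ (equivalently $\hat{\sigma}^R$) leaves it invariant for every $\sigma\in\mf{S}_N$. Moreover, $(\hat{U}\hat{U}')^{\otimes N}=\hat{U}^{\otimes N}(\hat{U}')^{\otimes N}$, so the algebra generated by these elements is simply their linear span, and I will work with spans throughout.

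For the reverse containment, I identify $\mr{End}(V^{\otimes N})\cong\mr{End}(V)^{\otimes N}$ in the canonical way. Under this identification, conjugation by $\hat{\sigma}$ on $\mr{End}(V^{\otimes N})$ translates into the permutation action on the tensor factors of $\mr{End}(V)^{\otimes N}$: checking on elementary tensors $A_1\otimes\cdots\otimes A_N$ gives $\hat{\sigma}(A_1\otimes\cdots\otimes A_N)\hat{\sigma}^{-1}=A_{\sigma(1)}\otimes\cdots\otimes A_{\sigma(N)}$. An operator commutes with all $\hat{\sigma}$ if and only if it is invariant under this factor-permutation, so
\begin{equation}
\mr{Comm}(\mf{S}_N)\cong\bigl(\mr{End}(V)^{\otimes N}\bigr)^{\mf{S}_N}=\mr{Sym}^N\bigl(\mr{End}(V)\bigr),
\end{equation}
where $\mr{Sym}^N(W)\subset W^{\otimes N}$ denotes the totally symmetric subspace.

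It therefore suffices to prove $\mr{span}\{\hat{U}^{\otimes N};\hat{U}\in U(V)\}=\mr{Sym}^N(\mr{End}(V))$. The polarization identity
\begin{equation}
\sum_{\sigma\in\mf{S}_N}A_{\sigma(1)}\otimes\cdots\otimes A_{\sigma(N)}=\sum_{\emptyset\neq I\subseteq\{1,\ldots,N\}}(-1)^{N-|I|}\Bigl(\sum_{i\in I}A_i\Bigr)^{\otimes N}
\end{equation}
(verified by expanding the right-hand side and checking multinomial coefficients) expresses every symmetric tensor as a linear combination of pure powers $A^{\otimes N}$ with $A\in\mr{End}(V)$, so it only remains to replace $\mr{End}(V)$ by $U(V)$ in the generating set. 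I will argue by duality: suppose a linear functional $\phi$ on $\mr{End}(V)^{\otimes N}$ vanishes on $\hat{U}^{\otimes N}$ for every $\hat{U}\in U(V)$; then $A\mapsto\phi(A^{\otimes N})$ is a holomorphic polynomial in the entries of $A$ whose restriction to the compact real form $U(V)$ vanishes, and by complexifying the exponential chart $X\mapsto e^{iX}$ near the identity (equivalently, by Zariski density of $U(V)$ in $GL(V)$ in the complex-algebraic sense) this polynomial must vanish on all of $\mr{End}(V)$. Hence $\phi$ annihilates every $A^{\otimes N}$, forcing $\mr{span}\{\hat{U}^{\otimes N}\}=\mr{span}\{A^{\otimes N}\}=\mr{Sym}^N(\mr{End}(V))$. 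The main obstacle is exactly this last density step — promoting the real submanifold $U(V)$ to the full complex vector space $\mr{End}(V)$ — since spanning $\mr{End}(V)$ by unitaries as a $\mb{C}$-vector space is not by itself enough (tensor powers are not linear); the polynomial-extension argument is what makes the passage clean.
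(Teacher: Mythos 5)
Your proof is correct and follows the same overall route as the paper: identify $\mr{Comm}(\mf{S}_N)$ with the totally symmetric tensors in $\mr{End}(V)^{\otimes N}$, then argue that pure powers $\hat{U}^{\otimes N}$ with $\hat{U}$ unitary already span this space. The paper phrases the symmetrization step in the language of the Young symmetrizer $\hat{c}^R_{(N)}$ acting on $(V\otimes V^*)^{\otimes N}$, which is the same content as your $\mr{End}(V^{\otimes N})\cong\mr{End}(V)^{\otimes N}$ identification. Where you do genuinely better is the final passage from $\mr{End}(V)$ to $U(V)$: the paper's wording ``$GL(V)$ is dense in $\mc{L}(V)$'' and ``$GL(V)\cong\mb{C}\otimes U(V)$'' is not a correct statement (polar decomposition gives $GL(V)=U(V)\cdot\mr{Pos}(V)$, not scalar times unitary; $\mathrm{diag}(1,2)$ is already a counterexample), and density in the operator-norm topology alone does not by itself justify passing a spanning statement through the nonlinear map $A\mapsto A^{\otimes N}$. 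Your polarization identity plus the Zariski-density / holomorphic-extension argument (a holomorphic polynomial vanishing on $U(V)$ vanishes on all of $\mr{End}(V)$, via the exponential chart and the identity theorem) is exactly the clean way to close that gap, and your closing remark correctly identifies it as the actual crux. One small point: being pedantic, the scalar multiple $\lambda\hat{U}$ of a unitary produces $\lambda^N\hat{U}^{\otimes N}$, so scalar freedom alone does not rescue the paper's argument, whereas your functional-vanishing argument handles it properly.
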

We defer the proof to Appendix\,\ref{proof:GL(V)}. This theorem identifies the multiplicity spaces $\mb{S}_TV$ 
with the commutant sector, and in the QMBS setting it underlies the familiar interpretation of ferromagnetic scar families as $SU(2)$-multiplets.

Eq.\,\eqref{eq:Comm(SN)} implies that any operator $\hat{O}$ commuting with all permutations admits an expansion in collective $\mf{su}(V)$ generators:
\begin{equation}\label{eq:collective su(V)}
    \hat{O}=\sum_{l=0}^N\sum_{\mu_1\cdots\mu_l}g_{\mu_1\cdots\mu_l}\hat{T}^{\mu_1}\cdots\hat{T}^{\mu_l},\,\,\hat{T}^\mu\coloneqq\sum_{x=1}^N\hat{T}_x^{\mu}
\end{equation}
where $\hat{T}^\mu_x$ is a single-site $\mf{su}(V)$ generator acting only on site $x$.

Moreover, for a general operator $\hat{O}\in\mc{L}(V^{\otimes N})$, we can project onto $\mr{Comm}(\mf{S}_N)$ via symmetrization,
\begin{equation}\label{eq:symmetrization}
    \mc{S}(\hat{O})\coloneqq\frac{1}{|\mf{S}_N|}\sum_{\sigma\in\mf{S}_N}\hat{\sigma}\hat{O}\hat{\sigma}^{-1}.
\end{equation}
By Thm.\,\ref{thm:GL(V)}, $\mc{S}(\hat{O})$ is again expressible as a polynomial in collective $\mf{su}(V)$ generators.

\section{Annihilation implies local projectors}\label{sec:HA}
We begin by analyzing annihilators, i.e., operators on the full many-body Hilbert space that annihilate the totally symmetric subspace $\mr{Sym}^N(\mf{h}^s)$. The key point of this section is that any such annihilator necessarily contains strictly local projectors, possibly multiplied by non-Hermitian operators with long range, so that the annihilation can always be attributed to a local ``forbidden component''.
\begin{lemma}\label{lemma:main}
    If $\hat{O}\in\mc{L}(\mc{H})$ ($\mc{H}=\bigotimes_{x\in\Lambda}\mf{h}_x$) annihilate $\mr{Sym}^N(\mf{h}^s)$, then $\hat{O}$ can be written as a sum of terms each containing strictly local projectors:
    \begin{equation}\label{eq:lemma main}        \hat{O}=\sum_x\hat{o}^{(1)}_{[x]}\hat{P}_x+\sum_{\expval{x,y}}\hat{o}^{(2)}_{[xy]}\hat{P}_{xy},
    \end{equation}
    where $\expval{x,y}$ denotes nearest-neighbor sites, the projectors $\hat{P}_x$ and $\hat{P}_{xy}$ act non-trivially only on site $x$ and on the pair $(x,y)$, respectively. Each projector annihilates $\mr{Sym}^N(\mf{h}^s)$.
\end{lemma}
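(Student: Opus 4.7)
The plan is to reduce the statement to a structural claim about images and kernels of two natural families of local projectors, and then to apply a standard right-ideal argument. First, introduce the on-site projector $\hat{P}_x \coloneqq \hat{\mb{1}}_x - \hat{\Pi}^s_x$ onto the orthogonal complement of $\mf{h}^s$ at site $x$, and the nearest-neighbor projector $\hat{P}_{xy} \coloneqq \hat{\Pi}^s_x\hat{\Pi}^s_y - \hat{\Pi}^{\mr{Sym}^2(\mf{h}^s)}_{xy}$, which projects onto $\wedge^2(\mf{h}^s)$ sitting inside $\mf{h}_x\otimes\mf{h}_y$. Both are Hermitian, idempotent, and manifestly annihilate $\mr{Sym}^N(\mf{h}^s)$: the first because every site of a scar state lives in $\mf{h}^s$, the second because a totally symmetric tensor has no antisymmetric $(x,y)$-component.

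The core step is to establish the identity
\begin{equation*}
V \;\coloneqq\; \mc{H}\ominus\mr{Sym}^N(\mf{h}^s) \;=\; \sum_{x\in\Lambda}\mr{Im}(\hat{P}_x) + \sum_{\expval{x,y}}\mr{Im}(\hat{P}_{xy}),
\end{equation*}
which, since all $\hat{P}$'s are Hermitian, is equivalent to $\bigcap_x\ker\hat{P}_x \,\cap\, \bigcap_{\expval{x,y}}\ker\hat{P}_{xy} = \mr{Sym}^N(\mf{h}^s)$. I would verify this in two stages: first, $\bigcap_x\ker\hat{P}_x$ is precisely $(\mf{h}^s)^{\otimes N}$; second, restricted to $(\mf{h}^s)^{\otimes N}$, the condition $\hat{P}_{xy}\ket{\psi}=0$ is exactly invariance of $\ket{\psi}$ under the transposition of sites $x$ and $y$, and because $\Lambda$ is connected, adjacent transpositions generate $\mf{S}_N$, forcing simultaneous symmetry under all nearest-neighbor swaps to imply total symmetry.

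The last step is a routine linear-algebraic observation: for any finite family of Hermitian projectors $\{\hat{P}_i\}$, an operator $\hat{O}$ admits a decomposition $\hat{O} = \sum_i \hat{A}_i\hat{P}_i$ if and only if $\mr{Im}(\hat{O}^\dagger)\subseteq \sum_i \mr{Im}(\hat{P}_i)$. One direction is immediate; the other is obtained by picking a basis $\{\ket{e_\alpha}\}$ of $\mc{H}$, expanding each $\hat{O}^\dagger\ket{e_\alpha}$ along images of the $\hat{P}_i$'s, and reading off the matrix entries of the $\hat{A}_i^\dagger$ from that expansion. Applied to our $\hat{O}$, the hypothesis $\hat{O}\,\mr{Sym}^N(\mf{h}^s)=0$ is equivalent to $\mr{Im}(\hat{O}^\dagger)\subseteq V$, which by the previous paragraph is spanned by images of $\{\hat{P}_x,\hat{P}_{xy}\}$; this delivers the decomposition \eqref{eq:lemma main}.

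I expect the main conceptual content to lie in identifying the correct pair of local projectors and verifying that their combined kernel collapses onto $\mr{Sym}^N(\mf{h}^s)$; the rest is bookkeeping. A subtle point worth stressing is the choice of $\hat{P}_{xy}$ as the projector onto $\wedge^2(\mf{h}^s)$ rather than the naive $\hat{\mb{1}}_{xy}-\hat{\Pi}^{\mr{Sym}^2(\mf{h}^s)}_{xy}$: the chosen definition makes the two families $\{\hat{P}_x\}$ and $\{\hat{P}_{xy}\}$ cleanly complementary, avoids double-counting of the ``site not in $\mf{h}^s$'' contribution, and keeps every projector strictly local (on one or two sites) as required by the statement.
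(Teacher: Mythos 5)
Your proposal is correct, and it takes a genuinely different route from the paper's proof. The paper proceeds constructively from the Young-symmetrizer resolution of the identity: it writes $\hat{O}=\hat{O}\sum_{T\neq(N)}\hat{c}_T^R$, factors a transposition $t_{a_1,b_1}$ out of the column antisymmetrizer $b_T$ of each non-trivial tableau, and then telescopes $1-t_{a_1,b_1}$ along a doubled nearest-neighbor path to produce strictly local two-site swap projectors; the reduction to $\mf{h}^s\subset\mf{h}$ is handled separately by splitting off $\hat{O}(1-\hat{P}^s)$ and expanding $1-\hat{P}^s$ by inclusion--exclusion. Your argument instead exhibits a single \emph{universal} pair of Hermitian local projector families ($\hat{P}_x = \mathbb{1}_x-\hat{\Pi}^s_x$ and $\hat{P}_{xy}$ onto $\wedge^2(\mf{h}^s)$), shows via connectivity of $\Lambda$ that their joint kernel is exactly $\mr{Sym}^N(\mf{h}^s)$ (adjacent transpositions generating $\mf{S}_N$), and then invokes the elementary equivalence $\hat{O}=\sum_i\hat{A}_i\hat{P}_i \Leftrightarrow \mr{Im}(\hat{O}^\dagger)\subseteq\sum_i\mr{Im}(\hat{P}_i)$, applied to $\mr{Im}(\hat{O}^\dagger)\subseteq\mr{Sym}^N(\mf{h}^s)^\perp$. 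Your route is shorter and more conceptual: it cleanly separates the representation-theoretic content (identifying the kernel) from the purely linear-algebraic content (the right-ideal criterion), and the projectors you obtain are honest orthogonal projectors (the paper's $1-\widehat{\mr{SWAP}}$ is only proportional to one, and its $\wedge^2$ lives in the full $\mf{h}\otimes\mf{h}$). What the paper's longer construction buys in exchange is an explicit algorithm whose intermediate object, the \emph{local} Young projectors $\hat{c}^R_{T;X}$, are reused directly in Theorem~\ref{thm:local decomp} and Section~\ref{sec:quasi-local decomp}; a paper taking your route would have to redo some of that bookkeeping later. Both proofs establish the same worst-case (possibly non-local) coefficients, as the lemma's remark allows.
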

\begin{remark}
    Eq.\,\eqref{eq:lemma main} is an existence statement. The decomposition constructed below may be far from optimal: it should be viewed as a ``worst-case'' bound that certifies locality of the projector factors, not efficiency of the expansion.
\end{remark}
\begin{remark}
    While $\hat{P}_x$ and $\hat{P}_{xy}$ are strictly local (support on one site or two neighboring sites), the coefficient operators  $\hat{o}^{(1)}_{[x]}$ and $\hat{o}^{(2)}_{[xy]}$ are not assumed local and may be long-ranged. In the following, a bracket $[x]$ (or $[xy]$) in subscripts indicates a label of operators and does not necessarily imply strict locality (i.e., the support of operators is in general larger).
\end{remark}

This lemma already captures the structural core of the SM-type mechanism: any operator that vanishes on the ferromagnetic scar manifold does so because it contains local ``filters'' that remove the symmetric sector. 

Our proof proceeds as follows: we first treat the case $\mf{h}^s=\mf{h}$ (the scar manifold is the totally symmetric subspace of the full on-site Hilbert space). Once this is established, the general case $\mf{h}^s\subset\mf{h}$ follows by a short reduction.

\begin{proof}
\textbf{Step 1: the case $\mf{h}^s=\mf{h}$.} 

We assume that $\hat{O}$ annihilates $\mr{Sym}^N(\mf{h})$. Using the Young symmetrizer decomposition of $\mc{H}$ (with $(N)$ denoting the one-row Young diagram), we have 
\begin{equation}
    \hat{O}\sum_T\hat{c}^R_T=\hat{O}\sum_{T\not=(N)}\hat{c}_T^R=\sum_{T\not=(N)}\hat{O}\hat{c}_T^R.
\end{equation}
Hence it suffices to show that, for each $T\not=(N)$, the operator $\hat{O}\hat{c}_T^R$ can be expanded into terms containing strictly local projectors. 

To this end, we inspect $\hat{c}_T^R$ and the corresponding tableau $T$: since $T$ has at least two rows, its column-preserving subgroup $\mf{Q}_T$ is non-trivial and factorizes into symmetric groups associated with columns of height$\geq2$ (also see the right panel of Fig.\,\ref{fig:tableau}),
\begin{equation}
    \mf{Q}_T\cong\mf{S}_{\lambda'_1}\times\cdots\times\mf{S}_{\lambda'_M},
\end{equation}
where $M$ is the number of columns of length larger than one.

For each non-trivial factor $\mf{S}_{\lambda'_j}$, one can choose a transposition $t_{a_j,b_j}$ and decompose the group into cosets of its alternating subgroup $\mf{A}_{\lambda'_j}$:
\begin{equation}
    \mf{S}_{\lambda'_j}=\mf{A}_{\lambda'_j}\cup\mf{A}_{\lambda'_j}t_{a_j,b_j},
\end{equation}
This yields a convenient factorization of the anti-symmetrizing part $b_T$ of the Young symmetrizer,
\begin{equation}
    b_T=\prod_{j=1}^M\left(\sum_{\sigma\in\mf{A}_{\lambda'_j}}\sigma\big(1-t_{a_j,b_j}\big)\right).
\end{equation}
Because the factors in the product commute, we can regroup terms and re-write $b_T$ as
\begin{equation}\label{eq:antisym}
    b_T=\sideset{}{'}\sum_\sigma\mr{sgn}(\sigma)\sigma\left(1-t_{a_1,b_1}\right),
\end{equation}
where $\sum\nolimits'_\sigma$ denotes the sum over the subgroup $\mf{A}_{\lambda'_1}\times\mf{S}_{\lambda'_2}\times\cdots\times\mf{S}_{\lambda'_M}$. Combining with the symmetrizing part $a_T=\sum_{\tau\in\mf{P}_T}\tau$, the Young symmetrizer becomes
\begin{equation}
    c_T=\left(\sum_{\tau\in\mf{P}_T}\tau\right)\left(\sideset{}{'}\sum_\sigma\mr{sgn}(\sigma)\sigma(1-t_{a_1,b_1})\right).
\end{equation}
Translating this group algebra identity into the right action on $\mc{H}$ (using $\hat{\sigma}^R=\hat{\sigma}^{-1}$), we obtain the corresponding operator identity,
\begin{equation}
    \hat{c}_T^R=\sum_{\tau\in\mf{P}_T}\sideset{}{'}\sum_\sigma\mr{sgn}(\sigma)\hat{\tau}^{-1}\hat{\sigma}^{-1}(1-\widehat{\mr{SWAP}}_{a_1,b_1}),
\end{equation}
where $\what{\mr{SWAP}}_{a,b}$ implements the site swap between sites $a$ and $b$,
\begin{equation}
    \what{\mr{SWAP}}_{a,b}\ket*{e_a}_a\otimes\ket*{e_b}_b=\ket*{e_b}_a\otimes\ket*{e_a}_b.
\end{equation}
The factor $1-\widehat{\mr{SWAP}}_{a_1,b_1}$ is proportional to the projector $\hat{P}_{a_1,b_1}$ onto the antisymmetric subspace under the site swap $a_1\leftrightarrow b_1$.

\begin{figure}
    \centering
    \tikzset{every picture/.style={line width=0.75pt}} 
\begin{tikzpicture}[x=0.75pt,y=0.75pt,yscale=-1.2,xscale=1.2]
\draw  [fill={rgb, 255:red, 0; green, 0; blue, 0 }  ,fill opacity=1 ] (101,145.5) .. controls (101,143.01) and (103.01,141) .. (105.5,141) .. controls (107.99,141) and (110,143.01) .. (110,145.5) .. controls (110,147.99) and (107.99,150) .. (105.5,150) .. controls (103.01,150) and (101,147.99) .. (101,145.5) -- cycle ;
\draw  [fill={rgb, 255:red, 0; green, 0; blue, 0 }  ,fill opacity=1 ] (141,145.5) .. controls (141,143.01) and (143.01,141) .. (145.5,141) .. controls (147.99,141) and (150,143.01) .. (150,145.5) .. controls (150,147.99) and (147.99,150) .. (145.5,150) .. controls (143.01,150) and (141,147.99) .. (141,145.5) -- cycle ;
\draw  [fill={rgb, 255:red, 0; green, 0; blue, 0 }  ,fill opacity=1 ] (181,145.5) .. controls (181,143.01) and (183.01,141) .. (185.5,141) .. controls (187.99,141) and (190,143.01) .. (190,145.5) .. controls (190,147.99) and (187.99,150) .. (185.5,150) .. controls (183.01,150) and (181,147.99) .. (181,145.5) -- cycle ;

\draw  [fill={rgb, 255:red, 0; green, 0; blue, 0 }  ,fill opacity=1 ] (101,185.5) .. controls (101,183.01) and (103.01,181) .. (105.5,181) .. controls (107.99,181) and (110,183.01) .. (110,185.5) .. controls (110,187.99) and (107.99,190) .. (105.5,190) .. controls (103.01,190) and (101,187.99) .. (101,185.5) -- cycle ;
\draw  [fill={rgb, 255:red, 0; green, 0; blue, 0 }  ,fill opacity=1 ] (141,185.5) .. controls (141,183.01) and (143.01,181) .. (145.5,181) .. controls (147.99,181) and (150,183.01) .. (150,185.5) .. controls (150,187.99) and (147.99,190) .. (145.5,190) .. controls (143.01,190) and (141,187.99) .. (141,185.5) -- cycle ;
\draw  [fill={rgb, 255:red, 0; green, 0; blue, 0 }  ,fill opacity=1 ] (181,185.5) .. controls (181,183.01) and (183.01,181) .. (185.5,181) .. controls (187.99,181) and (190,183.01) .. (190,185.5) .. controls (190,187.99) and (187.99,190) .. (185.5,190) .. controls (183.01,190) and (181,187.99) .. (181,185.5) -- cycle ;

\draw  [fill={rgb, 255:red, 0; green, 0; blue, 0 }  ,fill opacity=1 ] (101,105.5) .. controls (101,103.01) and (103.01,101) .. (105.5,101) .. controls (107.99,101) and (110,103.01) .. (110,105.5) .. controls (110,107.99) and (107.99,110) .. (105.5,110) .. controls (103.01,110) and (101,107.99) .. (101,105.5) -- cycle ;
\draw  [fill={rgb, 255:red, 0; green, 0; blue, 0 }  ,fill opacity=1 ] (141,105.5) .. controls (141,103.01) and (143.01,101) .. (145.5,101) .. controls (147.99,101) and (150,103.01) .. (150,105.5) .. controls (150,107.99) and (147.99,110) .. (145.5,110) .. controls (143.01,110) and (141,107.99) .. (141,105.5) -- cycle ;
\draw  [fill={rgb, 255:red, 0; green, 0; blue, 0 }  ,fill opacity=1 ] (181,105.5) .. controls (181,103.01) and (183.01,101) .. (185.5,101) .. controls (187.99,101) and (190,103.01) .. (190,105.5) .. controls (190,107.99) and (187.99,110) .. (185.5,110) .. controls (183.01,110) and (181,107.99) .. (181,105.5) -- cycle ;

\draw    (105.5,184) -- (145.33,184) ;
\draw [shift={(130.42,184)}, rotate = 180] [fill={rgb, 255:red, 0; green, 0; blue, 0 }  ][line width=0.08]  [draw opacity=0] (10.72,-5.15) -- (0,0) -- (10.72,5.15) -- (7.12,0) -- cycle    ;
\draw    (144.42,147.08) -- (144.42,186.92) ;
\draw [shift={(144.42,160.5)}, rotate = 90] [fill={rgb, 255:red, 0; green, 0; blue, 0 }  ][line width=0.08]  [draw opacity=0] (10.72,-5.15) -- (0,0) -- (10.72,5.15) -- (7.12,0) -- cycle    ;
\draw    (144.5,106) -- (144.5,145.83) ;
\draw [shift={(144.5,119.42)}, rotate = 90] [fill={rgb, 255:red, 0; green, 0; blue, 0 }  ][line width=0.08]  [draw opacity=0] (10.72,-5.15) -- (0,0) -- (10.72,5.15) -- (7.12,0) -- cycle    ;
\draw    (147.42,145.08) -- (147.42,184.92) ;
\draw [shift={(147.42,170)}, rotate = 270] [fill={rgb, 255:red, 0; green, 0; blue, 0 }  ][line width=0.08]  [draw opacity=0] (10.72,-5.15) -- (0,0) -- (10.72,5.15) -- (7.12,0) -- cycle    ;
\draw    (145.5,104) -- (185.33,104) ;
\draw [shift={(170.42,104)}, rotate = 180] [fill={rgb, 255:red, 0; green, 0; blue, 0 }  ][line width=0.08]  [draw opacity=0] (10.72,-5.15) -- (0,0) -- (10.72,5.15) -- (7.12,0) -- cycle    ;
\draw    (145.5,107) -- (185.33,107) ;
\draw [shift={(158.92,107)}, rotate = 0] [fill={rgb, 255:red, 0; green, 0; blue, 0 }  ][line width=0.08]  [draw opacity=0] (10.72,-5.15) -- (0,0) -- (10.72,5.15) -- (7.12,0) -- cycle    ;
\draw    (147.42,147.92) -- (147.42,108.08) ;
\draw [shift={(147.42,134.5)}, rotate = 270] [fill={rgb, 255:red, 0; green, 0; blue, 0 }  ][line width=0.08]  [draw opacity=0] (10.72,-5.15) -- (0,0) -- (10.72,5.15) -- (7.12,0) -- cycle    ;
\draw    (105.5,187) -- (145.33,187) ;
\draw [shift={(118.92,187)}, rotate = 0] [fill={rgb, 255:red, 0; green, 0; blue, 0 }  ][line width=0.08]  [draw opacity=0] (10.72,-5.15) -- (0,0) -- (10.72,5.15) -- (7.12,0) -- cycle    ;
\end{tikzpicture}
    \caption{The doubled loop connects the bottom left point and the top right point.}
    \label{fig:loop}
\end{figure}

At this point, the only remaining issue is that $(a_1,b_1)$ need not be nearest neighbors. We therefore express the factor $1-\what{\mr{SWAP}}_{a_1,b_1}$ as a sum of local projectors. To this end, we choose a self-avoiding nearest-neighbor path $\gamma=(\gamma_l)_{l=0}^L$ with $\gamma_0=a_1$ and $\gamma_L=b_1$ (see Fig.\,\ref{fig:loop}), and define the doubled loop $\widetilde{\gamma}$ by
\begin{equation}
    \widetilde{\gamma}_l=\begin{dcases}
        \gamma_l\,\,&(0\leq l\leq L)\\
        \gamma_{2L-l}\,\,&(L+1\leq l\leq 2L).
    \end{dcases}
\end{equation}
Then the transposition $t_{\widetilde{\gamma}_0,\widehat{\gamma}_L}$ can be written as a product of neighboring transpositions along the loop:
\begin{equation}
    t_{\wtil{\gamma}_0,\wtil{\gamma}_L}=t_{\wtil{\gamma}_0,\wtil{\gamma}_1}t_{\wtil{\gamma}_1,\wtil{\gamma}_2}\cdots t_{\wtil{\gamma}_{L-1},\wtil{\gamma}_L}t_{\wtil{\gamma}_{L-2},\wtil{\gamma}_{L-1}}\cdots t_{\wtil{\gamma}_0,\wtil{\gamma}_1}.
\end{equation}
Consequently, $1-t_{\wtil{\gamma}_0,\wtil{\gamma}_L}(\in\mb{C}[\mf{S}_N])$ admits the telescoping expression,
\begin{equation}
    \begin{split}
        1-t_{\gamma_0,\gamma_L}&=1-t_{\wtil{\gamma}_0,\wtil{\gamma}_1}+t_{\wtil{\gamma}_0,\wtil{\gamma}_1}-t_{\wtil{\gamma}_0,\wtil{\gamma}_1}t_{\wtil{\gamma}_1,\wtil{\gamma}_2}+\cdots\\
        &=\sum_{k=0}^{2L}\prod_{l=1}^kt_{\wtil{\gamma}_{l-1},\wtil{\gamma}_l}\big(1-t_{\wtil{\gamma}_l,\wtil{\gamma}_{l+1}}\big),
    \end{split}
\end{equation}
where $t_{\wtil{\gamma}_{-1},\wtil{\gamma}_0}\coloneqq1$. Under the representation on $\mc{H}$, each $1-t_{\widetilde{\gamma}_k,\widetilde{\gamma}_{k+1}}$ becomes $1-\what{\mr{SWAP}}_{\wtil{\gamma}_k,\wtil{\gamma}_{k+1}}$, i.e., a strictly local two-site projector on nearest neighbors, annihilating the symmetric part of wavefunction. This proves Lemma\,\ref{lemma:main} for $\mf{h}^s=\mf{h}$.

\textbf{Step 2: the general case $\mf{h}_x^s\subset\mf{h}_x$.}

Now let $\mf{h}_x^s\subset\mf{h}_x$. We introduce the on-site  projector $\hat{P}^s_x\in\mc{L}(\mf{h}_x)$ onto $\mf{h}^s_x$ and the associated global projector $\hat{P}^s\coloneqq\prod_{x\in\Lambda}\hat{P}_x^s$. The operator $\hat{O}$ trivially decomposes to 
\begin{equation}
    \hat{O}=\hat{O}\hat{P}^s+\hat{O}\big(1-\hat{P}^s\big).
\end{equation}
The second term trivially annihilates $\mr{Sym}^N(\mf{h}^s)$ since $1-\hat{P}^s$ annihilates any vector supported entirely in $\bigotimes_{x\in\Lambda}\mf{h}_x^s$. Therefore, $\hat{O}\hat{P}^s$ must individually annihilate $\mr{Sym}^N(\mf{h}^s)$. Applying Step 1 to the restricted Hilbert space $\bigotimes_{x\in\Lambda}\mf{h}^s_x$, we obtain a decomposition of $\hat{O}\hat{P}^s$ into terms containing strictly local projectors, i.e., $\hat{O}\hat{P}^s=\sum_{\expval{x,y}}\hat{o}^{(2)}_{[xy]}\hat{P}_{xy}$.

Finally, we expand the remaining factor $1-\hat{P}^s$ into a sum of local projectors. Indeed,
\begin{equation}\label{eq:hs projector}
    1-\hat{P}^s=1-\prod_{x\in\Lambda}\hat{P}_x^s=1-\prod_{x\in\Lambda}^N\Big(1-\big(1-\hat{P}_x^s\big)\Big),
\end{equation}
and the right-hand side is a (finite) linear combination of products of the strictly local projectors $1-\hat{P}^s_x$. This completes the proof of Lemma\,\ref{lemma:main}.
\end{proof}

Lemma\,\ref{lemma:main} immediately implies a structural decomposition for Hamiltonians preserving the ferromagnetic scar manifold. Suppose that $\mr{Sym}^N(\mf{h}^s)$ is invariant under $\hat{H}$. We can split $\hat{H}$ as,
\begin{equation}\label{eq:symmetrization decomp}
    \hat{H}=\big(\hat{H}-\mc{S}(\hat{H})\big)+\mc{S}(\hat{H})\eqqcolon\hat{H}_A+\hat{H}_Z.
\end{equation}
Invariance of $\mr{Sym}^N(\mf{h}^s)$ under $\hat{H}$ implies that $\hat{H}$ and $\mc{S}(\hat{H})$ act identically on this subspace. Equivalently, for any $\sigma\in\mf{S}_N$,
\begin{equation}
    \hat{H}\mr{Sym}^N(\mf{h}_s)=\hat{\sigma}\hat{H}\hat{\sigma}^{-1}\mr{Sym}^N(\mf{h}_s),
\end{equation}
so the ``asymmetric part'' $\hat{H}_A=\hat{H}-\mc{S}(\hat{H})$ annihilates $\mr{Sym}^N(\mf{h}^s)$. By Lemma\,\ref{lemma:main}, $\hat{H}_A$ therefore decomposes into terms containing local projectors that annihilate $\mr{Sym}^N(\mf{h}^s)$ locally. Moreover, by Eq.\,\eqref{eq:collective su(V)}, the symmetric part $\hat{H}_Z(=\mc{S}(\hat{H}))$ lies in the algebra generated by collective $\mf{su}(\mf{h}_s)$ operators; the simplest non-trivial term is a Zeeman-type on-site term.

This argument can be summarized as follows.
\begin{corollary}\label{cor:H decomp}
    If $\hat{H}$ leaves $\mr{Sym}^N(\mf{h}^s)$ invariant, it can be decomposed as,
    \begin{equation}
        \begin{split}
            &\hat{H}=\hat{H}_A+\hat{H}_Z\\
            &\hat{H}_A\mr{Sym}^N(\mf{h}^s)=0\\
            &\hat{H}_Z\ket{\psi_\lambda}=E_\lambda\ket{\psi_\lambda},
        \end{split}
    \end{equation}
    where $\ket{\psi_\lambda}\in\mr{Sym}^N(\mf{h}^s)$ is an eigenstate of a totally symmetric operator $\hat{H}_Z$. Furthermore, $\hat{H}_A$ is decomposed as a sum of terms containing local projectors,
    \begin{equation}
        \hat{H}_A=\sum_x\hat{h}_{[x]}^{(1)}\hat{P}_x+\sum_{\expval{x,y}}\hat{h}^{(2)}_{[xy]}\hat{P}_{xy}.
    \end{equation}
\end{corollary}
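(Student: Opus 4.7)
The plan is to build the decomposition directly from the symmetrization map $\mc{S}$ of Eq.\,\eqref{eq:symmetrization}, then use Lemma\,\ref{lemma:main} to handle $\hat{H}_A$ and Theorem\,\ref{thm:GL(V)} to handle $\hat{H}_Z$. Concretely, I would set $\hat{H}_Z\coloneqq\mc{S}(\hat{H})$ and $\hat{H}_A\coloneqq\hat{H}-\mc{S}(\hat{H})$, so that the splitting itself is tautological. The only substantive step is to show that $\hat{H}_A$ actually annihilates the scar manifold $\mr{Sym}^N(\mf{h}^s)$; once this is established, the remaining claims follow essentially for free.

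The core calculation is the following. Fix $\ket{\phi}\in\mr{Sym}^N(\mf{h}^s)$. Since $\ket{\phi}$ is totally symmetric, $\hat{\sigma}^{-1}\ket{\phi}=\ket{\phi}$ for every $\sigma\in\mf{S}_N$. By the invariance hypothesis, $\hat{H}\ket{\phi}\in\mr{Sym}^N(\mf{h}^s)$ is again totally symmetric, so $\hat{\sigma}\hat{H}\ket{\phi}=\hat{H}\ket{\phi}$ as well. Composing the two, $\hat{\sigma}\hat{H}\hat{\sigma}^{-1}\ket{\phi}=\hat{H}\ket{\phi}$ for all $\sigma$, and averaging over $\mf{S}_N$ yields $\mc{S}(\hat{H})\ket{\phi}=\hat{H}\ket{\phi}$, i.e.\ $\hat{H}_A\ket{\phi}=0$. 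Since $\ket{\phi}$ was arbitrary, $\hat{H}_A$ annihilates the whole subspace.

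With this annihilation in hand, Lemma\,\ref{lemma:main} applied to $\hat{O}=\hat{H}_A$ immediately produces the sum-of-local-projectors form $\hat{H}_A=\sum_x\hat{h}^{(1)}_{[x]}\hat{P}_x+\sum_{\expval{x,y}}\hat{h}^{(2)}_{[xy]}\hat{P}_{xy}$. For $\hat{H}_Z=\mc{S}(\hat{H})$: by construction it commutes with every $\hat{\sigma}$, so by Theorem\,\ref{thm:GL(V)} (equivalently Eq.\,\eqref{eq:collective su(V)}) it lies in $\mr{Comm}(\mf{S}_N)$ and is a polynomial in collective $\mf{su}(\mf{h})$ generators. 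Because $\hat{H}$ preserves $\mr{Sym}^N(\mf{h}^s)$ and $\hat{H}_A$ vanishes on it, $\hat{H}_Z=\hat{H}-\hat{H}_A$ also preserves this subspace. Assuming $\hat{H}$ is Hermitian, $\mc{S}$ preserves Hermiticity (each $\hat{\sigma}$ is unitary), so $\hat{H}_Z$ restricted to $\mr{Sym}^N(\mf{h}^s)$ is a Hermitian operator on a finite-dimensional space and therefore admits the claimed eigenstates $\{\ket{\psi_\lambda}\}$ with eigenvalues $E_\lambda$.

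There is no genuine obstacle in this corollary: Lemma\,\ref{lemma:main} and Theorem\,\ref{thm:GL(V)} do all of the heavy lifting, and the only new ingredient is the short averaging identity that yields the annihilation of $\hat{H}_A$ on the scar manifold. The only point that warrants mild attention is distinguishing ``$\hat{H}_Z$ commutes with permutations'' (which is automatic from the definition via $\mc{S}$) from ``$\hat{H}_Z$ preserves the smaller subspace $\mr{Sym}^N(\mf{h}^s)\subset\mr{Sym}^N(\mf{h})$'' (which uses the invariance hypothesis, transmitted through the vanishing of $\hat{H}_A$).
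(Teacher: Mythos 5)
Your proof is correct and follows essentially the same route as the paper: split $\hat{H}$ via the symmetrization map $\mc{S}$, use invariance of $\mr{Sym}^N(\mf{h}^s)$ to show $\hat{\sigma}\hat{H}\hat{\sigma}^{-1}$ and $\hat{H}$ agree pointwise on the scar manifold (hence $\hat{H}_A$ annihilates it), then invoke Lemma\,\ref{lemma:main} for the projector form and Theorem\,\ref{thm:GL(V)}/Eq.\,\eqref{eq:collective su(V)} for $\hat{H}_Z$. The only additions beyond the paper's terse argument are helpful clarifications (the explicit pointwise averaging identity and the Hermiticity remark for diagonalizing $\hat{H}_Z$), not a different method.
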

We again note that the operators $\hat{h}^{(1)}_{[x]}$ and $\hat{h}^{(2)}_{[xy]}$ can in principle be non-local at this point.

\section{Locality implies a Zeeman term}\label{sec:locality}
\subsection*{Locality}
Section\,\ref{sec:HA} (in particular Lemma\,\ref{lemma:main} or Cor.\,\ref{cor:H decomp}) required neither a spatial notion of locality nor any detailed structure of the exact eigenstates beyond permutation symmetry. In this section, we impose two additional assumptions:

\begin{itemize}
    \item[\textbf{1}.] the Hamiltonian is local, i.e., a sum of operators supported on connected subregions
    \item[\textbf{2}.] the totally symmetric weight basis $\{\ket*{\psi_{\vec{m}}^\Lambda}\}_{\vec{m}}\subset\mr{Sym}^N(\mf{h}^s)$ (Eq.\,\eqref{eq:Cartan basis state} or Eq.\,\eqref{eq:permutation expression}) consists of exact eigenstates 
\end{itemize}

Under these assumptions, we show that the only term that can act non-trivially within the scar manifold is necessarily a Zeeman term, i.e., a linear combination of on-site Cartan generators.

To adapt the representation-theoretic tools of Sec.\,\ref{sec:HA} to the local setting, let $\mf{S}_X\subset\mf{S}_N$ denote the subgroup of permutations acting non-trivially only on sites inside a subregion $X\subset\Lambda$. We write the corresponding Young symmetrizer (right action) as $\hat{c}^R_{T;X}$, which therefore projects the Hilbert space onto the irrep labeled by $T$ on the region $X$,
\begin{equation}
    \hat{c}^R_{T;X}\mc{H}\cong\mb{S}_{T;X}\mf{h}\otimes\mc{H}_{X^c},
\end{equation}
where $\mc{H}_{X^c}=\bigotimes_{x\not\in X}\mf{h}_x$ is the Hilbert space of the complement $X^c$.

Finally, we denote the projector onto $\mf{h}^s$ on a subregion $X$ by
\begin{equation}
    \hat{P}_X^s\coloneqq\bigotimes_{x\in X}\hat{P}^s_x.
\end{equation}

\subsection*{Local decomposition}
We begin with a lemma on operators on $\bigotimes_{x\in\Lambda}\mf{h}_x^s$.

\begin{lemma}\label{lemma:su(V)}
    We regard $\mf{h}^s$ as the fundamental representation of $\mf{su}(\mf{h}^s)$. 
    \begin{itemize}
    \item[(1)] Any operator $\hat{O}\in\mc{L}(\bigotimes_{x=1}^N\mf{h}_x^s)$ commuting with all collective Cartan generators $\hat{H}_{\mr{tot}}^i=\sum_{x=1}^N\hat{H}_x^i$ can be written as a polynomial in the on-site Cartan generators and permutation operators,
    \begin{equation}
        \hat{O}\in\left\langle\hat{H}_x^i;\,1\leq i\leq d_s-1,\,1\leq x\leq N\right\rangle\otimes\Big\langle\hat{\sigma};\sigma\in\mf{S}_N\Big\rangle.
    \end{equation}
    \item[(2)] If $\mf{h}^s\cong\mb{C}^2$, then any operator $\hat{O}\in\mc{L}(\bigotimes_{x=1}^N\mf{h}_x^s)$ belongs to
    \begin{equation}
        \hat{O}\in\Big\langle\hat{\sigma}_x^3,\,\hat{\sigma}_x^+;1\leq x\leq N\Big\rangle\otimes\Big\langle\hat{\sigma};\sigma\in\mf{S}_N\Big\rangle\cup\Big\langle\hat{\sigma}_x^3,\,\hat{\sigma}_x^-;1\leq x\leq N\Big\rangle\otimes\Big\langle\hat{\sigma};\sigma\in\mf{S}_N\Big\rangle,
    \end{equation}
    where $\hat{\sigma}^3$ and $\hat{\sigma}^\pm$ are the usual Pauli-$z$ operator and ladder operator, respectively.
    \end{itemize}
\end{lemma}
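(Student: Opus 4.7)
The plan is to analyze $\hat{O}$ through its matrix elements in the product weight basis $\{\ket{\vec{m}_1,\ldots,\vec{m}_N}\}$ of $\bigotimes_x\mf{h}^s_x$. Three structural inputs will play a role: diagonal operators in the weight basis are polynomials in the on-site Cartan generators (by Lagrange interpolation applied site by site); permutations rearrange sites on such basis vectors; and, decisively for part (1), in the fundamental representation of $\mf{su}(\mf{h}^s)$ the total weight together with the site count $N$ uniquely determines the on-site occupation multiset.

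For part (1), I would first observe that commuting with all $\hat{H}^i_{\mr{tot}}$ is equivalent to preserving each total-weight subspace, so $\hat{O}$ is a linear combination of matrix units $\ket{\vec{m}'}\bra{\vec{m}}$ with $\sum_x\vec{m}_x=\sum_x\vec{m}'_x$. The uniqueness-of-multiset claim would then be reduced to the injectivity of the linear map $(n_1,\ldots,n_{d_s})\mapsto(\sum_j n_j\vec{w}_j,\sum_j n_j)\in\mb{R}^{d_s-1}\oplus\mb{R}$, which in turn amounts to showing that the $d_s\times d_s$ matrix with columns $(\vec{w}_j,1)$ has full rank. This will follow because the fundamental weights $\vec{w}_j$ sum to zero, whereas the all-ones row is not in the row-span of the weight coordinates (any such linear combination, summed over $j$, would give $0=d_s$). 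Once $(\vec{m}_x)$ and $(\vec{m}'_x)$ are forced to be permutations of each other, every matrix unit in the commutant factors as $\hat{\sigma}\cdot(\ket{\vec{m}}\bra{\vec{m}})$, i.e., a permutation composed with a diagonal operator built from on-site Cartan generators, which proves part (1).

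For part (2), I would grade the full operator algebra by the $U(1)$ charge $\hat{N}_+=\sum_x\tfrac{1+\hat{\sigma}^3_x}{2}$ and decompose $\hat{O}=\sum_k\hat{O}_k$ according to the net change in $\hat{N}_+$. The generators $\{\hat{\sigma}^3_x,\hat{\sigma}^+_x,\hat{\sigma}\}$ never decrease $\hat{N}_+$, so the algebra $A$ they generate sits inside the $k\geq 0$ sector; conversely, any matrix unit $\ket{e'}\bra{e}$ with $|e'|_+\geq|e|_+$ can be realized in $A$ by choosing a permutation $\sigma$ that maps the pluses of $e$ into a subset of the plus-positions of $e'$, projecting onto $\ket{\sigma e}\bra{\sigma e}$, and then applying $|e'|_+-|e|_+$ raising operators at the remaining plus positions of $e'$. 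The symmetric argument using $\hat{\sigma}^-$ yields the algebra $B$, and together they decompose any $\hat{O}$ into an element of $A$ plus an element of $B$, which I take to be the intended reading of $\hat{O}\in A\cup B$ (a set-theoretic union would be strictly false, since e.g.\ $\hat{\sigma}^1_x=\hat{\sigma}^+_x+\hat{\sigma}^-_x$ lies in neither algebra alone).

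The main obstacle is the combinatorial injectivity in part (1): it is precisely what forces the hypothesis that $\mf{h}^s$ carry the \emph{fundamental} representation of $\mf{su}(\mf{h}^s)$, because in a reducible or non-fundamental setting two distinct on-site multisets can share a total weight and the commutant becomes strictly larger than $\langle\hat{H}^i_x,\hat{\sigma}\rangle$. Everything else, including the explicit factorization of matrix units and the $U(1)$ grading in part (2), is essentially bookkeeping with diagonal operators and permutation algebra.
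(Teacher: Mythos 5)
Your proof is correct and follows essentially the same route as the paper: expand in matrix units of the product weight basis, observe that the Cartan constraint (plus the fixed site count) forces the on-site configurations $(\vec m_x)$ and $(\vec m'_x)$ to be permutations of one another, then factorize each matrix unit as a diagonal projector times a permutation; for part (2), grade by the $U(1)$ charge. Your rank argument for the $(d_s\times d_s)$ matrix with columns $(\vec w_j,1)$ simply makes explicit what the paper does tacitly by working with all $d_s$ diagonal projectors $\hat E^{ii}$ rather than the $d_s-1$ Cartan generators. Your parenthetical is also a legitimate catch: the ``$\cup$'' in the statement of part (2) cannot be a set-theoretic union (e.g. $\hat\sigma^1_x$ lies in neither algebra), and the paper's own proof in fact decomposes $\hat O$ into a sum of one element from each algebra, consistent with your reading.
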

We defer the proof to Appendix\,\ref{proof:su(V)}.

We can now state the main result of this section.
\begin{theorem}\label{thm:local decomp}
    Suppose that $\hat{H}$ is local,
    \begin{equation}
        \hat{H}=\sum_{X\subset\Lambda}\hat{h}_X.
    \end{equation}
    If every totally symmetric weight basis state $\ket*{\psi^\Lambda_{\vec{m}}}\in\mr{Sym}^N(\mf{h}^s)$ is an exact eigenstate of $\hat{H}$, then $\hat{H}$ admits a decomposition
    \begin{equation}
        \hat{H}=\sum_{x\in\Lambda}\hat{h}^{(1)}_{[x]}\hat{P}_x+\sum_{\expval{x,y}}\hat{h}^{(2)}_{[xy]}\hat{P}_{xy}+\sum_{x\in\Lambda}\hat{h}_x,
    \end{equation}
    where the first two sums are annihilator terms built from local projectors $\hat{P}_x$ and $\hat{P}_{xy}$ that locally annihilate $\mr{Sym}^N(\mf{h}^s)$, and the last term $\sum_{x\in\Lambda}\hat{h}_x$ is a Zeeman term, i.e., a linear combination of on-site Cartan generators $\{\hat{H}_x^i\}_i$.
\end{theorem}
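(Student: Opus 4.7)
The plan is to start from Corollary~\ref{cor:H decomp}, which already writes $\hat{H}=\hat{H}_A+\hat{H}_Z$ with $\hat{H}_A$ annihilating $\mathrm{Sym}^N(\mf{h}^s)$ and $\hat{H}_Z=\mc{S}(\hat{H})$ totally symmetric. Lemma~\ref{lemma:main} applied to $\hat{H}_A$ immediately produces the first two local-projector sums in the theorem, so the entire remaining task is to show that $\hat{H}_Z$ agrees with an on-site Zeeman term $\sum_{x\in\Lambda}\hat{h}_x$ on $\mathrm{Sym}^N(\mf{h}^s)$; any discrepancy will be an annihilator that we fold back into $\hat{H}_A$.

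The first reduction uses the eigenstate hypothesis together with the multiplicity-one structure of the weight spaces in the irrep $\mathrm{Sym}^N(\mf{h}^s)$: $\hat{H}_A$ kills every scar state, so $\hat{H}_Z$ must act diagonally in the weight basis. Splitting $\hat{H}_Z$ into its Cartan-weight-preserving and weight-shifting parts, the weight-shifting part annihilates the scar manifold (its image lies in strictly different weight sectors) and is absorbed into $\hat{H}_A$ via Lemma~\ref{lemma:main}. The surviving piece $\hat{H}_Z^{\mr{wp}}$ is totally symmetric and commutes with every collective Cartan $\hat{H}^i_{\mr{tot}}$. Projecting to $\bigotimes_x\mf{h}^s_x$ with $\hat{P}^s$ and invoking Lemma~\ref{lemma:su(V)}(1), the restriction $\hat{P}^s\hat{H}_Z^{\mr{wp}}\hat{P}^s$ is a symmetric polynomial in the on-site Cartan generators $\hat{H}^i_x$; the complementary pieces $\hat{H}_Z^{\mr{wp}}(1-\hat{P}^s)$ and $(1-\hat{P}^s)\hat{H}_Z^{\mr{wp}}$ are automatically annihilators and are moved into $\hat{H}_A$ through the one-site projector decomposition in Eq.~\eqref{eq:hs projector}.

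The main and hardest step is to reduce this symmetric polynomial in the $\hat{H}^i_x$'s down to a degree-one expression (the Zeeman term) modulo annihilators. For $d_s=2$ the reduction is essentially automatic from Lemma~\ref{lemma:su(V)}(2): since $(\hat{\sigma}^z_x)^2=\hat{P}^s_x$, every higher power of the on-site Cartan reduces to a constant plus a local projector, so only $\sum_x\hat{\sigma}^z_x$ can survive with a genuine $\vec{m}$-dependence. For general $d_s$, locality of $\hat{H}$ enters crucially through the structure $\hat{H}_Z=\mc{S}(\hat{H})=\sum_X\mc{S}(\hat{h}_X)$: each $\mc{S}(\hat{h}_X)$ is a collective polynomial of degree $\le|X|$, so $E_{\vec{m}}$ is a bounded-degree polynomial in $\vec{m}$, and one must argue that higher-degree symmetric monomials in the $\hat{H}^i_x$, when evaluated on the multiplicity-free irrep $\mathrm{Sym}^N(\mf{h}^s)$, differ from on-site Cartan combinations only by operators absorbable into $\hat{H}_A$.

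I would close the loop using the commutator identity $[\hat{H}_Z,\hat{J}^\alpha]\ket{\psi^\Lambda_{\vec{m}}}=(E_{\vec{m}-\alpha}-E_{\vec{m}})\hat{J}^\alpha\ket{\psi^\Lambda_{\vec{m}}}$ for each simple root $\alpha$: the left-hand side is itself the symmetrization of a local operator, while the right-hand side factors as a single local $\hat{J}^\alpha$ times the weight-dependent scalar $E_{\vec{m}-\alpha}-E_{\vec{m}}$. A careful matrix-element comparison on scar states at neighboring weights, iterated with Lemma~\ref{lemma:main} to dispose of the resulting annihilators, should force $E_{\vec{m}-\alpha}-E_{\vec{m}}$ to be $\vec{m}$-independent---i.e.\ $E_{\vec{m}}$ affine in $\vec{m}$---which completes the identification of $\hat{H}_Z$ with an on-site Zeeman term up to an annihilator. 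Making this ``locality forces linearity'' step quantitative is where I expect most of the technical work to lie.
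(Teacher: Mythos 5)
Your high-level reduction is sound but differs from the paper's route in a way that hides the decisive difficulty. You begin from $\hat H_Z=\mathcal S(\hat H)$ and try to extract the Zeeman term from the globally symmetrized operator directly. The paper instead works \emph{term by term}: it decomposes each local $\hat h_X$ as $\hat h_X(1-\hat P_X^s)+(1-\hat P_X^s)\hat h_X\hat P_X^s+\hat P_X^s\hat h_X\hat P_X^s$, resolves the last piece with the \emph{local} Young symmetrizers $\hat c^R_{T;X}$ on $\mathfrak S_X$, splits off the weight-non-preserving part as an annihilator, and ends up with local weight-preserving pieces $\hat h_X^{\mathrm{cons}}$ (polynomials in on-site Cartans and $\mathfrak S_X$-permutations, from Lemma~\ref{lemma:su(V)}). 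Since permutations act trivially on the scar states, each $\hat h_X^{\mathrm{cons}}$ acts diagonally via a function $F_X$ of the on-site Cartans, and the paper assembles these into an effective operator $\hat F=\sum_X F_X[\{\hat H^i_x\}_{x\in X}]\bigotimes_{x\in X}\dyad{\vec m_x}_x$. The crucial point is that $\hat F$, \emph{unlike} $\mathcal S(\hat H)$, is manifestly a sum of bounded-support diagonal terms, and the eigenstate condition forces $\hat F=\mathcal S(\hat F)$; a permutation-invariant operator that is also a sum of bounded-support terms must have degree one in collective Cartan generators, which is the Zeeman term.

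This is precisely where your argument has a genuine gap. You write the key inference as a commutator relation $[\hat H_Z,\hat J^\alpha]\ket{\psi_{\vec m}^\Lambda}=(E_{\vec m'}-E_{\vec m})\hat J^\alpha\ket{\psi^\Lambda_{\vec m}}$, but $\hat H_Z=\mathcal S(\hat H)$ is \emph{not} a local operator (symmetrization over $\mathfrak S_N$ spreads every summand across the lattice), so neither side of this identity carries a useful locality constraint, and your claim that each $\mathcal S(\hat h_X)$ having degree $\le|X|$ implies ``$E_{\vec m}$ of bounded degree'' is true but does not by itself rule out degree two or higher. As you yourself anticipate, ``making this step quantitative'' is exactly the missing content; the paper's local decomposition avoids it entirely by producing $\hat F$ as a \emph{local diagonal} object on which the permutation-invariance-plus-locality rigidity argument applies cleanly. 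A secondary issue: you split $\hat H_Z$ into weight-preserving and weight-shifting parts \emph{before} projecting with $\hat P^s$; since the Cartan generators $\hat H^i_x$ are defined only on $\mathfrak h^s_x$, that split is not unambiguous on $(\mathfrak h^s_x)^\perp$, and the paper deliberately projects with $\hat P_X^s$ \emph{first} precisely to sidestep this. Your overall plan can likely be repaired, but only by effectively re-running the paper's local Young-symmetrizer decomposition; the $\mathcal S(\hat H)$ starting point is not powerful enough on its own to make the locality-forces-linearity step.
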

\begin{proof}
    We first fix a connected region $X\subset\Lambda$. We start by splitting each local term $\hat{h}_X$ according to whether it preserves the local target space $\bigotimes_{x\in X}\mf{h}_x^s$,
    \begin{equation}\label{eq:hx decomp}
        \hat{h}_X=\hat{h}_X\big(1-\hat{P}_X^s\big)+\big(1-\hat{P}_X^s\big)\hat{h}_X\hat{P}_X^s+\hat{P}_X^s\hat{h}_X\hat{P}_X^s.
    \end{equation}
    The first term $\hat{h}_X(1-\hat{P}_X^s)$ annihilates $\mr{Sym}^N(\mf{h}^s)$ trivially. Moreover, using Eq.\,\eqref{eq:hs projector}, it can be expanded as a sum of (strictly) local projectors. 
    
    The second term pushes states in $\mr{Sym}^N(\mf{h}^s)\subset\bigotimes_{x\in\Lambda}\mf{h}_x^s$ into the orthogonal complement: 
    \begin{equation}
        \big(1-\hat{P}_X^s\big)\hat{h}_X\hat{P}_X^s\mr{Sym}^N(\mf{h}^s)\subset\left(\bigotimes_{x\in\Lambda}\mf{h}_x^s\right)^\perp.
    \end{equation}
    Since the third term $\hat{P}_X^s\hat{h}_X\hat{P}_X^s$ acts within $\bigotimes_{x\in\Lambda}\mf{h}^s_x$, invariance of $\mr{Sym}^N(\mf{h}^s)$ forces cancellation of the leakage contributions (at least) when summed over $X$,
    \begin{equation}\label{eq:(1-P)hP annihilation}
        \sum_{X\subset\Lambda}\big(1-\hat{P}_X^s\big)\hat{h}_X\hat{P}_X^s\mr{Sym}^N(\mf{h}^s)=0.
    \end{equation}
    By Lemma\,\ref{lemma:main}, this annihilating operator can be written as a sum of terms containing local projectors that annihilate $\mr{Sym}^N(\mf{h}^s)$.

    It remains to analyze the last term in Eq.\,\eqref{eq:hx decomp}. Decompose $\hat{P}_X^s\hat{h}_X\hat{P}_X^s$ using the local Young symmetrizer resolution over $\mf{S}_X$:
    \begin{equation}\label{eq:PhP decomp}
        \hat{P}_X^s\hat{h}_X\hat{P}_X^s=\hat{P}_X^s\hat{h}_X\hat{P}_X^s\left(\sum_T\hat{c}^R_{T;X}\right)=\hat{P}_X^s\hat{h}_X\hat{P}_X^s\hat{c}^R_{(|X|);X}+\sum_{T\not=(|X|)}\hat{P}_X^s\hat{h}_X\hat{P}_X^s\hat{c}^R_{T;X}.
    \end{equation}
    For $T\not=(|X|)$, the projector $\hat{c}^R_{T;X}$ contains anti-symmetrizing components on $X$, hence it annihilates the totally symmetric sector (this is also trivial from the Littlewood-Richardson coefficient). Therefore, the second term in Eq.\,\eqref{eq:PhP decomp} locally annihilates $\mr{Sym}^N(\mf{h}^s)$. 
    
    Next, we split the symmetric component into weight-preserving and weight-non-preserving parts:
    \begin{equation}
        \hat{P}_X^s\hat{h}_X\hat{P}_X^s\hat{c}^R_{(|X|);X}=\hat{h}_X^{\mr{cons}}+\hat{h}_X^{\mr{non}}.
    \end{equation}
    By definition, $\hat{h}_X^{\mr{non}}$ changes the total weight (i.e., the eigenvalues of $\{\hat{H}_{\mr{tot}}^i\}_i$). Thus, when acting on $\ket*{\psi_{\vec{m}}^\Lambda}$, it either annihilates the state or produces a component lying in a different $\{\hat{H}_{\mr{tot}}^i\}_i$-sector, orthogonal to the original one. Since the weight-preserving terms $\{\hat{h}_X^{\mr{cons}}\}_X$ never generate such components, the condition that every $\ket*{\psi^\Lambda_{\vec{m}}}$ is an exact eigenstate of $\hat{H}$ forces the total weight-changing contribution to vanish on $\mr{Sym}^N(\mf{h}^s)$:
    \begin{equation}
        \sum_{X\subset\Lambda}\hat{h}_X^{\mr{non}}\mr{Sym}^N(\mf{h}^s)=0.
    \end{equation}
    Again, Lemma\,\ref{lemma:main} implies that this contribution is an annihilator built from local projectors.

    We now focus on the weight-preserving part. By Lemma\,\ref{lemma:su(V)}, each $\hat{h}_X^{\mr{cons}}$ can be expressed using local Cartan generators and permutations supported on $X$,
    \begin{equation}
        \hat{h}^{\mr{cons}}_X\in\left\langle\hat{H}^i_x;\,1\leq i\leq d_s-1,\,x\in X\right\rangle\otimes\Big\langle\hat{\sigma};\,\sigma\in\mf{S}_X\Big\rangle.
    \end{equation}
    Because $\ket*{\psi^\Lambda_{\vec{m}}}$ is totally symmetric, $\hat{\sigma}\ket*{\psi^\Lambda_{\vec{m}}}=\ket*{\psi^\Lambda_{\vec{m}}}$ for $\forall\vec{m}$ and $\forall\sigma\in\mf{S}_N$. Consequently, the action of $\hat{h}_X^{\mr{cons}}$ on $\ket*{\psi^\Lambda_{\vec{m}}}$ is ``diagonal'' in the sense that there exists an operator-valued function $F_X$ such that 
    \begin{equation}\label{eq:hX cons action}\begin{split}
        \hat{h}_X^{\mr{cons}}\ket*{\psi^\Lambda_{\vec{m}}}=F_X\big[\{\hat{H}_x^i\}_{x\in X}\big]\ket*{\psi^\Lambda_{\vec{m}}}.
        \end{split}
    \end{equation}
    Using the explicit permutation-symmetrized form of $\ket*{\psi^\Lambda_{\vec{m}}}$ (Eq.\,\eqref{eq:permutation expression}), Eq.\,\eqref{eq:hX cons action} becomes
    \begin{equation}\begin{split}\label{eq:h_X cartan}
        \hat{h}_X^{\mr{cons}}\ket*{\psi^\Lambda_{\vec{m}}}&=\frac{1}{\mc{N}_{\vec{m}}}\sum_{\sigma\in\mf{S}_N}F_X\big[\{\hat{H}_x^i\}_{x\in X}\big]\bigotimes_{x\in\Lambda}\ket*{\vec{m}_{\sigma(x)}}_x\\
        &=\frac{1}{\mc{N}_{\vec{m}}}\sum_{\sigma\in\mf{S}_N}F_X\big[\{m_{\sigma(x)}^i\}_{x\in X}\big]\bigotimes_{x\in\Lambda}\ket*{\vec{m}_{\sigma(x)}}_x,
    \end{split}
    \end{equation}
    where $\{\vec{m}_x\}_{x\in\Lambda}$ is a fixed reference configuration whose orbit under permutations generates $\ket*{\psi^\Lambda_{\vec{m}}}$. 
    
    Summing Eq.\,\eqref{eq:h_X cartan} over all local regions $X$ shows that the action of the full Hamiltonian on $\ket*{\psi^\Lambda_{\vec{m}}}$ is equivalent to that of a diagonal operator $\hat{F}$,
    \begin{equation}
        \begin{split}
            \hat{H}\ket*{\psi^\Lambda_{\vec{m}}}&=\frac{1}{\mc{N}_{\vec{m}}}\sum_{\sigma\in\mf{S}_N}\hat{H}\bigotimes_{x\in\Lambda}\ket*{\vec{m}_{\sigma(x)}}_x\\
            &=\frac{1}{\mc{N}_{\vec{m}}}\sum_{\sigma\in\mf{S}_N}\sum_{X\subset\Lambda}F_X\big[\{m^i_{\sigma(x)}\}_{x\in X}\big]\bigotimes_{x\in\Lambda}\ket*{\vec{m}_{\sigma(x)}}_x\\
            &\equiv\hat{F}\ket*{\psi^\Lambda_{\vec{m}}}\\
            \hat{F}&\coloneqq\sum_{X\subset\Lambda}F_X\big[\{m_x^i\}_{x\in X}\big]\bigotimes_{x\in X}\dyad*{\vec{m}_x}_x.
        \end{split}
    \end{equation}
    Since $\ket*{\psi^\Lambda_{\vec{m}}}$ is an equal superposition over all permutations of the reference configuration $\bigotimes_{x\in\Lambda}\ket*{\vec{m}_x}_x$, and $\hat{F}$ is diagonal, we have  
    \begin{equation}\begin{split}
        &\begin{dcases}
            \hat{F}\ket*{\psi^\Lambda_{\vec{m}}}=E\ket*{\psi^\Lambda_{\vec{m}}}\\
            \ket*{\psi^\Lambda_{\vec{m}}}=\frac{1}{\mc{N}'_{\vec{m}}}\sum_{\sigma\in\mf{S}_N}\bigotimes_{x\in\Lambda}\ket*{\vec{m}_{\sigma(x)}}_x\,\,(\text{Eq.\,\eqref{eq:permutation expression}})\\
            \hat{F}\bigotimes_{x\in\Lambda}\ket*{\vec{m}_x}_x=\left(\sum_{X\subset\Lambda}F_X\big[\{m_x^i\}_{x\in X}\big]\right)\bigotimes_{x\in\Lambda}\ket*{\vec{m}_x}_x
        \end{dcases}\\
        &\rightarrow\sum_{X\subset\Lambda}F_X[\{m_x^i\}_{x\in X}]=\sum_{X\subset\Lambda}F_X[\{m_{\sigma(x)}^i\}_{x\in X}]\,\,\text{for }\forall\sigma\in\mf{S}_N.
    \end{split}
    \end{equation}
    The condition in the last line is true only if $\hat{F}$ is totally symmetric under site permutations:
    \begin{equation}\begin{split}
        \mc{S}(\hat{F})=\hat{F}\in\mr{Comm}(\mf{S}_N)
        \end{split}
    \end{equation}
    By Eq.\,\eqref{eq:collective su(V)}, any element of $\mr{Comm}(\mf{S}_N)$ is a polynomial in the collective $\mf{su}(\mf{h}^s)$ generators. Finally, we note that $\hat{F}$ inherits locality from $\hat{H}=\sum_{X\subset\Lambda}\hat{h}_X$. Among permutation-invariant polynomials, the only forms compatible with strict locality (up to an additive constant) are one-body collective Cartan terms (any polynomial of degree$\geq2$ is necessarily all-to-all), i.e., a Zeeman term,
    \begin{equation}
        \hat{F}=\sum_{i=1}^{d_s-1}h_i\sum_{x\in\Lambda}\hat{H}_x^i.
    \end{equation}
    This identifies the unique non-annihilating contribution within $\mr{Sym}^N(\mf{h}^s)$ with a Zeeman term and completes the proof.
\end{proof}

\section{Local annihilators and Dzyaloshinskii--Moriya-type terms}\label{sec:quasi-local decomp}
In the proof of Lemma\,\ref{lemma:main}, an annihilator $\hat{O}$ of $\mr{Sym}^N(\mf{h}^s)$ is written as a sum of terms containing strictly local projectors (Eq.\,\eqref{eq:lemma main}). However, the accompanying operator factors $\hat{o}_x^{(1)}$ and $\hat{o}_{xy}^{(2)}$ can be highly non-local in general, simply because Lemma\,\ref{lemma:main} does not assume any spatial structure of $\hat{O}$. By contrast, the proof of Thm.\,\ref{thm:local decomp} shows that the annihilating operators arising from a local Hamiltonian naturally appear as sums of local terms, namely,
\begin{equation}\label{eq:annihilator summary}
    \begin{split}
        \sum_{X\subset\Lambda}\hat{h}_X(1-\hat{P}_X^s),\,\sum_{X\subset\Lambda}\big(1-\hat{P}_X^s\big)\hat{h}_X\hat{P}_X^s,\,\sum_{X\subset\Lambda}\sum_{T\not=(|X|)}\hat{P}_X^s\hat{h}_X\hat{P}_X^s\hat{c}_{T;X}^R,\,\sum_{X\subset\Lambda}\hat{h}_X^{\mr{non}},
    \end{split}
\end{equation}
each of which annihilates $\mr{Sym}^N(\mf{h}^s)$. 

These terms split into two qualitatively different classes. The first and the third sums in Eq.\,\eqref{eq:annihilator summary} annihilate $\mr{Sym}^N(\mf{h}^s)$ in a straightforward local manner: the projectors $(1-\hat{P}_X^s)$ and $\hat{c}_{T;X}^R$ already annihilate $\mr{Sym}^N(\mf{h}^s)$ by themselves. The second and the fourth sums are subtler: each local term may act non-trivially on $\mr{Sym}^N(\mf{h}^s)$, and the annihilation is enforced only by the assumption that $\mr{Sym}^N(\mf{h}^s)$ is invariant under the full Hamiltonian, i.e., contributions into the orthogonal complement must cancel in the total sum. Consequently, while Lemma\,\ref{lemma:main} guarantees that these operators admit decompositions into terms containing local projectors, the corresponding prefactors may be non-local. 

In this section, we show that the second term $\sum_{X\subset\Lambda}(1-\hat{P}_X^s)\hat{h}_X\hat{P}_X^s$ in Eq.\,\eqref{eq:annihilator summary} in fact admits a decomposition into a sum of strictly local annihilators, and we further constrain the possible structure of the fourth term $\sum_{X\subset\Lambda}\hat{h}_X^{\mr{non}}$ for $\mf{h}^2\cong\mb{C}^2$, where Dzyaloshinskii-Moriya (DM)-type interactions appear naturally.

To this end, we first extend Lemma\,\ref{lemma:main} to the local setting.
\begin{lemma}\label{lemma:local HA}
    If an operator $\hat{O}_X$ acts non-trivially only on a subregion $X\subset\Lambda$ and annihilates $\mr{Sym}^N(\mf{h}^s)$, then it can be expressed as a sum of local terms supported within $X$, which contain local projectors,
    \begin{equation}\label{eq:annihilator X decomp}
        \hat{O}_X=\sum_{x\in X}\hat{o}_{[x]}^{(1)}\hat{P}_x+\sum_{\langle x,y\rangle\subset X}\hat{o}_{[xy]}^{(2)}\hat{P}_{xy},
    \end{equation}
    where $\langle x,y\rangle\subset X$ denotes nearest neighbors within $X$, and $\hat{P}_x$ and $\hat{P}_{xy}$ are local projectors acting non-trivially on $x$ and $(x,y)$, respectively, each annihilating $\mr{Sym}^N(\mf{h}^s)$. Here, $\hat{o}^{(1)}_{[x]}$ and $\hat{o}^{(2)}_{[xy]}$ are also local and act non-trivially only within $X$.
\end{lemma}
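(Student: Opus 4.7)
The strategy is a two-step reduction to Lemma~\ref{lemma:main}. Since $\hat{O}_X$ acts trivially outside $X$, I would write $\hat{O}_X=\hat{o}_X\otimes\hat{\mb{1}}_{X^c}$ with $\hat{o}_X\in\mc{L}(\bigotimes_{x\in X}\mf{h}_x)$, trade the global annihilation condition for an effective annihilation condition on $\hat{o}_X$ living on $X$ alone, apply Lemma~\ref{lemma:main} directly on the subsystem $X$, and then verify that the resulting strictly local projectors still annihilate the full scar manifold $\mr{Sym}^N(\mf{h}^s)$.

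The crucial reduction is the following. The condition that $\hat{O}_X$ annihilates $\mr{Sym}^N(\mf{h}^s)$ is equivalent to $\hat{o}_X$ annihilating the subspace $W_X\subset\bigotimes_{x\in X}\mf{h}_x$ defined as the smallest subspace for which $\mr{Sym}^N(\mf{h}^s)\subset W_X\otimes\mc{H}_{X^c}$ (equivalently, the image of $\mr{Sym}^N(\mf{h}^s)$ under the partial trace over $X^c$). The key identification to be proved is
\[
W_X=\mr{Sym}^{|X|}(\mf{h}^s).
\]
The inclusion $W_X\subset\mr{Sym}^{|X|}(\mf{h}^s)$ is immediate since any $\ket{\psi}\in\mr{Sym}^N(\mf{h}^s)$ is in particular invariant under the subgroup $\mf{S}_X\times\mf{S}_{X^c}\subset\mf{S}_N$; this is the Littlewood--Richardson special case $N_{\lambda\mu(N)}\neq 0\Rightarrow\lambda=(|X|),\,\mu=(|X^c|)$ recalled at the end of Sec.~\ref{subsec:rep of SN}. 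The reverse inclusion follows from Eq.~\eqref{eq:permutation expression}: given any Dicke basis state of $\mr{Sym}^{|X|}(\mf{h}^s)$ with on-site weights $(\vec{m}_x)_{x\in X}$, pick any compatible extension $(\vec{m}_x)_{x\in X^c}$ and form the global aggregate weight $\vec{m}$; then $\ket*{\psi^\Lambda_{\vec{m}}}$, expanded in the product basis of $X\otimes X^c$, manifestly contains that Dicke state as a component, so it lies in $W_X$.

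With this reduction in hand, I would apply Lemma~\ref{lemma:main} directly to $\hat{o}_X$, viewing $X$ itself as a lattice of $|X|$ sites carrying the target subspace $\mf{h}^s$. Since $X$ is connected, the self-avoiding nearest-neighbor paths used in the proof of Lemma~\ref{lemma:main} can be chosen entirely within $X$, so every operator and every projector appearing in the resulting decomposition has support in $X$. This yields a decomposition of the form claimed in Eq.~\eqref{eq:annihilator X decomp}, with local projectors $\hat{P}_x,\hat{P}_{xy}$ that annihilate $\mr{Sym}^{|X|}(\mf{h}^s)$.

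Finally, these strictly local projectors automatically annihilate $\mr{Sym}^N(\mf{h}^s)$ as well. Indeed, by the identification from Step~1, a one-site projector $\hat{P}_x$ annihilating $\mr{Sym}^{|X|}(\mf{h}^s)$ must satisfy $\hat{P}_x\mf{h}^s=0$ (as seen by probing uniform Dicke states), and a two-site projector $\hat{P}_{xy}$ must annihilate the symmetric sector $\mr{Sym}^2(\mf{h}^s)$ of $\mf{h}^s_x\otimes\mf{h}^s_y$; in both cases the same conclusion applies to $\mr{Sym}^N(\mf{h}^s)$, whose one- and two-site reductions live in precisely those subspaces. The main obstacle is the subspace identification $W_X=\mr{Sym}^{|X|}(\mf{h}^s)$, where the permutation-invariance of $\mr{Sym}^N(\mf{h}^s)$ and the Littlewood--Richardson structure enter essentially; once it is secured, the remainder of the argument is a direct invocation of Lemma~\ref{lemma:main} followed by an elementary check on projector supports.
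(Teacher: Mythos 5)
Your proposal is correct and follows essentially the same route as the paper: decompose the global Dicke states over $X$ and $X^c$ via the Littlewood--Richardson structure (the paper writes $\ket*{\psi^\Lambda_{\vec m}}=\sum_{\vec m',\vec m''}c_{\vec m}^{\vec m'\vec m''}\ket*{\psi^X_{\vec m'}}\otimes\ket*{\psi^{X^c}_{\vec m''}}$ with linearly independent $X^c$ factors), observe that ranging over all global weights forces $\hat O_X$ to annihilate every $\ket*{\psi^X_{\vec m'}}$, and then invoke Lemma~\ref{lemma:main} on $\bigotimes_{x\in X}\mf{h}_x$. Your reformulation in terms of the reduced support $W_X=\mr{Sym}^{|X|}(\mf{h}^s)$ and your final check that the resulting local projectors annihilate the full $\mr{Sym}^N(\mf{h}^s)$ are just more explicit packagings of the same argument.
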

We leave the proof in Appendix\,\ref{proof:local HA}.


\begin{prop}\label{prop:(1-P)}
    The operator $\sum_{X\subset\Lambda}(1-\hat{P}_X^s)\hat{h}_X\hat{P}_X^s$, which annihilates $\mr{Sym}^N(\mf{h}^s)$, can be decomposed as a sum of strictly local operators, each of which annihilates $\mr{Sym}^N(\mf{h}^s)$.
\end{prop}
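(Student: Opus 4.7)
The plan is to partition $\sum_{X\subset\Lambda}(1-\hat{P}_X^s)\hat{h}_X\hat{P}_X^s$ by the ``leakage pattern'' $Y\subset\Lambda$ of sites whose output lies in $(\mf{h}^s)^\perp$, exploit orthogonality across distinct patterns to extract a per-block annihilation statement, and then invoke Lemma\,\ref{lemma:local HA} block by block.

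Setting $\hat{Q}_x^s\coloneqq 1-\hat{P}_x^s$, inclusion--exclusion gives $1-\hat{P}_X^s=\sum_{\emptyset\not=Y\subset X}\prod_{y\in Y}\hat{Q}_y^s\prod_{x\in X\setminus Y}\hat{P}_x^s$, so that $(1-\hat{P}_X^s)\hat{h}_X\hat{P}_X^s=\sum_{\emptyset\not=Y\subset X}\hat{h}_X^{(Y)}$ with $\hat{h}_X^{(Y)}\coloneqq\prod_{y\in Y}\hat{Q}_y^s\prod_{x\in X\setminus Y}\hat{P}_x^s\,\hat{h}_X\hat{P}_X^s$. The crucial observation is that for any $\ket{\psi}\in\mr{Sym}^N(\mf{h}^s)\subset\bigotimes_x\mf{h}_x^s$, the vector $\hat{h}_X^{(Y)}\ket{\psi}$ lies in $\Pi_Y\mc{H}$, where $\Pi_Y\coloneqq\prod_{y\in Y}\hat{Q}_y^s\prod_{x\in\Lambda\setminus Y}\hat{P}_x^s$ is the global projector selecting configurations with exactly the sites in $Y$ outside $\mf{h}^s$: sites in $X\setminus Y$ are forced into $\mf{h}^s$ by the explicit projectors, while sites outside $X$ stay in $\mf{h}^s$ because $\hat{h}_X$ acts trivially there and the input already lies in $\bigotimes_x\mf{h}_x^s$. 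Since $\{\Pi_Y\}_{Y\subset\Lambda}$ is an orthogonal resolution of the identity on $\mc{H}$, contributions with distinct leakage patterns cannot cancel.

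Regrouping by $Y$ yields $\sum_X(1-\hat{P}_X^s)\hat{h}_X\hat{P}_X^s=\sum_{\emptyset\not=Y\subset\Lambda}T_Y$ with $T_Y\coloneqq\sum_{X\supset Y}\hat{h}_X^{(Y)}$; only $X\supset Y$ contribute because any $\hat{Q}_y^s$ with $y\notin X$ would annihilate $\hat{h}_X\hat{P}_X^s\ket{\psi}$ on site $y$. Orthogonality of the $\Pi_Y$'s then upgrades the global annihilation hypothesis into $T_Y\mr{Sym}^N(\mf{h}^s)=0$ for each $Y$ individually. By locality of $\hat{H}$, only finitely many $X$ with $\hat{h}_X\ne 0$ contain any given $Y$, so $T_Y$ is itself supported within a bounded neighborhood $X_Y\coloneqq\bigcup_{X\supset Y,\,\hat{h}_X\ne 0}X$. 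Applying Lemma\,\ref{lemma:local HA} to each $T_Y$ then produces a decomposition into strictly local terms of the form $\hat{o}_{[x]}\hat{P}_x$ and $\hat{o}_{[xy]}\hat{P}_{xy}$ supported within $X_Y$, whose local projector factors annihilate $\mr{Sym}^N(\mf{h}^s)$ by themselves. Summing over $Y$ delivers the required decomposition.

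The main obstacle is the orthogonal-subspace bookkeeping: one must verify carefully that $\Pi_Y\mc{H}$ for different leakage patterns are genuinely pairwise orthogonal, that sites outside $X$ really do remain inside $\mf{h}^s$ under $\hat{h}_X$ applied to inputs from $\bigotimes_x\mf{h}_x^s$, and that only $X\supset Y$ survive in the regrouped sum $T_Y$. Once these orthogonality observations are in place, the proof reduces cleanly to inclusion--exclusion plus Lemma\,\ref{lemma:local HA} together with the bounded-range assumption.
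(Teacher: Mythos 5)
Your proposal is correct and takes essentially the same route as the paper: inclusion--exclusion over leakage sites $Y\subset X$, regrouping the sum by leakage pattern, establishing per-$Y$ annihilation, bounding the support of $T_Y$ by locality, and then invoking Lemma~\ref{lemma:local HA} block by block. The one (welcome) variation is that you extract the per-block annihilation $T_Y\,\mr{Sym}^N(\mf{h}^s)=0$ directly from the pairwise orthogonality of the full resolution $\{\Pi_Y\}_Y$ (with $\hat{P}^s$ enforced on $\Lambda\setminus Y$), whereas the paper reaches the same conclusion via the coarser projectors $\hat{Q}_Y^s$, a selection rule, and an iteration from maximal $Y$ downward; your version is the more transparent of the two.
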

\begin{proof}
    We define the operator $\hat{Q}_X^s$ as,
    \begin{equation}
        \hat{Q}_X^s\coloneqq\prod_{x\in X}\big(1-\hat{P}_x^s\big).
    \end{equation}
    Expanding $1-\hat{P}_X^s$ by inclusion-exclusion on $X$ yields the decomposition,  
    \begin{equation}\label{eq:decomp prop 7.1}
        \big(1-\hat{P}_X^s\big)\hat{h}_X\hat{P}_X^s=\sum_{Y\subset X}\hat{Q}_Y^s\hat{P}_{X\setminus Y}^s\hat{h}_X\hat{P}_X^s.
    \end{equation}
    Consequently, the full second term in Eq.\,\eqref{eq:annihilator summary} can be reorganized as, 
    \begin{equation}
        \sum_{X\subset\Lambda}(1-\hat{P}_X^s)\hat{h}_X\hat{P}_X^s=\sum_{X\subset\Lambda}\hat{Q}_X^s\hat{O}_X,
    \end{equation}
    for suitable local operator $\hat{O}_X$. This reorganization preserves locality: $\hat{O}_X$ is supported on a region no smaller than $X$, and, crucially, outside $X$ the state remains inside $\bigotimes_{x\in X^c}\mf{h}_x^s$. More precisely,
    \begin{equation}\label{eq:Q_X^s action}
        \hat{Q}_X^s\hat{O}_X\mr{Sym}^N(\mf{h}^s)\subset\bigotimes_{x\in X}\big(\mf{h}^s_x\big)^\perp\bigotimes_{y\in X^c}\mf{h}^s_y.
    \end{equation}
    We now left-multiply Eq.\,\eqref{eq:decomp prop 7.1} by $\hat{Q}_Y^s$ for an arbitrary $Y\subset\Lambda$. Since the left-hand side annihilates $\mr{Sym}^N(\mf{h}^s)$, we have,
    \begin{equation}
        \begin{split}\label{eq:trivial Q_X^s identity}
        &\hat{Q}_Y^s\sum_{X\subset\Lambda}\hat{Q}_X^s\hat{O}_X\mr{Sym}^N(\mf{h}^s)=0.
        \end{split}
    \end{equation}
    Using Eq.\,\eqref{eq:Q_X^s action}, one obtains the ``selection rule'',
     \begin{equation}\label{eq:Q_X^s identity}
         \hat{Q}_Y^s\hat{Q}_X^s\hat{O}_X\mr{Sym}^N(\mf{h}^s)=\begin{dcases}
             \hat{Q}_X^s\hat{O}_X\mr{Sym}^N(\mf{h}^s)&(Y\subset X)\\
             0&(\text{else})
         \end{dcases},
     \end{equation}
     and therefore Eq.\,\eqref{eq:trivial Q_X^s identity} implies the local annihilation constraint,
     \begin{equation}
         \sum_{Y\subset X}\hat{Q}_X^s\hat{O}_X\mr{Sym}^N(\mf{h}^s)=0.
     \end{equation}
     By choosing $Y$ appropriately, one can isolate individual contributions $\hat{Q}^s_Y\hat{O}_Y$ for large enough $Y$. Lemma\,\ref{lemma:local HA} then implies that each such term admits a decomposition into strictly local operators containing projectors that annihilate $\mr{Sym}^N(\mf{h}^s)$. Iterating this argument over all supports completes the proof.
\end{proof}

As a consequence of Prop.\,\ref{prop:(1-P)}, the only potentially problematic contribution in Eq.\,\eqref{eq:annihilator summary} is $\sum_{X\subset\Lambda}\hat{h}_X^{\mr{non}}$, i.e., the sum of the weight-non-preserving terms in $\mc{L}(\bigotimes_{x\in\Lambda}\mf{h}_x^s)$. Therefore one can show that the absence of the non-preserving part implies the existence of strictly local decompositions of Hamiltonian.
\begin{corollary}
    Let $\hat{H}=\sum_{X\subset\Lambda}\hat{h}_X\in\mc{L}(\bigotimes_{x\in\Lambda}\mf{h}_x)$ be local. If every totally symmetric weight basis $\ket*{\psi^\Lambda_{\vec{m}}}\in\mr{Sym}^N(\mf{h}^s)$ is an exact eigenstate and each $\hat{P}_X^s\hat{h}_X\hat{P}_X^s$ is weight preserving, i.e., $[\hat{P}_X^s\hat{h}_X\hat{P}_X^s,\hat{H}_X^i]=0$ for $\forall i$, then $\hat{H}$ admits the decomposition,
    \begin{equation}
        \hat{H}=\sum_{x\in\Lambda}\hat{h}_{[x]}^{(1)}\hat{P}_x^{(1)}+\sum_{\langle x,y\rangle}\hat{h}^{(2)}_{[xy]}\hat{P}_{xy}^{(2)}+\sum_{x\in\Lambda}\hat{h}_x,
    \end{equation}
    where $\hat{P}_x^{(1)}$ and $\hat{P}_{xy}^{(2)}$ are local projectors annihilating $\mr{Sym}^N(\mf{h}^s)$, and $\hat{h}_{[x]}^{(1)}$ and $\hat{h}_{[xy]}^{(2)}$ are strictly local. The energy spectrum of $\ket*{\psi_{\vec{m}}}$ is determined by the last Zeeman term $\sum_{x\in\Lambda}\hat{h}_x$.
\end{corollary}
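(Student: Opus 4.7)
The plan is to revisit the proof of Thm.\,\ref{thm:local decomp} and reinspect the four annihilator families listed in Eq.\,\eqref{eq:annihilator summary}, showing that under the additional weight-preservation hypothesis each one already admits a strictly local decomposition. The Zeeman term $\sum_x\hat{h}_x$ is then taken over unchanged from the end of the proof of Thm.\,\ref{thm:local decomp}, so the remaining task is to upgrade the annihilator part from ``local projectors with possibly non-local prefactors'' to genuinely strictly local annihilators.

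Two of the four families require essentially no work. For the first family, the inclusion-exclusion identity in Eq.\,\eqref{eq:hs projector} applied to the region $X$ writes $\hat{h}_X(1-\hat{P}_X^s)$ as a sum of operators supported on $X$, each containing at least one on-site annihilator $(1-\hat{P}_x^s)$; these fit the required form $\hat{h}^{(1)}_{[x]}\hat{P}_x^{(1)}$. For the third family, $\hat{P}_X^s\hat{h}_X\hat{P}_X^s\hat{c}^R_{T;X}$ is manifestly supported on $X$, and for $T\not=(|X|)$ the local Young symmetrizer $\hat{c}^R_{T;X}$ annihilates $\mr{Sym}^N(\mf{h}^s)$ by the Littlewood--Richardson rule; this term can therefore be recast as a local prefactor times a strictly local projector on $X$ by the same argument used in the proof of Lemma\,\ref{lemma:local HA}. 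The second family, $\sum_X(1-\hat{P}_X^s)\hat{h}_X\hat{P}_X^s$, is handled directly by Prop.\,\ref{prop:(1-P)}, which already rewrites it as a sum of strictly local annihilators of $\mr{Sym}^N(\mf{h}^s)$.

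The crux is that the fourth family $\sum_X\hat{h}_X^{\mr{non}}$ must be made to vanish outright, and this is precisely where the new hypothesis enters. By construction $\hat{h}_X^{\mr{non}}$ is the weight-non-conserving component of $\hat{P}_X^s\hat{h}_X\hat{P}_X^s\hat{c}^R_{(|X|);X}$, i.e., the part that fails to commute with the local collective Cartan generators $\hat{H}^i_X\coloneqq\sum_{x\in X}\hat{H}^i_x$. The assumption $[\hat{P}_X^s\hat{h}_X\hat{P}_X^s,\hat{H}_X^i]=0$ therefore forces $\hat{h}_X^{\mr{non}}=0$ for every $X$ individually, eliminating the fourth family termwise rather than through the global cancellation argument employed in Thm.\,\ref{thm:local decomp}. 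Collecting the three remaining strictly local annihilator contributions and the Zeeman term then yields the stated decomposition, and the energy $E_{\vec m}$ of $\ket*{\psi^\Lambda_{\vec m}}$ is read off from $\sum_x\hat{h}_x$ exactly as in Thm.\,\ref{thm:local decomp}.

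The only conceptual obstacle I anticipate is verifying that the elimination of $\hat{h}_X^{\mr{non}}$ is genuinely termwise under the stated hypothesis, rather than merely after summation over $X$: one must check that $[\hat{P}_X^s\hat{h}_X\hat{P}_X^s,\hat{H}_X^i]=0$ indeed implies commutation with the global $\hat{H}^i_{\mr{tot}}$ on the image of $\hat{c}^R_{(|X|);X}$, so that ``weight preserving on $X$'' is equivalent to ``weight preserving globally'' for the relevant matrix elements. This is immediate because $\hat{P}_X^s\hat{h}_X\hat{P}_X^s$ acts as the identity on $X^c$, so $[\hat{P}_X^s\hat{h}_X\hat{P}_X^s,\hat{H}^i_{\mr{tot}}]=[\hat{P}_X^s\hat{h}_X\hat{P}_X^s,\hat{H}^i_X]$; beyond this, the argument reduces to bookkeeping among local supports.
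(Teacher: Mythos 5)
Your proposal is correct and matches the paper's (mostly implicit) reasoning: the paper presents this corollary as a direct consequence of Theorem~\ref{thm:local decomp} and Proposition~\ref{prop:(1-P)}, with the weight-preservation hypothesis killing $\hat{h}_X^{\mr{non}}$ termwise exactly as you describe. Your check that $[\hat{P}_X^s\hat{h}_X\hat{P}_X^s,\hat{H}^i_X]=0$ forces $\hat{h}_X^{\mr{non}}=0$ is the one substantive verification needed, and it holds because $\hat{c}^R_{(|X|);X}$ is permutation-symmetric on $X$ and hence also commutes with $\hat{H}^i_X$, so the factored operator $\hat{P}_X^s\hat{h}_X\hat{P}_X^s\hat{c}^R_{(|X|);X}$ inherits the commutation.
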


We cannot prove in full generality that the fourth term in Eq.\,\eqref{eq:annihilator summary} decomposes into a sum of locally annihilating operators. Nevertheless, for $\mf{h}^s\cong\mb{C}^2$, we can substantially 
restrict the allowed form of $\hat{h}_X^{\mr{non}}$, in a way that naturally includes DM-type interactions. The key ingredient is the following statement on linear independence.
\begin{lemma}\label{lemma:su(2) linearity}
    For the $l=2j+1$ dimensional representation of $SU(2)$, we consider an operator of the form,
    \begin{equation}
        \sum_{a,b}\xi_{a,b}\big(\hat{J}^3\big)^a\big(\hat{J}^+\big)^b+\sum_{c,d}\chi_{c,d}\big(\hat{J}^3\big)^c\big(\hat{J}^+\big)^d,\,\,a,b,c,d\in\mb{N},
    \end{equation}
    where $\hat{J}^3$ is the Cartan subalgebra ($z$-component of spin), and $\hat{J}^\pm$ is the ladder operator. If $a+b<l$ and $c+d<l$ for all appearing terms, and the operator is identically zero, then all coefficients $\xi_{a,b}$ and $\chi_{c,d}$ vanish.
\end{lemma}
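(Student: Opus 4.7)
The plan is to separate the operator identity by the net weight shift it induces on a weight basis, so that each weight sector becomes an independent polynomial-vanishing problem. (I read the second sum as using $\hat{J}^-$ rather than $\hat{J}^+$, since otherwise the two sums are manifestly the same operator and only their sum is constrained; this reading is also what makes the lemma nontrivial in the DM context.) Concretely, let $\ket*{m}$ for $-j\le m\le j$ be the standard weight basis with $\hat{J}^3\ket*{m}=m\ket*{m}$ and $\hat{J}^\pm\ket*{m}=c_m^\pm\ket*{m\pm1}$, where $c_m^+\ne0$ for $m<j$ and $c_m^-\ne0$ for $m>-j$. For fixed $b\ge1$, only the terms $(\hat{J}^3)^a(\hat{J}^+)^b$ contribute to the matrix element $\bra*{m+b}\cdot\ket*{m}$, since all other monomials produce a different weight shift; in particular the raising and lowering sums decouple sector by sector.

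Next, I would carry out the short explicit computation
\begin{equation*}
(\hat{J}^3)^a(\hat{J}^+)^b\ket*{m}=(m+b)^a\,C_m^{(b)}\ket*{m+b},\qquad C_m^{(b)}\coloneqq\prod_{k=0}^{b-1}c_{m+k}^+,
\end{equation*}
which is nonzero for every $m\in\{-j,\dots,j-b\}$. The vanishing hypothesis, projected onto $\ket*{m+b}$, then reduces to
\begin{equation*}
\sum_a\xi_{a,b}(m+b)^a=0\quad\text{for all }m\in\{-j,\dots,j-b\}.
\end{equation*}
Setting $P_b(x)\coloneqq\sum_a\xi_{a,b}x^a$, the constraint $a+b<l$ forces $\deg P_b\le l-b-1$, while $P_b$ is required to vanish at the $l-b$ distinct points $x\in\{-j+b,\dots,j\}$. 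By elementary polynomial interpolation, $P_b\equiv0$, hence $\xi_{a,b}=0$ for every $a$. Applying the same argument verbatim in the lowering sector (fixing $d\ge1$ and tracking $\bra*{m-d}\cdot\ket*{m}$) yields $\chi_{c,d}=0$.

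Finally, I would handle the diagonal block separately. The terms with $b=0$ and those with $d=0$ both contribute to matrix elements on $\ket*{m}$ itself, combining into a single polynomial in $m$ of degree strictly less than $l$ that must vanish at all $l$ weights $m\in\{-j,\dots,j\}$; the same interpolation step kills all remaining coefficients, with the understanding that $\xi_{a,0}$ and $\chi_{a,0}$ parametrize the same diagonal operator and so are identified.

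The only genuinely load-bearing ingredient, and the one I expect to present most carefully, is the degree-count matching: the hypothesis $a+b<l$ is exactly what guarantees that the polynomial we extract has strictly fewer coefficients than the number of weights at which it is forced to vanish. Everything else — the decoupling of different weight-shift sectors, the nonvanishing of the Clebsch–Gordan-type factor $C_m^{(b)}$ inside the allowed range, and the interpolation step — is essentially a bookkeeping exercise.
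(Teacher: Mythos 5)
Your proof is correct and follows essentially the same route as the paper's: you separate the operator identity by the net weight shift into independent sectors, act on a $\hat{J}^3$-eigenbasis to convert each sector into a polynomial-vanishing condition, and then observe that the degree constraint $a+b<l$ gives fewer unknown coefficients than there are weights at which the polynomial must vanish, so a Vandermonde/interpolation argument kills every coefficient. You are also right that the lemma's second sum is a typo for $\hat{J}^-$ (the paper's own proof uses $\hat{J}^-$ there), and your explicit treatment of the nonvanishing of the ladder-prefactor $C_m^{(b)}$ inside the allowed range and of the overlap between the $b=0$ and $d=0$ diagonal sectors is more careful than the paper's, which handles both points only implicitly.
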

We leave the proof in Appendix\,\ref{proof:su(2) linearity}.
\begin{prop}\label{prop:hnon C2 constraint}
    Suppose $\mf{h}_x^s\cong\mb{C}^2$. Then there exist local operators $\{\hat{h}_\mu\}_\mu\subset\expval{\hat{\sigma}_x^3,\hat{\sigma}^\pm_x;x\in\Lambda}$ and corresponding sets of permutations $\{\mf{R}_\mu\}_\mu\subset2^{\mf{S}_N}$ such that $\sum_{X\subset\Lambda}\hat{h}_X^{\mr{non}}$ admits the decomposition
    \begin{equation}
        \sum_{X\subset\Lambda}\hat{h}_X^{\mr{non}}=\sum_\mu\sum_{\sigma\in\mf{R}_\mu}c_\sigma\hat{\sigma}\hat{h}_\mu\hat{\sigma}^{-1}+\sum_{\langle x,y\rangle}\hat{h}_{[xy]}^{(2)}\hat{P}^{(2)}_{xy},
    \end{equation}
    where $\hat{h}_{[xy]}^{(2)}$ is a local term and $\hat{P}_{xy}^{(2)}$ is a local projector annihilating $\mr{Sym}^N(\mf{h}^s)$. The coefficient $\{c_\sigma\}_{\sigma\in\mf{R}}$ satisfies
    \begin{equation}\label{eq:hnon cond}
        \sum_{\sigma\in\mf{R}}c_\sigma=0.
    \end{equation}
    
\end{prop}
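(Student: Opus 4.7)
The plan is to reduce the condition that $\sum_X \hat{h}_X^{\mr{non}}$ annihilates the scar manifold to an equivalent statement about collective $SU(2)$ polynomials on the spin-$N/2$ multiplet, extract the resulting linear constraints using Lemma\,\ref{lemma:su(2) linearity}, and then reassemble them into the announced orbit-sum-plus-local-projector form.

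First, I would apply Lemma\,\ref{lemma:su(V)}(2) (specialized to $\mf{h}_x^s\cong\mb{C}^2$) to each $\hat{h}_X^{\mr{non}}$ and expand it in the standard form
\[
\hat{h}_X^{\mr{non}}=\sum_\mu\sum_{\sigma\in\mf{S}_X}c^X_{\mu,\sigma}\,\hat{\sigma}\hat{h}_\mu\hat{\sigma}^{-1},
\]
where each base monomial $\hat{h}_\mu\in\langle\hat{\sigma}_x^3,\hat{\sigma}_x^\pm\rangle$ is strictly local (built on a small reference support) and conjugation by $\hat{\sigma}\in\mf{S}_X$ relocates it across the sites of $X$. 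The raising and lowering contributions decouple because they shift the total $\hat{J}^3$ eigenvalue in opposite directions, so I would treat them independently.

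Next, because the scar states are totally symmetric, $\hat{\tau}\ket*{\psi^\Lambda_{\vec{m}}}=\ket*{\psi^\Lambda_{\vec{m}}}$ for all $\tau\in\mf{S}_N$, and so the effective action of each $\hat{\sigma}\hat{h}_\mu\hat{\sigma}^{-1}$ on a scar state depends only on the permutation orbit of $\sigma$. The full effective action of $\sum_X \hat{h}_X^{\mr{non}}$ then collapses to a polynomial in the collective generators $\hat{J}^3$ and $\hat{J}^\pm$ of degree bounded by the Hamiltonian's interaction range. In the regime where $N$ exceeds this range, Lemma\,\ref{lemma:su(2) linearity} applies and forces every coefficient multiplying a monomial $(\hat{J}^3)^a(\hat{J}^\pm)^b$ in the resulting effective polynomial to vanish.

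Translating these coefficient-vanishing conditions back to the original expansion, terms $\hat{\sigma}\hat{h}_\mu\hat{\sigma}^{-1}$ that act identically on scars---namely those related by a permutation orbit $\mf{R}_\mu$---must contribute with coefficients satisfying $\sum_{\sigma\in\mf{R}_\mu}c_\sigma=0$. These orbit-sums constitute the first summand in the announced decomposition. Any residual operator not captured by this orbit-sum form still annihilates $\mr{Sym}^N(\mf{h}^s)$ on its own; by Lemma\,\ref{lemma:local HA} it then admits a decomposition $\sum_{\langle x,y\rangle}\hat{h}^{(2)}_{[xy]}\hat{P}^{(2)}_{xy}$ into strictly local two-site projector terms, which supplies the second summand.

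The hard part will be controlling the interaction between the permutation conjugation $\hat{\sigma}\cdot\hat{\sigma}^{-1}$ and the weight-changing monomials, so that the collective-generator reduction is sharp enough for Lemma\,\ref{lemma:su(2) linearity} to cleanly disentangle independent orbits, and so that the residual non-orbit-sum contribution is genuinely absorbed into strictly local projectors rather than into long-ranged remainders. Concretely, one must verify that the base monomials $\hat{h}_\mu$ can be chosen strictly local while remaining compatible with the orbit structure of $\mf{S}_X$---a discrete-geometric compatibility that appears to be precisely the reason Prop.\,\ref{prop:hnon C2 constraint} is stated only for $\mf{h}^s\cong\mb{C}^2$, where the stricter Lemma\,\ref{lemma:su(V)}(2) structure is available.
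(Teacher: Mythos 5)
Your proposal follows essentially the same route as the paper: split each $\hat{h}_X^{\mathrm{non}}$ using Lemma\,\ref{lemma:su(V)}(2), use total permutation symmetry of the scar states to strip permutation factors and collapse everything to polynomials in the collective generators $\hat{J}^3,\hat{J}^\pm$, and then invoke Lemma\,\ref{lemma:su(2) linearity} to force independent vanishing of each monomial coefficient, which is exactly the constraint $\sum_{\sigma\in\mf{R}}c_\sigma=0$. One point you should make explicit rather than leaving as an assertion: the claim that the ``residual'' operator annihilates $\mr{Sym}^N(\mf{h}^s)$ on its own. In the paper's version the residual is by construction the \emph{difference} between $\hat{h}_X^{\mathrm{non}}$ and its permutation-stripped replacement $\hat{\sigma}_X^{\pm}\hat{O}'^{\pm}_X$; since permutations act trivially on the totally symmetric sector, this difference vanishes on $\mr{Sym}^N(\mf{h}^s)$ term by term, and only then does Lemma\,\ref{lemma:local HA} apply to convert it into strictly local projector terms. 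As written in your sketch, ``whatever is not captured by the orbit form'' is not automatically an annihilator, so you need that identification step. Your observation that $N$ must exceed the interaction range for Lemma\,\ref{lemma:su(2) linearity} to be applicable (to ensure $a+b<N+1$) is a correct regularity check that the paper's proof leaves implicit.
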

\begin{remark}
    The DM interaction $\sum_{x=1}^L\hat{\sigma}_x^3\hat{\sigma}_{x+1}^1-\hat{\sigma}_x^1\hat{\sigma}_{x+1}^3$ trivially satisfies Prop.\,\ref{prop:hnon C2 constraint}. 
\end{remark}
\begin{proof}
    We define the operator $\hat{\sigma}^\pm_X$ as,
    \begin{equation}
        \hat{\sigma}_X^\pm\coloneqq\prod_{x\in X}\hat{\sigma}_x^\pm.
    \end{equation}
    By Lemma\,\ref{lemma:su(V)} (2), each $\hat{h}_X^{\mr{non}}$ lies in a sum of the following two algebras,
    \begin{equation}
        \hat{h}_X^{\mr{non}}\in\Big\langle\hat{\sigma}_x^3,\hat{\sigma}_x^+;\,x\in X\Big\rangle\otimes\Big\langle\hat{\sigma};\,\sigma\in\mf{S}_{X}\Big\rangle\oplus\Big\langle\hat{\sigma}_x^3,\hat{\sigma}_x^-;\,x\in X\Big\rangle\otimes\Big\langle\hat{\sigma};\,\sigma\in\mf{S}_X\Big\rangle.
    \end{equation} 
    Accordingly, and analogously to Prop.\,\ref{prop:(1-P)}, we may rewrite the full sum as,
    \begin{equation}\begin{split}
        \sum_{X\subset\Lambda}\hat{h}_X^{\mr{non}}&=\sum_{X\subset\Lambda}\hat{\sigma}_{X}^+\hat{O}_{X}^++\sum_{X\subset\Lambda}\hat{\sigma}_{X}^-\hat{O}_{X}^-,
        \end{split}
    \end{equation}
    where each $X$ is taken to be distinct, and the decomposition preserves locality. Acting on a totally symmetric state $\ket*{\psi^\Lambda_m}$, permutations contained in $\hat{O}_X^\pm$ act trivially, so we may replace $\hat{O}^\pm_X$ by operators $\hat{O}'^\pm_X$ built only from $\hat{\sigma}^3$,
    \begin{equation}
        \sum_{X\subset\Lambda}\hat{h}_X^{\mr{non}}\ket*{\psi^\Lambda_m}=\left(\sum_{X\subset\Lambda}\hat{\sigma}_{X}^+\hat{O}'^+_{X}+\sum_{X\subset\Lambda}\hat{\sigma}^-_{X}\hat{O}'^-_{X}\right)\ket*{\psi^\Lambda_m}=0.
    \end{equation}
    Using again that $\ket*{\psi^\Lambda_m}$ is totally symmetric, we can symmetrize each term as
    \begin{equation}\begin{split}
        &\left(\sum_{X\subset\Lambda}\hat{\sigma}_{X}^+\hat{O}'^+_{X}+\sum_{X\subset\Lambda}\hat{\sigma}^-_{X}\hat{O}'^-_{X}\right)\ket*{\psi^\Lambda_m}    =\left(\sum_{X\subset\Lambda}\mc{S}\big(\hat{\sigma}_{X}^+\hat{O}'^+_{X}\big)+\sum_{X\subset\Lambda}\mc{S}\big(\hat{\sigma}^-_{X}\hat{O}'^-_{X}\big)\right)\ket*{\psi^\Lambda_m}=0.
        \end{split}
    \end{equation}
    By Eq.\,\eqref{eq:collective su(V)}, each symmetrized term $\mc{S}(\hat{\sigma}_{X}^\pm\hat{O}'^\pm_{X})$ is a collective $\mf{su}(\mf{h}^s)$ operator. Moreover, the resulting collective operator depends only on $|X|$ and on the number of $\hat{\sigma}^3$ factors appearing in $\hat{O}'^\pm_X$. We thus group the local terms $\hat{\sigma}_X^\pm\hat{O}'^\pm_X$ into classes $A_\pm^{(a,b)}$ labeled by the pair $(a,b)$, where $a$ counts the number of $\hat{\sigma}^\pm$ operators and $b$ counts the number of $\hat{\sigma}^3$ operators. Every element of $A^{(a,b)}_\pm$ is mapped to the same collective operator under $\mc{S}$, and Lemma\,\ref{lemma:su(2) linearity} then implies the constraint,
    \begin{equation}
        \sum_{\hat{h}\in\mc{A}_{(a,b)}}\mc{S}\big(\hat{h}\big)\mr{Sym}^N(\mf{h}^s)=0.
    \end{equation}
    Finally, since $\mc{S}$ is linear (additive) as a map on operators, Eq.\,\eqref{eq:hnon cond} follows.
\end{proof}


\section{Discussions}\label{sec:discussion}
In this paper, we established a structural necessity theorem for a broad class of QMBS Hamiltonians. Specifically, we proved that if a local Hamiltonian admits the full set of totally symmetric weight-basis states as exact eigenstates, then its operator structure is essentially fixed: it must decompose into an annihilator built from terms containing local projectors that locally annihilate the symmetric sector, together with a Zeeman term that generates an equidistant energy spectrum within the scar manifold. Our argument combines elementary representation theory of the symmetric group with the bicommutant theorem, which identifies the totally symmetric sector as a distinguished irreducible block and strongly constrains the operators that can act non-trivially within it

Many QMBS toy models, up to local invertible transformations, realize scar states of precisely the ``ferromagnetic/totally symmetric'' form. Our result therefore provides a unified explanation for a recurring empirical pattern. We have observed that such Hamiltonians almost inevitably admit SM-like decompositions, and our theorem upgrades this (empirical) observation into an exhaustiveness statement: for this class of scars, an appropriately generalized SM structure is not merely a convenient construction, but is actually necessary.

We expect this structural viewpoint to serve as a useful starting point for further work, both mathematically and physically. Below we highlight a few concrete directions that appear particularly promising.

On the mathematical side, an important open problem concerns the locality of annihilators in the weight-non-preserving sector, i.e., whether the operator $\sum_{X\subset\Lambda}\hat{h}_X^{\mr{non}}$ in Eq.\,\eqref{eq:annihilator summary} necessarily admits a genuinely local decomposition. Proposition\,\ref{prop:hnon C2 constraint} shows that, in the case $\mf{h}^s\cong\mb{C}^2$, a representative interaction in this sector is the DM term $\sum_{x=1}^L\hat{\sigma}_x^3\hat{\sigma}_{x+1}^1-\hat{\sigma}_x^1\hat{\sigma}_{x+1}^3$, which nevertheless admits a strictly local decomposition. This raises a natural question: does $\sum_{X\subset\Lambda}\hat{h}_X^{\mr{non}}$ always admit such a local decomposition, or can one construct a genuinely non-trivial example for which any expansion necessarily involves non-local terms?

A second direction is to extend the analysis from the full symmetric sector $\mr{Sym}^N(\mf{h}^s)$ to proper subspaces thereof. Simple examples include restricted families such as the ``even Dicke states'' $\big(\hat{J}^+\big)^{2n}\bigotimes_{x\in\Lambda}\ket*{\downarrow}$. For such subspaces, one may hope for analogues of Lemma\,\ref{lemma:main} and Thm.\,\ref{thm:local decomp}, potentially with modified local projectors reflecting the reduced representation content. This extension is particularly relevant for models such as the AKLT chain, where the scar states exist but the scar manifold is not the full symmetric power (see Ref.\,\cite{KeitaOmiyaPhD} for further discussion). Nevertheless, we note in passing that our theorem can be easily extended to the case where a local Hilbert space has a higher spin ($S\geq1$) and the scar manifold is identified with the highest weight space (with respect to the total spin), as long as the local space can be constructed by a totally symmetric combination of fundamental representations.

On the physical side, comparatively little is established about the universal consequences of Hamiltonians constrained to this form. 
At present, the most robust dynamical diagnostic of ferromagnetic scars is coherent revivals, typically interpreted as collective large-spin precession\,\cite{Choi2019}. While several works have suggested anomalous diffusion\,\cite{Ljubotina2023}, possible dynamical signatures\,\cite{PhysRevLett.130.250402,PhysRevB.110.144302,pizzi2024,b83y-h128}, even new nonequilibrium phases\,\cite{omiya2025}, establishing QMBS as a genuine physical phenomenon, beyond a mere mathematical curiosity, likely requires the development of sharper and more model-independent probes. In particular, can one identify transport or hydrodynamic phenomena tied specifically to ferromagnetic scar manifolds, in a way that cleanly distinguishes them from fine-tuned revivals?

Finally, a definitive characterization of ferromagnetic scars is closely tied to the stability of the projector--annihilator structure under generic local perturbations. Many-body eigenstates can be extremely sensitive to even weak perturbation, which may strongly deform exact scar eigenstates and blur their spectral signatures. Nevertheless, the notion of ``approximate'' QMBS is often invoked in experimentally motivated settings including the PXP model\cite{Turner2018prb,Hudomal2020,Kolb_2023}. It would therefore be valuable to develop a systematic framework for approximate scars: do they correspond to approximately annihilating local projectors, and how does the degree of approximation quantitatively control revival lifetimes and other dynamical observables?

\section*{Note Added}
A closely related work by Nicholas O'Dea\,\textit{et.al.} discusses energy spectra of similar $S=1/2$ generalized towers built from certain creation operators that go beyond simple totally symmetric wavefunctions\,\cite{NickLeiMot}. The two manuscripts were posted in a coordinated manner.  

\section*{Acknowledgement}
The author is indebted to Nicholas O'Dea for their careful reading of the manuscript and useful comments, particularly on the non-trivial $SU(2)$ symmetry of the spin-1 XY model\,\cite{odea2024levelstatistics}. He also appreciates Sanjay Moudgalya for discussions on related problems.

\appendix
\section{Some aspects of symmetric group}
\label{app:SN}

We begin with two classical results. The first one is a Schur-Weyl-type decomposition of the tensor product of a finite dimensional vector space $V$,
\begin{equation}\label{eq:SW decomp}
    V^{\otimes N}\cong\bigoplus_T\mb{S}_TV\cong\bigoplus_\lambda \mb{S}_\lambda V\otimes V_\lambda,
\end{equation}
where $T$ runs over standard Young tableaux, $\mb{S}_TV$ (or $\mb{S}_\lambda V$) denotes the corresponding Weyl module. Passing from $T$ to $\lambda$ means that we forget the numbering and keep only the Young diagram: if $T$ and $T'$ share the same diagram, then $\mb{S}_TV\cong\mb{S}_{T'}V$. The rightmost decomposition makes explicit that $V^{\otimes N}$ splits into irreducible representations (irreps) $V_\lambda$ of the symmetric group and the corresponding multiplicity spaces $\mb{S}_\lambda V$ that turns out to be an irrep of $U(V)$. Note that one of these summands will realize $\mr{Sym}^N(\mf{h}^s)$ in our setting. 

The second result concerns tensor products of Weyl modules,
\begin{equation}
    \mb{S}_\lambda V\otimes\mb{S}_\mu V\cong\bigoplus_\nu N_{\lambda\mu\nu}\mb{S}_\nu V,
\end{equation}
where the non-negative integer $N_{\lambda\mu\nu}$ is known as the Littlewood-Richardson (LR) coefficient (a representation-theoretic analogue of Clebsch-Gordan coefficient).

\subsection{Irreducible representations}
The symmetric group $\mf{S}_N$ consists of permutations of the set $\{1,\cdots,N\}$, with group multiplication given by composition. Being finite, $\mf{S}_N$ admits only finite-dimensional representations, and any such representation decomposes into a direct sum of irreps. We recall the following standard result for a finite group $G$:
\begin{theorem}\label{thm:irreps}
For any finite group $G$, the regular representation $\mb{C}[G]$ contains every irrep as a subrepresentation, 
\begin{equation}\label{eq:regular rep}
    \mb{C}[G]\cong\bigoplus_\lambda V_\lambda^{\oplus\dim V_\lambda},
\end{equation}
where $\lambda$ labels irreps and $V_\lambda$ denotes the corresponding irrep.
\end{theorem}
Here we identify the regular representation with the group algebra $\mb{C}[G]$, viewed as a $G$-module via multiplication. 

For $G=\mf{S}_N$, irreps are conveniently labeled by Young diagrams. Concretely, a Young diagram $\lambda$ is determined by a partition $\lambda=(\lambda_1,\cdots,\lambda_M)$ of $N$ satisfying $\sum_{i=1}^M\lambda_i=N$ and $\lambda_i\geq\lambda_{i+1}$. The diagram consists of $N$ boxes arranged in $M$ left-aligned rows, with $\lambda_i$ boxes in the $i$th row (see the left panel of Fig.\,\ref{fig:tableau} for a diagram for $N=5$). Each such diagram labels an irrep of $\mf{S}_N$, and every irrep arises this way; in particular, the trivial representation corresponds to the one-row diagram $(N)$.

To construct the irrep explicitly, one chooses a standard Young tableau $T$, i.e., a numbering of the boxes by $\{1,\cdots,N\}$ that increase along each row and along each column (see the right-panel of Fig.\,\ref{fig:tableau}). We define two subgroups of $\mf{S}_N$ associated with $T$: $\mf{P}_T$ permutes entries within each row, and $\mf{Q}_T$ permutes entries within each column, 
\begin{equation}\label{eq:P Q}
    \begin{split}
        \mf{P}_T&\coloneqq\left\{\sigma\in\mf{S}_N;\,\sigma\text{ preserves each row}\right\}\\
        \mf{Q}_T&\coloneqq\left\{\sigma\in\mf{S}_N;\,\sigma\text{ preserves each column}\right\}.
    \end{split}
\end{equation}
From these we build the symmetrizer and the anti-symmetrizer,
\begin{equation}
    a_T\coloneqq\sum_{\sigma\in\mf{P}_T}\sigma,\,\,b_T\coloneqq\sum_{\sigma\in\mf{Q}_T}\mr{sgn}(\sigma)\sigma,
\end{equation}
and the Young symmetrizer as their product
\begin{equation}
    c_T\coloneqq a_Tb_T\big(\in\mb{C}[\mf{S}_N]\big),
\end{equation}
where $\mr{sgn}(\sigma)\in\{\pm1\}$ is the parity of $\sigma$. Since $\mb{C}[\mf{S}_N]$ is a module over itself, we may regard $a_T, b_T$, and $c_T$ as endomorphisms $\mc{L}(\mb{C}[\mf{S}_N])$ (in the following we denote endomorphisms by $\mc{L}$). The corresponding irrep is realized as the right ideal,
\begin{equation}
    V_T=\mb{C}[\mf{S}_N]c_T,
\end{equation}
where $c_T$ can be interpreted as a projector.

It is important to distinguish tableaux $T$ and $T'$ that share the same diagram $\lambda$. Let $c_T$ and $c_{T'}$ be the corresponding Young symmetrizers. One can show (see Appendix\,\ref{proof:ctct'=0}) that $c_Tc_{T'}=0$ for $T\not=T'$, implying that the corresponding right ideals $V_T(=\mb{C}[\mf{S}_N]c_T)$ and $V_{T'}(=\mb{C}[\mf{S}_N]c_{T'})$ are linearly independent. At the same time, they are canonically isomorphic (via relabeling of tableau entries). Together with Thm\,\ref{thm:irreps}, this yields the well-known relation\footnote{This consideration does not solely imply Eq.\,\eqref{eq:dim V} (only $\dim V_\lambda\geq\#\text{ of numberings of the diagram }\lambda$ can be concluded). Nevertheless Eq.\,\eqref{eq:dim V} turns out to be correct.},
\begin{equation}\label{eq:dim V}
    \dim V_\lambda=\#\text{ of numberings of the diagram }\lambda. 
\end{equation}

We also require a basic fact about inducing representations from $\mf{S}_N\times\mf{S}_M$ to $\mf{S}_{N+M}$ (often called the outer product). Given irreps $V_\lambda$ of $\mf{S}_N$ and $V_\mu$ of $\mf{S}_M$, we view $V_\lambda\boxtimes V_\mu$ as a representation of $\mf{S}_N\times\mf{S}_M$ and induce it to $\mf{S}_{N+M}$ via the natural inclusion $\mf{S}_N\times\mf{S}_M\hookrightarrow\mf{S}_{N+M}$. Denoting the induced representation by $V_\mu\circ V_\nu$, one obtains
\begin{equation}\label{eq:LR decomp}
    V_\lambda\circ V_\mu\cong\bigoplus_\nu N_{\lambda\mu\nu}V_\nu, 
\end{equation}
where $N_{\lambda\mu\nu}$ indicates the multiplicity of the representation and is called the Littlewood-Richardson (LR) coefficient. In our application, we will repeatedly use the special case
\begin{equation}
   N_{\lambda\mu(M+N)}\not=0\,\,\text{only if}\,\,\lambda=(N),\,\mu=(M).
\end{equation}
Intuitively, the trivial representation of $\mf{S}_{N+M}$ can arise from the outer product only when both constituents are trivial.

\subsection{Commutant and collective generators}

A standard bicommutant statement implies that $H$ decomposes into irreducible $G$-sectors together with irreducible sectors of $\mr{Comm}(G)$:
\begin{theorem}\label{thm:SW}
    The module $\mc{H}$ decomposes into irreps of $G$ as,
    \begin{equation}\label{eq:SW}
        \mc{H}\cong\bigoplus_\lambda \mc{H}_\lambda^{\oplus\dim V_\lambda},
    \end{equation}
    where $V_\lambda$ is an irrep of $G$ and $\mc{H}_\lambda$ is an irrep of $\mr{Comm}(G)$.
\end{theorem}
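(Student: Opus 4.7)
The plan is to prove this as a standard application of Schur's lemma combined with Maschke's theorem (complete reducibility of representations of finite groups over $\mb{C}$), i.e.\ it is essentially the finite-group form of the double commutant theorem already invoked implicitly in Eq.\,\eqref{eq:SW decomp}.

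First, I would use complete reducibility of $\mc{H}$ as a right $G$-module to write the isotypic decomposition
\begin{equation*}
\mc{H}\cong\bigoplus_\lambda V_\lambda\otimes M_\lambda,
\end{equation*}
where $\lambda$ ranges over the irreps of $G$ appearing in $\mc{H}$ and $M_\lambda$ is a plain vector space (the multiplicity space). At this stage $G$ acts only on the first tensor factor and $M_\lambda$ carries no group action. The equivalence of this form with the form in Eq.\,\eqref{eq:SW}, with $\mc{H}_\lambda\coloneqq M_\lambda$, is immediate once we show $M_\lambda$ is an irrep of $\mr{Comm}(G)$, since as vector spaces $V_\lambda\otimes M_\lambda\cong M_\lambda^{\oplus\dim V_\lambda}$.

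Next I would determine the structure of $\mr{Comm}(G)$. Any $\varphi\in\mr{Comm}(G)$ preserves each isotypic component $V_\lambda\otimes M_\lambda$ (off-diagonal blocks would give nonzero $G$-intertwiners between inequivalent irreps, which Schur's lemma forbids). Restricted to a single component, Schur's lemma (together with the identification $\mr{Hom}_G(V_\lambda,V_\lambda)\cong\mb{C}\cdot\mr{id}$) forces
\begin{equation*}
\varphi\big|_{V_\lambda\otimes M_\lambda}=\mr{id}_{V_\lambda}\otimes T_\lambda(\varphi),\qquad T_\lambda(\varphi)\in\mr{End}(M_\lambda).
\end{equation*}
Conversely, every block-diagonal operator of this form manifestly commutes with the $G$-action. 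I therefore obtain the algebra isomorphism
\begin{equation*}
\mr{Comm}(G)\cong\bigoplus_\lambda\mr{End}(M_\lambda).
\end{equation*}
Since a full matrix algebra $\mr{End}(M_\lambda)$ acts irreducibly on its defining module $M_\lambda$, and since the different summands act through orthogonal central idempotents, each $M_\lambda$ is an irrep of $\mr{Comm}(G)$ and the $M_\lambda$'s are mutually inequivalent. Relabeling $\mc{H}_\lambda\coloneqq M_\lambda$ finishes the proof.

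The main subtlety is the surjectivity half of the identification $\mr{Comm}(G)\cong\bigoplus_\lambda\mr{End}(M_\lambda)$: the Schur-lemma argument only gives the inclusion $\mr{Comm}(G)\hookrightarrow\bigoplus_\lambda\mr{End}(M_\lambda)$, and one must separately verify that every block-diagonal intertwiner is indeed realized by a genuine endomorphism of $\mc{H}$. This is where complete reducibility is essential; without it, $\mr{Comm}(G)$ could be strictly smaller than $\bigoplus_\lambda\mr{End}(M_\lambda)$ and the multiplicity spaces would not necessarily be irreducible under the commutant. Since $G$ is finite and the ground field is $\mb{C}$, Maschke's theorem supplies this reducibility, and no further obstruction arises.
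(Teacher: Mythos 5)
Your proof is correct, and it takes a genuinely different route from the one the paper indicates. You give the standard ``Schur's lemma plus Maschke'' argument: complete reducibility yields the isotypic decomposition $\mc{H}\cong\bigoplus_\lambda V_\lambda\otimes M_\lambda$, Schur's lemma forces every $G$-intertwiner to act as $\bigoplus_\lambda(\mr{id}_{V_\lambda}\otimes T_\lambda)$ on these blocks, and the resulting identification $\mr{Comm}(G)\cong\bigoplus_\lambda\mr{End}(M_\lambda)$ makes each $M_\lambda$ a commutant irrep essentially by definition. The paper, by contrast, sketches a construction that builds the multiplicity spaces concretely as $\mc{H}_\lambda=\mc{H}P_\lambda$ using the idempotents supplied by Thm.\,\ref{thm:irreps} (and, later, the Young symmetrizers), and its alternative argument (kept in a commented-out appendix) goes through a lemma about the canonical maps $U\otimes_{\mb{C}[G]}(\mb{C}[G]c)\to Uc$, a diagram chase, and the identification of $\mb{C}[G]$ with a direct sum of matrix algebras. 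Your version is shorter and needs neither explicit idempotents nor Thm.\,\ref{thm:irreps}; the paper's version is more constructive and produces the concrete realization $\mc{H}_\lambda\cong\hat{c}^R_\lambda\mc{H}$ that is actually used downstream, e.g.\ in Eq.\,\eqref{eq:decomp H young} and in the proof of Lemma\,\ref{lemma:main}. If you inserted your proof into the paper verbatim, you would still want a short Schur-lemma remark identifying your abstract $M_\lambda$ with those concrete images.

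One small correction to your closing remark: the ``surjectivity'' half is not where the subtlety lives. Once $\mc{H}=\bigoplus_\lambda V_\lambda\otimes M_\lambda$ is established as a direct-sum decomposition, any $\bigoplus_\lambda(\mr{id}_{V_\lambda}\otimes T_\lambda)$ is automatically a well-defined endomorphism of $\mc{H}$ commuting with $G$, so surjectivity onto $\bigoplus_\lambda\mr{End}(M_\lambda)$ is immediate. Maschke's theorem is indeed essential, but its role is to produce that isotypic direct-sum decomposition in the first place (ruling out non-split extensions that would cut down the commutant), not to rescue the surjectivity step afterwards. The proof itself is unaffected; only the accounting of where reducibility is used is slightly off.
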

A proof can be found in standard references on finite-group representations\,\cite{Fulton2004}. The basic idea is to construct projectors $P_\lambda$ onto irreps using the group algebra (c.f. Thm\,\ref{thm:irreps}), and then define $\mc{H}_\lambda\coloneqq\mc{H}P_\lambda$. Note that each irreducible block is isomorphic to a tensor product
\begin{equation}\label{eq:decomp bicommutant}
    \mc{H}_\lambda^{\oplus\dim V_\lambda}\cong V_\lambda\otimes \mc{H}_\lambda.
\end{equation}
In the QMBS context, analogous decompositions arise when one replaces the group algebra by an appropriate bond algebra\,\cite{moudgalya2023}. Eq.\,\eqref{eq:SW} and Eq.\,\eqref{eq:decomp bicommutant} then correspond to splitting the Hilbert space into the scar subspace and its complement. 

We now apply Thm\,\ref{thm:SW} to the present problem by taking $\mc{H}=\bigotimes_{x\in\Lambda}\mf{h}_x^s$ and $G=\mf{S}_N$. On a product basis $\ket{e_1}\otimes\cdots\otimes\ket{e_N}$, the right action of $\sigma\in G$ is defined by
\begin{equation}
    \hat{\sigma}^R\ket{e_1}\otimes\cdots\otimes\ket{e_N}\coloneqq\ket*{e_{\sigma^{-1}(1)}}\otimes\cdots\otimes\ket*{e_{\sigma^{-1}(N)}},
\end{equation}
where the superscript $R$ emphasizes that this is a right action (and hence a group homomorphism). We also define the corresponding left action by
\begin{equation}
    \hat{\sigma}\ket{e_1}\otimes\cdots\otimes\ket{e_N}\coloneqq\ket*{e_{\sigma(1)}}\otimes\cdots\otimes\ket*{e_{\sigma(N)}}.
\end{equation}
These are related by $\hat{\sigma}^R=\hat{\sigma}^{-1}$. Importantly, the right action satisfies $\hat{\sigma}^R\hat{\tau}^R=\widehat{\sigma\tau}^R$, whereas the left action composes in the opposite order ($\what{\sigma\tau}=\hat{\tau}\hat{\sigma}$).

The same constructions extend to the group algebra $\mb{C}[\mf{S}_N]$. Since $\mb{C}[\mf{S}_N]$ decomposes via Young symmetrizers, Thm.\,\ref{thm:SW} implies an analogous decomposition of the many-body Hilbert space,
\begin{equation}\label{eq:decomp H young}
    \mc{H}=\bigoplus_T\big(\hat{c}_\lambda^R\mc{H}\big)\eqqcolon\bigoplus_T\mb{S}_TV,
\end{equation}
where $T$ runs over standard tableaux. As a basic example, the projector onto the totally symmetric subspace $\mr{Sym}^N(V)$ is constructed as 
\begin{equation}
    \hat{c}_{(N)}^R=\frac{1}{|\mf{S}_N|}\sum_{\sigma\in\mf{S}_N}\hat{\sigma}^R=\frac{1}{|\mf{S}_N|}\sum_{\sigma\in\mf{S}_N}\hat{\sigma},
\end{equation}
where $(N)$ denotes the one-row Young diagram.

We conclude by connecting the LR rule to Weyl modules. Eq.\,\eqref{eq:LR decomp} implies the corresponding relation for Young symmetrizers, $c_\lambda c_\mu=\sum_\nu N_{\lambda\mu\nu}c_\nu$, where $\lambda,\mu$ and $\nu$ are partitions of $N,M$, and $N+M$, respectively. Applying this to $V^{\otimes N}$ yields 
\begin{equation}\label{eq:LR coeff.}
    \mb{S}_\lambda V\otimes\mb{S}_\mu V\cong\bigoplus_\nu N_{\lambda\mu\nu}\mb{S}_\nu V,
\end{equation}
which is the module-level counterpart of Eq.\,\eqref{eq:LR decomp}.

\section{Proofs}
\subsection{Proof of the orthogonality in different tableaux}\label{proof:ctct'=0}
In this section we provide a self-contained proof of the orthogonality of the Young symmetrizer $c_Tc_{T'}=0$ for different (standard) tableaux. For a more detailed account one may refer to any introductory textbook of the representation theory (e.g., Ref.\,\cite{Fulton2004}).

As introduced in Sec.\,\ref{subsec:rep of SN}, the irreps of the symmetric group $\mf{S}_N$ are completely classified by the Young diagram $\lambda=(\lambda_1,\cdots,\lambda_k)$ with $\sum_i\lambda_i=N$ and $\lambda_1\geq\lambda_2\geq\cdots\geq\lambda_k\geq1$, with the corresponding Young symmetrizer defined as $c_T=a_Tb_T$\footnote{Notice that unlike the main text the definitions depend on a tableau $T$.}. The factors $a_T$ and $b_T$ are the symmetrizer and the anti-symmetrizer, 
\begin{equation}\begin{split}
    &a_T\coloneqq\sum_{\sigma\in\mf{P}_T}\sigma,\,\,b_T\coloneqq\sum_{\sigma\in\mf{Q}_T}\mr{sgn}(\sigma)\sigma,
\end{split}
\end{equation}
where $\mf{P}_T$ and $\mf{Q}_T$ are subgroups of $\mf{S}_N$,
\begin{equation}
    \begin{split}
        \mf{P}_T&\coloneqq\big\{\sigma\in\mf{S}_N; \sigma\text{ preserves each row}\big\}\\
        \mf{Q}_T&\coloneqq\big\{\sigma\in\mf{S}_N; \sigma\text{ preserves each column}\big\}.
    \end{split}
\end{equation}

We prove the orthogonality of the Young symmetrizers:
\begin{lemma}\label{lemma: young symmetrizer appendix}
    For the Young symmetrizer $c_T$, the following holds,
    \begin{itemize}
        \item[(i)] If $T\not=T'$, then $c_Tc_{T'}=0$
        \item[(ii)] For $\forall x\in\mb{C}[\mf{S}_N]$, $c_Txc_T$ is proportional to $c_T$. In particular, $c_Tc_T=n_Tc_T$ with $n_T=|\mf{S}_N|/\dim V_T$ with $V_T\coloneqq\mb{C}[\mf{S}_N]c_T$.  
    \end{itemize}
\end{lemma}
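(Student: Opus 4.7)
My approach rests on a rigidity lemma for elements of $\mb{C}[\mf{S}_N]$ together with a combinatorial ``bad-pair'' analysis of the subgroup pair $(\mf{P}_T, \mf{Q}_{T'})$.

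First, I would record the one-sided invariances: for $\sigma \in \mf{P}_T$, $\sigma a_T = a_T \sigma = a_T$; for $\tau \in \mf{Q}_T$, $\tau b_T = b_T \tau = \mr{sgn}(\tau) b_T$. Both follow immediately by reindexing the defining sums over the subgroups, and they imply $\sigma c_T = c_T$ and $c_T \tau = \mr{sgn}(\tau) c_T$.

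Next, I would prove a rigidity lemma: any $y \in \mb{C}[\mf{S}_N]$ satisfying $\sigma y \tau = \mr{sgn}(\tau)\, y$ for all $\sigma \in \mf{P}_T$ and $\tau \in \mf{Q}_T$ is a scalar multiple of $c_T$. Writing $y = \sum_\pi c_\pi \pi$, the invariance translates into $c_{\sigma \pi \tau} = \mr{sgn}(\tau)\, c_\pi$; on the double coset $\mf{P}_T \mf{Q}_T$, where the factorization is unique, this pins down $c_{\sigma \tau} = \mr{sgn}(\tau)\, c_e$, matching the expansion of $c_e\, c_T$. For $\pi \notin \mf{P}_T \mf{Q}_T$, the standard combinatorial claim produces a transposition $t = (ij) \in \mf{P}_T$ with $\pi^{-1} t \pi \in \mf{Q}_T$; applying the invariance with $(\sigma, \tau) = (t, \pi^{-1} t \pi)$ forces $c_\pi = -c_\pi$, so $c_\pi = 0$.

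Part (ii) then follows immediately. The element $y = c_T x c_T$ automatically satisfies $\sigma y \tau = (\sigma c_T)\, x\, (c_T \tau) = \mr{sgn}(\tau)\, y$, so rigidity gives $c_T x c_T = \alpha(x) c_T$, and specializing $x = e$ yields $c_T^2 = n_T c_T$. To identify $n_T$ I would compare two expressions for the trace of right multiplication $R_{c_T} : \mb{C}[\mf{S}_N] \to \mb{C}[\mf{S}_N]$. Directly, only the coefficient of $e$ in $c_T$ contributes to the diagonal, and that coefficient equals $1$, giving $\mr{tr}(R_{c_T}) = |\mf{S}_N|$. Alternatively, $R_{c_T}/n_T$ is an idempotent whose image is the left ideal $\mb{C}[\mf{S}_N] c_T \cong V_T$ of dimension $\dim V_T$, so its trace equals $\dim V_T$, yielding $\mr{tr}(R_{c_T}) = n_T \dim V_T$. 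Equating gives $n_T = |\mf{S}_N|/\dim V_T$.

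For part (i), the sharper combinatorial input I would invoke is: for $T \neq T'$, for every $\pi \in \mf{S}_N$ there exist $i \neq j$ in the same row of $T$ whose preimages $\pi^{-1}(i), \pi^{-1}(j)$ lie in the same column of $T'$. Given this, the transposition $t = (ij) \in \mf{P}_T$ satisfies $\pi^{-1} t \pi \in \mf{Q}_{T'}$, and the chain $a_T \pi b_{T'} = a_T t \pi b_{T'} = a_T \pi (\pi^{-1} t \pi) b_{T'} = -a_T \pi b_{T'}$ forces $a_T \pi b_{T'} = 0$ for every $\pi$; by linearity, $a_T x b_{T'} = 0$ for every $x \in \mb{C}[\mf{S}_N]$. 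Taking $x = b_T a_{T'}$ then yields $c_T c_{T'} = a_T (b_T a_{T'}) b_{T'} = 0$. The main obstacle is the combinatorial input itself. When $\lambda(T) \neq \lambda(T')$ (WLOG $\lambda(T) \not\trianglelefteq \lambda(T')$), it reduces to Young's classical pigeonhole on row/column lengths, with the opposite dominance direction handled by applying the anti-involution $\sigma \mapsto \sigma^{-1}$ that fixes both $a_T$ and $b_T$; the case $\lambda(T) = \lambda(T')$ with differing numberings is subtler and requires induction on the smallest entry at which the two standard tableaux disagree, exploiting that its position must sit at a corner of both sub-shapes. I would invoke this combinatorial fact from Fulton's \emph{Young Tableaux} rather than rederive it in full.
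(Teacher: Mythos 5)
Your rigidity argument for part (ii) is sound and matches the paper's: one-sided invariance of $a_T,b_T$, the uniqueness (up to scalar) of elements $y\in\mb{C}[\mf{S}_N]$ satisfying $\sigma y\tau=\mr{sgn}(\tau)y$ for all $\sigma\in\mf{P}_T$, $\tau\in\mf{Q}_T$, and the two-way trace computation for $n_T$. In the uniqueness step you invoke the same transposition trick for $\pi\notin\mf{P}_T\mf{Q}_T$ that the paper isolates as a separate lemma; this is fine.

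Part (i), however, has a genuine gap in the same-shape case. The combinatorial input you plan to invoke --- ``for $T\neq T'$, for \emph{every} $\pi\in\mf{S}_N$ there is a pair in the same row of $T$ whose $\pi^{-1}$-preimages lie in a common column of $T'$'' --- is false whenever $\lambda(T)=\lambda(T')$. Take $\lambda=(2,1)$, let $T$ have rows $\{1,2\}$ and $\{3\}$, let $T'$ have rows $\{1,3\}$ and $\{2\}$, and take $\pi=(23)$. The only non-trivial row-pair of $T$ is $\{1,2\}$, but $\pi^{-1}(1)=1$ and $\pi^{-1}(2)=3$ sit in distinct columns of $T'$ (its columns are $\{1,2\}$ and $\{3\}$). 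Correspondingly,
\begin{equation}
    a_T\,(23)\,b_{T'}=(1+(12))\,(23)\,(1-(12))=(23)-(132)+(123)-(13)\neq 0,
\end{equation}
so your intermediate conclusion ``$a_T x b_{T'}=0$ for every $x$'' is false. (Generally, for same shape one always has some $\sigma$ with $\sigma T'=T$, giving $a_T\sigma b_{T'}=a_T b_{\sigma T'}\,\sigma=c_T\sigma\neq 0$.) You defer to Fulton for this lemma in the same-shape case and suggest it ``requires induction on the smallest entry at which the two standard tableaux disagree,'' but no such induction can establish a false statement, and the lemma in Fulton does not say what you need it to say.

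What is actually true, and what the paper uses in its $T<T'$ branch, is the much weaker single-instance claim: for distinct standard tableaux $T,T'$ of the same shape (ordered appropriately) there exist two integers in a common \emph{column} of $T$ and a common \emph{row} of $T'$, so that the corresponding transposition $t$ satisfies $b_Tt=-b_T$ and $ta_{T'}=a_{T'}$, whence $b_Ta_{T'}=0$ and $c_Tc_{T'}=a_T(b_Ta_{T'})b_{T'}=0$ directly, with no intervening $x$. In the example above, the column $\{1,3\}$ of $T$ coincides with a row of $T'$ and $t=(13)$ does the job. Your handling of $\lambda(T)\neq\lambda(T')$ via dominance and the anti-involution is in line with the paper; you should replace the same-shape part of your part (i) argument with the direct $b_Ta_{T'}=0$ route rather than attempting to force $a_Txb_{T'}=0$ for all $x$.
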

To show this lemma we first prove the following claim,
\begin{lemma}\label{lemma:transposition}
    If $\sigma\not\in\mf{P}_T\cdot\mf{Q}_T$ (here the group multiplication of the subgroups are defined in a trivial manner), then there exists at least one transposition $t=(i,j)\in\mf{P}_T$ such that $q=\sigma^{-1}t\sigma$ is an element of $\mf{Q}_T$.
\end{lemma}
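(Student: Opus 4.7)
The plan is to prove the contrapositive: assume that for every transposition $t=(i,j)\in\mf{P}_T$ the conjugate $\sigma^{-1}t\sigma$ does not lie in $\mf{Q}_T$, and construct a factorization $\sigma=pq$ with $p\in\mf{P}_T$ and $q\in\mf{Q}_T$. Writing $R(k)$ and $C(k)$ for the row and column of entry $k$ in $T$, the hypothesis translates into the combinatorial statement: whenever $R(i)=R(j)$ one has $C(\sigma^{-1}(i))\neq C(\sigma^{-1}(j))$; equivalently, any two entries of $T$ sharing a column are carried by $\sigma$ to distinct rows.

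I would introduce the $(0,1)$-matrix $A$ with $A_{r,c}=1$ iff some $\ell$ satisfying $C(\ell)=c$ has $R(\sigma(\ell))=r$. The hypothesis ensures that $A$ is genuinely $0/1$, and a direct count yields row sums $\sum_c A_{r,c}=\lambda_r$ and column sums $\sum_r A_{r,c}=\lambda'_c$. The crucial input is then a rigidity lemma: a $(0,1)$-matrix whose row and column sums form conjugate partitions $\lambda$ and $\lambda'$ necessarily coincides with the characteristic matrix of the Young diagram of $\lambda$. This I would prove by induction on the length of $\lambda$: since $\lambda'_c=0$ for $c>\lambda_1$, all ones live in columns $1,\dots,\lambda_1$; row~$1$ has $\lambda_1$ ones distributed among these $\lambda_1$ columns and is therefore entirely filled. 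Peeling off row~$1$ reduces the problem to $(\lambda_2,\lambda_3,\dots)$ with residual column sums $(\lambda'_c-1)_c$, again a conjugate pair, and iterating identifies $A$ with the Ferrers diagram.

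With this rigidity, constructing $q\in\mf{Q}_T$ is immediate. In each column $c$ of $T$, the map $\ell\mapsto R(\sigma(\ell))$ is a bijection from the $\lambda'_c$ entries of column~$c$ onto $\{1,\dots,\lambda'_c\}$, so one defines $q(\ell)$ to be the entry of $T$ at position $(R(\sigma(\ell)),c)$. Collecting these column-wise bijections produces $q\in\mf{Q}_T$ with $R(q(\ell))=R(\sigma(\ell))$ for every $\ell$; this is precisely the statement that $\sigma q^{-1}$ preserves every row of $T$, i.e., $\sigma q^{-1}\in\mf{P}_T$, and setting $p\coloneqq\sigma q^{-1}$ delivers the factorization $\sigma=pq$.

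The main obstacle is the rigidity step: a $(0,1)$-matrix with arbitrary row sums $\lambda$ and column sums $\mu$ is far from unique in general, and the conclusion fails unless $\mu$ is exactly the conjugate partition $\lambda'$. The peeling induction sketched above exploits this conjugacy, but must be set up carefully at each stage to check that the residual row and column sums still form a conjugate pair, which is precisely what the arithmetic of Young diagrams guarantees.
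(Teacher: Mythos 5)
Your proof is correct, and it is a genuinely different packaging of the underlying combinatorics than the one in the paper. The paper proves the contrapositive by an iterative tableau-matching argument: it notes that the $\lambda_1$ entries in the first row of $T$ must occupy all $\lambda_1$ distinct columns of $T'=\sigma T$, then multiplies by $p_1\in\mf{P}_T$ and $q'_1\in\mf{Q}_{T'}$ to align both first rows, and repeats the procedure row by row until $pT=q'\sigma T$, from which $\sigma=pq$ is extracted. Your version encodes the same information once and for all in the $(0,1)$-matrix $A_{r,c}$ that detects whether column $c$ of $T$ meets row $r$ of $\sigma T$, observes that the hypothesis forces $A$ to have row sums $\lambda$ and column sums $\lambda'$, and then applies the rigidity lemma (the extremal case of the Gale--Ryser theorem) that a $(0,1)$-matrix with conjugate row and column sums must be the Ferrers matrix of $\lambda$. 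The factorization $\sigma=pq$ then follows by a single explicit definition of $q\in\mf{Q}_T$ rather than an iteration. The two routes trade a slightly longer setup in your case for a crisper conclusion: the paper's ``repeat this procedure'' glosses over the verification that the hypothesis persists after each row alignment, whereas your matrix $A$ is fixed at the outset and your induction peels rows of $A$ rather than of the tableau pair, so the invariant being maintained (conjugate row/column sums) is stated and checked explicitly. One small remark worth spelling out in a final write-up is the equivalence you state at the start, namely that ``same row of $T$ maps to distinct columns under $\sigma^{-1}$'' is equivalent to ``same column of $T$ maps to distinct rows under $\sigma$''; both amount to injectivity of the map $\ell\mapsto(R(\sigma(\ell)),C(\ell))$, which is exactly what makes $A$ a $0/1$ matrix.
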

\begin{proof}
    Let $T'=\sigma T$ be another tableau with the same diagram by replacing the numbers $1,2,\cdots$ in the tableau $T$ with their permutation by $\sigma(1),\sigma(2),\cdots$. If there exists a pair of integers $i,j$ such that $i$ and $j$ are in the same row in $T$, while they are in the same column in $T'$, then $t=(i,j)$ is trivially a desired element of $\mf{Q}_T$. We will show that if such a pair does not exist, $\sigma$ must be written in the form $g=p\cdot q$ for some $p\in\mf{P}_T$ and $q\in\mf{Q}_T$. 
    
    Since we assume that any pair of two integers that are in the same row in $T$ is not in the same column in $T'$. One can permute the numbers in the tableaux $T$ and $T'$ such that there exist $p_1\in\mf{P}_T$ and $q'_1\in \sigma\mf{Q}_T\sigma^{-1}$ such that $p_1T$ and $q'_1T'$ has the identical first row. We repeat this procedure to conclude that there exist $p\in\mf{P}_T$ and $q'\in\mf{Q}_T$ such that $pT=q'T'=q'\sigma T$. This in turn means $\sigma=pq$ by setting $q=\sigma^{-1}(q')^{-1}\sigma$. 
\end{proof}

\begin{lemma}\label{lemma:lemma for young symmetrizer}
    \begin{itemize}
        \item[(i)] If the permutation $p$ is an element of $\mf{P}_T$, i.e., $p\in\mf{P}_T$, then $p\cdot a_T=a_T\cdot p=a_T$
        \item[(ii)] If the permutation $q$ is an element of $\mf{Q}_T$, i.e., $q\in\mf{Q}_T$, then $(\mr{sgn}(q)q)\cdot b_T=b_T\cdot(\mr{sgn}(q)q)=q$
        \item[(iii)] For $\forall p\in\mf{P}_T$ and $\forall q\in\mf{Q}_T$, $p\cdot c_T\cdot(\mr{sgn}(q)q)=c_T$, and only $c_T$ satisfies this condition (as an element of $\mb{C}[\mf{S}_N]$).  
    \end{itemize}
\end{lemma}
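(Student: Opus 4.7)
The plan is that (i) and (ii) are short rearrangement computations inside the subgroups $\mf{P}_T$ and $\mf{Q}_T$, and the existence half of (iii) follows immediately from them. The substantive content is the uniqueness claim in (iii), which I would handle by combining a coefficient comparison with Lemma~\ref{lemma:transposition}.

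For (i), left multiplication by $p\in\mf{P}_T$ is a bijection of $\mf{P}_T$, so the sum $a_T=\sum_{\sigma\in\mf{P}_T}\sigma$ is invariant under $p\,\cdot$; the same argument gives invariance under $\,\cdot\,p$. For (ii), I would exploit the multiplicativity of the sign character: since $\mr{sgn}(q)\mr{sgn}(\tau)=\mr{sgn}(q\tau)$, replacing $\tau$ by $q\tau$ (or $\tau q$) inside $b_T=\sum_{\tau\in\mf{Q}_T}\mr{sgn}(\tau)\tau$ recombines the signed weights correctly to give $(\mr{sgn}(q)q)\cdot b_T=b_T=b_T\cdot(\mr{sgn}(q)q)$. (I read the right-hand side ``$=q$'' in the stated lemma as a typographical slip for ``$=b_T$''.)

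The first assertion of (iii) then follows immediately:
\begin{equation}
p\cdot c_T\cdot(\mr{sgn}(q)q)=(p\cdot a_T)\bigl(b_T\cdot(\mr{sgn}(q)q)\bigr)=a_T b_T=c_T.
\end{equation}
For the uniqueness, let $y=\sum_\sigma y_\sigma\,\sigma\in\mb{C}[\mf{S}_N]$ satisfy the same two-sided invariance and compare coefficients in $p\cdot y\cdot(\mr{sgn}(q)q)=y$ to obtain the pointwise constraint
\begin{equation}
y_{p\sigma q}=\mr{sgn}(q)\,y_\sigma\qquad\text{for all }\sigma\in\mf{S}_N,\ p\in\mf{P}_T,\ q\in\mf{Q}_T.
\end{equation}
If $\sigma=pq\in\mf{P}_T\mf{Q}_T$, the decomposition is unique because $\mf{P}_T\cap\mf{Q}_T=\{e\}$ (an element preserving every row and every column must fix each cell), and the constraint pins $y_\sigma=\mr{sgn}(q)\,y_e$. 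If $\sigma\notin\mf{P}_T\mf{Q}_T$, Lemma~\ref{lemma:transposition} produces a transposition $t\in\mf{P}_T$ with $q':=\sigma^{-1}t\sigma\in\mf{Q}_T$; substituting $p=t$ and $q=q'^{-1}$ gives $t\sigma q'^{-1}=\sigma$, so $y_\sigma=\mr{sgn}(q')\,y_\sigma=-y_\sigma$ (since $q'$ is conjugate to the transposition $t$), forcing $y_\sigma=0$. A direct calculation shows that $c_T=\sum_{p\in\mf{P}_T,q\in\mf{Q}_T}\mr{sgn}(q)\,pq$ has coefficient $1$ at the identity, so $y=y_e\,c_T$.

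The main obstacle is precisely the uniqueness step: coefficient comparison alone controls only the ``big cell'' $\mf{P}_T\mf{Q}_T$, and without the tableau-combinatorial input of Lemma~\ref{lemma:transposition}, there is no mechanism to force $y_\sigma$ to vanish outside it. Once that lemma is in hand, the rest of the argument is bookkeeping.
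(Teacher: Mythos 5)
Your proof is correct and follows essentially the same route as the paper's: reduce (i), (ii), and the existence half of (iii) to bijective re-indexing of $\mf{P}_T$ and $\mf{Q}_T$, then prove uniqueness by comparing coefficients to get $y_{p\sigma q}=\mr{sgn}(q)\,y_\sigma$ and invoking Lemma~\ref{lemma:transposition} to kill $y_\sigma$ off the cell $\mf{P}_T\mf{Q}_T$. You've also correctly flagged the typo in (ii) ($=q$ should read $=b_T$) and made explicit the point, implicit in the paper, that $\mf{P}_T\cap\mf{Q}_T=\{e\}$ is what makes the assignment $y_{pq}=\mr{sgn}(q)\,y_e$ well-defined.
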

\begin{proof}
    Only the uniqueness statement in (iii) is nontrivial. Suppose $\sum_\sigma\alpha_\sigma\sigma\in\mb{C}[\mf{S}_N]$ satisfies the same condition (i.e., $p\cdot\sum_\sigma\alpha_\sigma\sigma\cdot(\mr{sgn}(q))q=\sum_\sigma\alpha_\sigma\sigma$). This implies,
    \begin{equation}\label{eq:condition invariance}\begin{split}
        &p\cdot\sum_\sigma\alpha_\sigma\sigma\cdot(\mr{sgn}(q)q)=\sum_\sigma\mr{sgn}(q)\alpha_\sigma p\sigma q\overset{!}{=}\sum_\sigma\alpha_\sigma\sigma\\
        &\to\mr{sgn}(q)\alpha_\sigma=\alpha_{p\sigma q}.
    \end{split}
    \end{equation} 
    By setting $\sigma=1$ one obtains $\mr{sgn}(q)\alpha_1=\alpha_{pq}$. It suffices for the uniqueness to show that if $g\not\in\mf{P}_T\cdot\mf{Q}_T$ then $\alpha_\sigma=0$. 
    
    By Lemma\,\ref{lemma:transposition}, for $\sigma\not\in\mr{P}_T\cdot\mf{Q}_T$ we can find $t$ such that $q=\sigma^{-1}t\sigma$ thus $t\sigma q=g$. Therefore, Eq.\,\eqref{eq:condition invariance} indicates $\alpha_\sigma=\mr{sgn}(q)\alpha_{tq\sigma}=\mr{sgn}(q)\alpha_q=-\alpha_q$, implying $\alpha_\sigma=0$.
\end{proof}

\begin{proof}[Proof of Lemma\,\ref{lemma: young symmetrizer appendix}]

    (i) Let us define the lexicographical order for tableaux. 
    
    Suppose that tableaux $T$ and $T'$ are constructed from different Young diagrams  $\lambda=(\lambda_1,\cdots,\lambda_k)$ and $\mu=(\mu_1,\cdots,\mu_{k'})$. We the define the ordering as follows,
    \begin{equation}
        T>T'\overset{\mr{def}}{\Leftrightarrow}\lambda_i-\mu_i>0 \text{ for the first $i$ s.t., }\lambda_i\not=\mu_i.
    \end{equation}

    For tableaux $T$ and $T'$ with the same diagram (i.e., just the numbering is different), we define the ordering as follows: we read the numbers from the top row to the bottom row. Suppose in a certain row (say $i$th row) the numbers written $T$ and $T'$ differ for the first time. In other words, up to the $i-1$th row the numbers (and the diagram itself) of the tableaux $T$ and $T'$ are identical. Reading the numbers of the $i$th row from left to right, we must find the box in which one of the following is the numbers in $T$ and $T'$ differ for the first time. We denote these numbers in the box as $n$ for $T$ and $n'$ for $T'$. We then define $T>T'$ if $n-n'>0$. 
    
    To show $c_Tc_{T'}=0$ for $T<T'$, it suffices to show that for $b_T\cdot a_{T'}=0$. $T<T'$ implies that one can find a pair of integers $(i,j)$ such that they lie in the same column in $T$, while in the same row in $T'$. By Lemma\,\ref{lemma:lemma for young symmetrizer}, this implies that for $t=(i,j)$ the following relations hold,
    \begin{equation}
        b_T\cdot t=-b_T,\,\,t\cdot a_T=a_T,
    \end{equation} 
    which in turn implies,
    \begin{equation}
        b_T\cdot a_{T'}=b_T\cdot(t\cdot a_{T'})=(b_T\cdot t)a_{T'}=-b_T\cdot a_{T'}.
    \end{equation}

    For $T>T'$, we show that for $\forall x\in\mb{C}\mf{S}_N$ we have $a_T\cdot x\cdot b_{T'}=0$. For $x=\sigma\in\mf{S}_N$, the statement $a_T\cdot \sigma\cdot b_{T'}=0$ is equivalent to $a_T\cdot b_{\sigma T'}=0$. If $T>\sigma T'$, one can conclude $a_T\cdot b_{\sigma T'}=0$ by the same transposition trick above. If $T<\sigma T'$, we employ ``the anti-involution trick'': $\widetilde{}:\mb{C}[\mf{S}_N]\to\mb{C}\mf{S}_N$ by linearly extending $\sigma\mapsto\widetilde{\sigma}\coloneqq \sigma^{-1}$ for $\sigma\in\mf{S}_N$. We find
    \begin{equation}
        \widetilde{a_T\cdot b_{\sigma T'}}=\widetilde{b}_{\sigma T'}\cdot\widetilde{a}_T=b_{\sigma T'}\cdot a_T.
    \end{equation}
    Therefore, the transposition trick implies $\widetilde{a_T\cdot b_{\sigma T'}}=0$. Since the anti-involution is injective, we have $a_T\cdot b_{\sigma T'}=0$.

    (ii) By (i) and (ii) of Lemma\,\ref{lemma:lemma for young symmetrizer}, we have $p\cdot(c_T\cdot c_T)\cdot q=c_T\cdot c_T$ for $\forall p\in\mf{P}_T$ and $\forall q\in\mf{Q}_T$. By (iii) of Lemma\,\ref{lemma:lemma for young symmetrizer} $c_T\cdot x\cdot c_T$ must be proportional to $c_T$, i.e., $c_T\cdot c_T=n_T c_T$ with some number $n_T$. Now we define the map $F:\mb{C}\mf{S}_N\to\mb{C}\mf{S}_N$ as a right multiplication of $n_T$, i.e., $F(a)=a\cdot c_T$. Note that we can regard $F$ as a linear map or a matrix such that the trace is properly defined. From the discussion above, we conclude $F\big|_{V_T}=n_T$ and $F\big|_{V_{T'}}=0$ for $T\not=T'$. This implies that the trace of this operator is $\mr{tr}F=n_T\dim V_T$. Another method to calculate the trace is simply to check the map $F$ itself: since $c_T$ contains $1$ only once, we have,
    \begin{equation}
        F(a)=a+\sum_{\substack{\sigma\in\mf{P}_T\cdot\mf{Q}_T\\ \sigma\not=1}}a\cdot\sigma,
    \end{equation}
    implying $\mr{tr}F=|\mf{S}_N|$. Therefore we obtain $n_T=|\mf{S}_N|/\dim V_T$. 
\end{proof}

\subsection{Proof of Theorem\,\ref{thm:GL(V)}}\label{proof:GL(V)}
The commutant of $\mf{S}_N$ is a set of operators that commute with all the permutations,
\begin{equation}
    \mr{Comm}(\mf{S}_N)=\left\{\hat{W}\in\mc{L}(W^{\otimes N});\,\hat{\sigma}\hat{W}\hat{\sigma}^{-1}=\hat{W}\,\,\text{for }\forall\sigma\in\mf{S}_N\right\}.
\end{equation}
Since $\mc{L}(V)\cong V\otimes V^*$ (the isomorphism is given by the ``vectorization'' in the context of open quantum systems or operator hydrodynamics), the operation $\hat{\sigma}\bullet\hat{\sigma}^{-1}$ can be seen as the $\mf{S}_N$ action on the module $\mc{L}(V^{\otimes N})$. The commutant then implies that,
\begin{equation}
    \hat{W}\in\mr{Comm}(\mf{S}_N)\Leftrightarrow\hat{W}=\mc{S}(\hat{W}),
\end{equation}
where $\mc{S}$ is the symmetrization defined in Eq.\,\eqref{eq:symmetrization}. Regarding $\mc{L}(V^{\otimes N})\cong(V\otimes V^*)^{\otimes N}$ as a $\mf{S}_N$ module, $\mc{S}$ corresponds to $\hat{c}_{(N)}^R$. Therefore we have
\begin{equation}
    \hat{W}=\mc{S}(\hat{W})\Leftrightarrow\hat{W}\in\mr{Sym}^N(V\otimes V^*)=\mr{Sym}^N(\mc{L}(V))=\left\langle \hat{U}^{\otimes N};\,\hat{U}\in\mc{L}(V)\right\rangle.
\end{equation}
Since $GL(V)$ is dense in $\mc{L}(V)$, $W$ is spanned by the tensor product of operators in $GL(V)\cong\mb{C}\otimes U(V)$, implying
\begin{equation}
    \hat{W}\in\left\langle\hat{U}^{\otimes N};\,\hat{U}\in U(V)\right\rangle.
\end{equation}

\subsection{Proof of Lemma\,\ref{lemma:su(V)}}\label{proof:su(V)}
(1) We expand a weight-preserving operator $\hat{O}$ by the eigenstate of $\hat{E}^{ii}(=\dyad*{i})$, diagonal elements of $\mf{su}(\mf{h}^s)$, denoted by $\ket*{\vec{n}}$, i.e., $\hat{E}^{ii}\ket*{\vec{n}}=n^i\ket*{\vec{n}}$, 
    \begin{equation}
        \hat{O}=\sum_{\{n_x\}_x}\sum_{\{n'_x\}_x}\mel{\vec{n}_1,\cdots,\vec{n}_N}{\hat{O}}{\vec{n}'_1,\cdots,\vec{n}'_N}\dyad*{\vec{n}_1,\cdots,\vec{n}_N}{\vec{n}'_1,\cdots,\vec{n}'_N}.
    \end{equation}
    The condition on $\hat{O}$ implies $\sum_{x=1}^Nn^i_x=\sum_{x=1}^Nn'^i_x$ for $\forall i=1\sim d_s$ with $d_s=\dim\mf{h}^s$. Since $\vec{n}_x$ is one of the standard basis of $\mb{R}^{d_s}$, we can decompose each term in $\hat{O}$ as a combination of a projection and a permutation: let $\sigma\in\mf{S}_N$ be the permutation that satisfies,
    \begin{equation}
        \hat{\sigma}\ket*{\vec{n}_1,\cdots,\vec{n}_N}=\ket*{\vec{n}'_1,\cdots,\vec{n}'_N}.
    \end{equation}
    Then, the term $\dyad*{\vec{n}_1,\cdots,\vec{n}_N}{\vec{n}'_1,\cdots,\vec{n}'_N}$ is expressed as,
    \begin{equation}\label{eq:basis decomp}
        \dyad*{\vec{n}_1,\cdots,\vec{n}_N}{\vec{n}'_1,\cdots,\vec{n}'_N}=\dyad*{\vec{n}_1,\cdots,\vec{n}_N}\hat{\sigma}^{-1}.
    \end{equation}
    The first projection is trivially expressed as a product of $\hat{E}^{ii}_x$, hence a polynomial of the Cartan subalgebra. This completes the proof. 

    (2) We expand an arbitrary operator $\hat{O}$ by the eigenstate of $\hat{\sigma}^3$ denoted by $\ket*{\tau}\,(\tau=\pm1)$, i.e., $\hat{\sigma}^3\ket*{\tau}=\tau\ket*{\tau}$,
    \begin{equation}
        \hat{O}=\sum_{\{\tau_x\}_x}\sum_{\{\tau'_x\}_x}\mel*{\tau_1,\cdots,\tau_N}{\hat{O}}{\tau'_1,\cdots,\tau'_N}\dyad*{\tau_1,\cdots,\tau_N}{\tau'_1,\cdots,\tau'_N}.
    \end{equation}
    For a moment we assume $\sum_x\tau_x\geq\sum_x\tau'_x$, i.e., the sequence $\{\tau_x\}_{x=1}^N$ contains more $+1$'s than $\{\tau'_x\}_{x=1}^N$. There exists a permutation $\sigma\in\mf{S}_N$ such that $\tau_x\geq\tau'_{\sigma(x)}$ for any $x$. Then, the term $\dyad*{\tau_1,\cdots,\tau_N}{\tau'_1,\cdots,\tau'_N}$ is expressed as,
    \begin{equation}
        \dyad*{\tau_1,\cdots,\tau_N}{\tau'_1,\cdots,\tau'_N}=\dyad*{\tau_1,\cdots,\tau_N}{\tau'_{\sigma(1)},\cdots,\tau'_{\sigma(N)}}\hat{\sigma}.
    \end{equation}
    The first term is composed of $\hat{\sigma}_x^3$ if $\tau_x=\tau'_{\sigma(x)}$, and $\hat{\sigma}_x^+$ if $\tau_x\geq\tau'_{\sigma(x)}$.

\subsection{Proof of Lemma\,\ref{lemma:local HA}}\label{proof:local HA}
Due to Eq.\,\eqref{eq:LR decomp}, the totally symmetric weight basis $\ket*{\psi^\Lambda_{\vec{m}}}\in\mr{Sym}^N(\mf{h}^s)$ is decomposed as
    \begin{equation}
        \ket*{\psi^\Lambda_{\vec{m}}}=\sum_{\vec{m}',\vec{m}''}c_{\vec{m}}^{\vec{m}'\vec{m}''}\ket*{\psi^X_{\vec{m}'}}_X\otimes\ket*{\psi^{X^c}_{\vec{m}''}}_{X^c},
    \end{equation}
    where $X^c$ is the complement of $X^c$ and $\ket*{\psi^X_{\vec{m}'}}_X$ and $\ket*{\psi^{X^c}_{\vec{m}''}}_{X^c}$ are the totally symmetric weight basis defined in $\mr{Sym}^{|X|}(\mf{h}^s)$ and $\mr{Sym}^{N-|X|}(\mf{h}^s)$, respectively. Since $\{\ket*{\psi^{X^c}_{\vec{m}''}}_{X^c}\}_{\vec{m}''}$ is a set of linearly independent states, the condition $\hat{O}_X\ket*{\psi_{\vec{m}}^\Lambda}=0$ is equivalent to 
    \begin{equation}
        \hat{O}_X\ket*{\psi^X_{\vec{m}'}}_X=0\,\,\text{if }c^{\vec{m}',\vec{m}''}_{\vec{m}}\not=0.
    \end{equation}
    Considering all the weight basis $\{\ket*{\psi^\Lambda_{\vec{m}}}\}_{\vec{m}}$ yields the condition $\hat{O}_X\ket*{\psi^X_{\vec{m}'}}=0$ for $\forall\vec{m}'$. We can then apply Lemma\,\ref{lemma:main} to $\bigotimes_{x\in X}\mf{h}_x$, confirming the decomposition Eq.\,\eqref{eq:annihilator X decomp}.

\subsection{Proof of Lemma\,\ref{lemma:su(2) linearity}}\label{proof:su(2) linearity}
Acting the operator on the eigenstate of $\hat{S}^3$, the linear independence implies
\begin{equation}
    \sum_{a}\xi_{a,b}\big(\hat{S}^3\big)^a\big(\hat{J}^+\big)^b=0,\,\sum_c\chi_{c,d}\big(\hat{S}^3\big)^c\big(\hat{J}^-\big)^d=0.
\end{equation}
Acting these operators again on the eigenstate of $\hat{S}^3$, we obtain the condition,
\begin{equation}\begin{split}
    &\sum_{a=1}^{N_a}\xi_{a,b}m^a=0\,\text{ for }-j+b\leq m\leq j\\
    &\sum_{c=1}^{N_b}\chi_{c,d}m^c=0\,\text{ for }-j\leq m\leq j-c.
    \end{split}
\end{equation}
These equations can be viewed as linear equations, where the corresponding matrices have elements $m^a$ and $m^c$, which have the rank equal or higher than the dimension of the corresponding vectors $\vec{\xi}_b\coloneqq(\xi_{a,b})_a$ and $\vec{\chi}_d\coloneqq(\chi_{c,d})_c$. Therefore, the only solution is $\xi_{a,b}=\chi_{c,d}=0$.

\bibliographystyle{unsrt}
\bibliography{ref.bib}
\end{document}